\newcommand*\circled[1]{\footnotesize\tikz[baseline=(char.base)]{
    \node[shape=circle,draw,inner sep=0.2pt] (char) {#1};}}
\definecolor{blue25}{rgb}{0,0,0.25}
\newcommand{\emphic}[2]{%
     \textcolor{blue25}{%
         \textbf{\emph{#1}}}%
         \index{#2}}
\newcommand{\emphi}[1]{\emphic{#1}{#1}}
\newcommand{\obslab}[1]{\label{observation:#1}}
\newcommand{\obsref}[1]{Observation~\ref{observation:#1}}
\newcommand{\assumplab}[1]{\label{assumption:#1}}
\newcommand{\assumpref}[1]{Assumption~\ref{assumption:#1}}
\providecommand{\lemlab}[1]{\label{lemma:#1}}
\providecommand{\lemref}[1]{Lemma~\ref{lemma:#1}}
\newcommand{\figlab}[1]{\label{fig:#1}}
\newcommand{\figref}[1]{Figure~\ref{fig:#1}}
\newtheorem{theorem}{Theorem}[section] 
\newtheorem{lemma}[theorem]{Lemma}
\newtheorem{defn}[theorem]{Definition}
\newtheorem{observation}[theorem]{Observation}
\newtheorem{assumption}[theorem]{Assumption}
\newcommand{\Term}[1]{\textsf{#1}}
\newcommand{\TermI}[1]{\Term{#1}\index{#1@\Term{#1}}}
\newcommand{\BFS}{\textsf{BFS}\xspace}
\newcommand{\aftermathA}{\par\vspace{-\baselineskip}}
\theoremstyle{remark}{\theorembodyfont{\rm} \newtheorem{remark}[theorem]{Remark}}
\newenvironment{proof}{\trivlist\item[]\emph{Proof}:}%
                  {\unskip\nobreak\hskip 1em plus 1fil\nobreak%
                           \rule{2mm}{2mm}
                           \parfillskip=0pt%
                           \endtrivlist}
\newcommand{\thmlab}[1]{{\label{theo:#1}}}
\newcommand{\thmref}[1]{Theorem~\ref{theo:#1}}
\newcommand{\eqlab}[1]{\label{equation:#1}}
\newcommand{\Eqref}[1]{Eq.~(\ref{equation:#1})}
\newcommand{\Eqrefpage}[1]{Eq.~(\ref{equation:#1})%
   $_\text{p\pageref{equation:#1}}$}
\newcommand{\seclab}[1]{{\label{section:#1}}}
\newcommand{\secref}[1]{Section~\ref{section:#1}}
\newcommand{\deflab}[1]{\label{defn:#1}}
\newcommand{\defref}[1]{Definition~\ref{defn:#1}}
\newcommand{\defrefpage}[1]{Definition~\ref{defn:#1}%
   $_\text{p\pageref{defn:#1}}$}
\newcommand{\MakeVBig}{\rule[-.2cm]{0cm}{0.62cm}}
\newcommand{\MakeBig}{\rule[-.2cm]{0cm}{0.4cm}}
\newcommand{\MakeSBig}{\rule[0.0cm]{0.0cm}{0.35cm}} 
\newcommand{\brc}[1]{\left\{ {#1} \right\}}
\newcommand{\sep}[1]{\,\left|\, {#1} \MakeBig\right.}
\newcommand{\pth}[2][\!]{#1\left({#2}\right)}
\newcommand{\pbrc}[2][\!\!]{#1\left[ {#2} \MakeBig \right]}
\newcommand{\ceiling}[1]{\left \lceil {#1} \right \rceil}
\newcommand{\norm}[1]{\left\lVert {#1} \right \rVert}
\newcommand{\normY}[2]{\left\lVert {#1} - {#2} \right \rVert}
\newcommand{\normP}[1]{\norm{#1}_{\oplus}}
\newcommand{\distP}[2]{\normP{{#1}- {#2}}}
\newcommand{\distPk}[3]{\mathsf{d}_{#3}\pth{#2,#1}}
\newcommand{\distNN}[2]{\mathsf{d}\pth{#1,#2}}
\newcommand{\minDistPk}[2]{r_{\mathrm{opt}}\pth{{#1},{#2}}}
\newcommand{\radius}[1]{\mathrm{radius}({#1})}
\newcommand{\abs}[1]{\left | {#1} \right |}
\newcommand{\floor}[1]{\left\lfloor {#1} \right\rfloor}
\newcommand{\nfrac}[2]{{#1}/{#2}}
\newcommand{\cardin}[1]{\left\lvert {#1} \right\rvert}
\newcommand{\order}[1]{O\pth{#1}}
\newcommand{\ordereq}[1]{\Theta \left ( {#1} \right )}
\newcommand{\eps}{{\varepsilon}}%
\newcommand{\divides}{|}
\newcommand{\DDM}[2]{F}
\newcommand{\aDDM}[2]{{F}_1}
\newcommand{\ADDM}[2]{F_2}
\newcommand{\IdxSet}{\mathcal{I}}
\newcommand{\wt}[1]{w_{#1}}
\newcommand{\wtX}[1]{w\pth{#1}}
\newcommand{\FSF}{\mathcal{F}_{\mathsf{sg}}}
\newcommand{\SarielThanks}[1]{\thanks{Department of Computer
      Science; 
      University of Illinois; 
      201 N. Goodwin Avenue;
      Urbana, IL, 61801, USA;
      {\tt sariel\atgen{}illinois.edu}; {\tt
         \url{http://www.illinois.edu/\string~sariel/}.} #1}}
\newcommand{\NirmanThanks}[1]{\thanks{Department of Computer
      Science; 
      University of Illinois; 
      201 N. Goodwin Avenue;
      Urbana, IL, 61801, USA;
      {\tt \si{nkumar5}\atgen{}illinois.edu}; {\tt
         \url{http://www.illinois.edu/\string~\si{nkumar5}/}.} #1}}
\newcommand{\atgen}{\symbol{'100}}
\providecommand{\si}[1]{#1}
\newcommand{\WSPD}{\TermI{WSPD}\xspace}
\newcommand{\ANN}{\TermI{ANN}\xspace}
\newcommand{\NNTerm}{\TermI{NN}\xspace}
\newcommand{\AVD}{\TermI{AVD}\xspace}
\newcommand{\etal}{\textit{et~al.}\xspace}
\renewcommand{\Re}{{\rm I\!\hspace{-0.025em} R}}
\newcommand{\diameter}[1]{\mathsf{diam}\pth{ {#1} }}
\newcommand{\PntSet}{\mathsf{P}}
\newcommand{\PntSetQ}{\mathsf{Q}}
\newcommand{\query}{\mathtt{q}}
\newcommand{\pnt} {\mathsf{p}}
\newcommand{\pntA}{\mathsf{u}}
\newcommand{\pntB}{\mathsf{v}}
\newcommand{\pntC}{\mathsf{s}}
\newcommand{\distS}[2]{\mathsf{d}\pth{#1,#2}}
\newcommand{\distSP}[2]{\mathsf{d}_{\oplus}\pth{#1,#2}}
\newcommand{\dist}[2]{\norm{{#1}- {#2}}}
\newcommand{\distY}[2]{\normY{#1}{#2}}
\newcommand{\distXS}[2]{\mathsf{d}_\oplus\pth{#1,#2}}
\newcommand{\ballA}{\mathsf{b}}
\newcommand{\BallSet}{\mathcal{B}}
\newcommand{\ctrA}{\mathsf{c}}
\newcommand{\radA}{\mathsf{r}}
\newcommand{\BallA}{\mathsf{B}}
\newcommand{\CellSetA}{\mathcal{X}}
\newcommand{\CellSetAVD}{\mathcal{W}}
\newcommand{\CellSetC}{\mathcal{S}}
\newcommand{\cellA}{\mathsf{\Box}}
\newcommand{\Scube}{U}
\newcommand{\UnitCube}{[0,1]^d}
\newcommand{\mapped}[1]{#1'}
\newcommand{\repres}[1]{#1_{\mathsf{rep}}}
\newcommand{\knnrepX}[1]{\mathrm{nn}_k\pth{#1}}
\newcommand{\DFS}{\Algorithm{DFS}\xspace}
\newcommand{\algQC}{\Algorithm{\si{QuorumCluster}}\xspace}
\renewcommand{\th}{th\xspace}
\newcommand{\Otilde}{\widetilde{O}}
\newcommand{\NNk}[3]{\mathsf{nn}_{#1}\pth{#2, #3}}%
\newcommand{\Grid}{\mathsf{G}\index{grid}}
\newcommand{\ball}[2]{\mathsf{ball}\pth{#1,#2}}
\newcommand{\gridSet}[2]{\boxplus\pth{#1, #2}}
\newcommand{\num}{\alpha}
\newcommand{\ds}{\displaystyle}
\newcommand{\QTree}{\EuScript{T}\index{quadtree}}
\newcommand{\repX}[1]{\mathrm{nn'}\pth{#1}}
\newcommand{\pntrepX}[1]{\mathbf{p}\pth{#1}}
\newcommand{\remove}[1]{}
\newcommand{\qtreenode}{\nu}
\newcommand{\rvector}{\mathbf{b}}
\newcommand{\adknn}[1]{\mathrm{\beta}_k\pth{#1}}
\newcommand{\approxkd}{R}
\newcommand{\lo}{\mathsf{l}}%
\newcommand{\hi}{\mathsf{h}}%
\newcommand{\levelX}[1]{\mathrm{l{v}l}\pth{#1}}
\newcommand{\ceil}[1]{\left\lceil {#1} \right\rceil}
\newcommand{\constA}{\zeta_1}
\newcommand{\constB}{c}
\newcommand{\constC}{\zeta_2}
\newcommand{\constD}{\zeta_3}
\newcommand{\ts}{\hspace{0.6pt}}
\newlength{\savedparindent}
\newcommand{\SaveIndent}{\setlength{\savedparindent}{\parindent}}
\newcommand{\RestoreIndent}{\setlength{\parindent}{\savedparindent}}
\newcommand{\numA}{z}
\newcommand{\DS}{\mathcal{D}}%
\newcommand{\Array}{\mathcal{X}}
\newlength{\ppicwd}
\newcommand{\padbox}[1]{%
   \settowidth{\ppicwd}{#1}%
   \begin{minipage}{1.01\ppicwd}%
       \smallskip%
       {#1}%
       \smallskip%
   \end{minipage}
}
\newcommand{\wx}{\tau}
\newcommand{\hdim}{d}
\newcommand{\Pk}[2][\!]{\PntSet_{\leq k}\pth[#1]{#2}}
\newcommand{\PkExt}[4][\!]{{#2}_{\leq #3}\pth[#1]{#4}}
\newcommand{\Family}{\mathcal{H}}%
\newcommand{\dk}[2][\!]{\mathsf{d}_{k}\pth[#1]{#2}}
\newcommand{\BadProb}{\varphi}
\newcommand{\prob}{\rho}
\newcommand{\RangeSpace}{\mathsf{S}}
\newcommand{\GroundSet}{\textsf{X}}%
\newcommand{\pbrcS}[1]{\left[ {#1} \right]}
\newcommand{\Dim}{\delta}%
\newcommand{\DistX}[2]{d_{#1}\pth{#2}}
\newcommand{\MeasureChar}{\overline{m}}
\newcommand{\Measure}[1]{\MeasureChar\pth{#1}}
\newcommand{\sMeasureX}[2]{\overline{s}_{#2}\pth{#1}}
\newcommand{\pbrcx}[1]{\left[ {#1} \right]}
\newcommand{\Prob}[1]{\mathop{\mathbf{Pr}}\!\pbrcx{#1}}
\newcommand{\X}{\EuScript{X}}%
\newcommand{\R}{\EuScript{R}}%
\newcommand{\range}{\mathsf{r}}%
\newcommand{\VC}{\Term{VC}\xspace}
\newcommand{\RSample}{\mathsf{R}}
\newcommand{\Error}{\EuScript{E}}
\providecommand{\Merigot}{M{\' e}rigot}%
\begin{document}

\title{Down the Rabbit Hole: Robust Proximity Search and Density
   Estimation in Sublinear Space%
   \footnote{%
      %
      Work on this paper was partially supported by NSF AF awards
      CCF-0915984 and CCF-1217462. %
      A preliminary version of this paper appeared in FOCS 2012
      \cite{hk-drhrp-12}.}%
}

\author{%
   Sariel Har-Peled%
   \SarielThanks{}%
   \and%
   Nirman Kumar%
   \NirmanThanks{}}

\date{\today}

\maketitle

\setfnsymbol{stars}

\begin{abstract}
    For a set of $n$ points in $\Re^d$, and parameters $k$ and $\eps$,
    we present a data structure that answers $(1+\eps,k)$-\ANN queries
    in logarithmic time.  Surprisingly, the space used by the
    data-structure is $\Otilde (n /k)$; that is, the space used is
    sublinear in the input size if $k$ is sufficiently large. Our
    approach provides a novel way to summarize geometric data, such
    that meaningful proximity queries on the data can be carried out
    using this sketch. Using this, we provide a sublinear space
    data-structure that can estimate the density of a point set under
    various measures, including:
    \begin{inparaenum}[(i)]
        \item sum of distances of $k$ closest points to the query
        point, and
        \item sum of squared distances of $k$ closest points to the
        query point.
    \end{inparaenum}
    Our approach generalizes to other distance based estimation of
    densities of similar flavor.

    We also study the problem of approximating some of these
    quantities when using sampling. In particular, we show that a
    sample of size $\Otilde (n /k)$ is sufficient, in some restricted
    cases, to estimate the above quantities. Remarkably, the sample
    size has only linear dependency on the dimension.
\end{abstract}

\section{Introduction}

Given a set $\PntSet$ of $n$ points in $\Re^d$, the \emphi{nearest
   neighbor} problem is to construct a data structure, such that for
any \emph{query} point $\query$ it (quickly) finds the closest point
to $\query$ in $\PntSet$.  This is an important and fundamental
problem in Computer Science \cite{sdi-nnmlv-06, c-tpfgn-08,
   ai-nohaa-08, c-nnsms-06}. Applications of nearest neighbor search
include pattern recognition \cite{fh-dandc-49, ch-nnpc-67},
self-organizing maps \cite{k-som-01}, information retrieval
\cite{swy-vsmai-75}, vector compression \cite{gg-vqsc-91},
computational statistics \cite{dw-nnmd-82}, clustering
\cite{dhs-pc-01}, data mining, learning, and many others.  If one is
interested in guaranteed performance and near linear space, there is
no known way to solve this problem efficiently (i.e., logarithmic
query time) for dimension $d > 2$.

A commonly used approach for this problem is to use Voronoi
diagrams. The \emphi{Voronoi diagram} of $\PntSet$ is the
decomposition of $\Re^d$ into interior disjoint closed cells, so that
for each cell $C$ there is a unique single point $\pnt \in \PntSet$
such that for any point $\query \in \mathrm{int}\pth{C}$ the
nearest-neighbor of $\query$ in $\PntSet$ is $\pnt$.  Thus, one can
compute the nearest neighbor of $\query$ by a point location query in
the collection of Voronoi cells.  In the plane, this approach leads to
$O(\log n)$ query time, using $O(n)$ space, and preprocessing time
$O(n \log n)$.  However, in higher dimensions, this solution leads to
algorithms with \emph{exponential} dependency on the dimension. The
complexity of a Voronoi diagram of $n$ points in $\Re^d$ is
$\ordereq{n^{\ceiling{d/2}}}$ in the worst case.  By requiring
slightly more space, Clarkson \cite{c-racpq-88} showed a
data-structure with query time $\order{\log n }$, and
$\order{n^{\ceiling{d/2} + \delta}}$ space, where $\delta > 0$ is a
prespecified constant (the $O(\cdot)$ notation here hides constants
that are exponential in the dimension). One can tradeoff the space
used and the query time \cite{am-rsps-93}. Meiser \cite{m-plah-93}
provided a data-structure with query time $\order{d^5 \log n }$ (which
has polynomial dependency on the dimension), where the space used is
$\order{ n^{ d + \delta}}$.  Therefore, even for moderate dimension,
the exact nearest neighbor data-structure uses an exorbitant amount of
storage.  It is believed that there is no efficient solution for the
nearest neighbor problem when the dimension is sufficiently large
\cite{mp-p-69}; this difficulty has been referred to as the ``curse of
dimensionality''.

\paragraph{Approximate Nearest Neighbor (\ANN).}

In light of the above, major effort has been devoted to develop approximation
algorithms for nearest neighbor search \cite{amnsw-oaann-98,
   im-anntr-98, kor-esann-00, sdi-nnmlv-06, c-tpfgn-08, ai-nohaa-08,
   \si{c-nnsms-06}, \si{him-anntr-12}}. In the 
\emphi{$(1+\eps)$-approximate nearest neighbor} problem (the \ANN problem), 
one is additionally given an  approximation parameter $\eps > 0$ and one 
is required to find a point $\pntA \in \PntSet$ such that 
$\dist{\query}{\pntA} \leq (1+\eps) \distS{\query}{\PntSet}$. In $d$ dimensional 
Euclidean space, one can answer \ANN queries, in $O(\log n + 1/\eps^{d-1})$ time
using linear space \cite{amnsw-oaann-98, h-gaa-11}. Because of the
$1/\eps^{d-1}$ in the query time, this approach is only efficient in
low dimensions. Interestingly, for this data-structure, the
approximation parameter $\eps$ need not be specified during the
construction, and one can provide it during the query.  An alternative
approach is to use Approximate Voronoi Diagrams (\AVD), introduced by
Har-Peled \cite{h-rvdnl-01}, which is a partition of space into regions
of low total complexity, with a representative point for each region,
that is an \ANN for any point in the region. In particular, Har-Peled
showed that there is such a decomposition of size $O\pth{(n
   /\eps^d)\log^2 n}$, see also \cite{him-anntr-12}. 
This allows \ANN queries to be answered in $O(
\log n)$ time.  Arya and Malamatos \cite{am-lsavd-02} showed how to
build \AVD{}s of linear complexity (i.e., $O(n/\eps^d)$). Their
construction uses \WSPD (Well Separated Pair Decomposition)
\cite{ck-dmpsa-95}. Further tradeoffs between query time and space usage for
\AVD{}s were studied by Arya \etal \cite{amm-sttan-09}.

\paragraph{$k$-nearest neighbor.}
A more general problem is the $k$-nearest neighbors problem where one
is interested in finding the $k$ points in $\PntSet$ nearest to the
query point $\query$. This is widely used in pattern recognition,
where the majority label is used to label the query point. In this paper, 
we are interested in the more restricted problem of approximating the
distance to the $k$\th nearest neighbor and finding a data point
achieving the approximation. We call this problem the 
\emphi{$(1+\eps,k)$-approximate nearest neighbor} ($(1+\eps,k)$-\ANN) 
problem. This problem is widely used for density
estimation in statistics, with $k \approx \sqrt{n}$
\cite{s-desda-86}. It is also used in meshing (with $k=2d$), or to compute
the local feature size of a point set in $\Re^d$ \cite{r-draqt-95}.
The problem also has applications in non-linear dimensionality
reduction; finding low dimensional structures in data -- more
specifically low dimensional submanifolds embedded in Euclidean
spaces. Algorithms like ISOMAP, LLE, Hessian-LLE, SDE and others,
use the $k$-nearest neighbor as a subroutine \cite{t-mmpo-98,
   bslt-gagem-00, ms-trn-94, ws-ulms-04}.

\paragraph{Density estimation.}
Given distributions $\mu_1, \ldots, \mu_k$ defined over $\Re^d$, and a
query point $\query$, we want to compute the \emph{a
   posteriori} probabilities of $\query$ being generated by one of
these distributions. This approach is used in unsupervised learning as
a way to classify a new point. Naturally, in most cases, the
distributions are given implicitly; that is, one is given a large
number of points sampled from each distribution. So, let $\mu$ be such
a distribution, and $\PntSet$ be a set of $n$ samples. To estimate the
density of $\mu$ at $\query$, a standard Monte Carlo technique is to
consider a ball $\BallA$ centered at $\query$, and count the number of
points of $\PntSet$ inside $\BallA$.  Specifically, one possible
approach that is used in practice \cite{dhs-pc-01}, is to find the
smallest ball centered at $\query$ that contains $k$ points of
$\PntSet$ and use this to estimate the density of $\mu$.  The
right value of $k$ has to be chosen carefully -- if it is too small,
then the estimate is unstable (unreliable), and if it is too large, it
either requires the set $\PntSet$ to be larger, or the estimate is too
``smoothed'' out to be useful (values of $k$ that are used in practice
are $\Otilde(\sqrt{n})$), see Duda \etal \cite{dhs-pc-01} for more
details. To do such density estimation, one needs to be able to answer,
approximate or exact, $k$-nearest neighbor queries.

Sometimes one is interested not only in the radius of this ball
centered at the query point, but also in the distribution of the
points inside this ball. The average distance of a point inside the
ball to its center, can be estimated by the sum of distances of the sample
points inside the ball to the center. Similarly, the variance of this
distance can be estimated by the sum of squared distances of the
sample points inside the ball to the center of the ball.  As mentioned, 
density estimation is used in manifold learning and surface
reconstruction. For example, Guibas \etal \cite{gmm-wkd-11} recently
used a similar density estimate to do manifold reconstruction.

\paragraph{Answering exact $k$-nearest neighbor queries.}

Given a point set $\PntSet \subseteq \Re^d$, computing the partition
of space into regions, such that the $k$ nearest neighbors do not
change, is equivalent to computing the \emphi{$k$\th order Voronoi
   diagram}.  Via standard lifting, this is equivalent to computing
the first $k$ levels in an arrangement of hyperplanes in $\Re^{d+1}$
\cite{a-vdsfg-91}. More precisely, if we are interested in the
$k$\th-nearest neighbor, we need to compute the $(k-1)$-level in this
arrangement.

The complexity of the $(\leq k)$ levels of a hyperplane arrangement in
$\Re^{d+1}$ is $\Theta( n^{\floor{(d+1)/2}} (k+1)^{\ceiling{(d+1)/2}}
)$ \cite{cs-arscg-89}. The exact complexity of the $k$\th-level is not
completely understood and achieving tight bounds on its complexity is
one of the long-standing open problems in discrete geometry
\cite{m-ldg-02}. In particular, via an averaging argument, in the
worst case, the complexity of the $k$\th-level is $\Omega
\pth{n^{\floor{(d+1)/2} } (k+1)^{\ceiling{(d+1)/2} -1}}$. As such, the
complexity of $k$\th-order Voronoi diagram is $\Omega(n k)$ in two
dimensions, and $\Omega(n^2 k)$ in three dimensions.

Thus, to provide a data-structure for answering $k$-nearest neighbor
queries exactly and quickly (i.e., logarithmic query time) in $\Re^d$,
requires computing the $k$-level of an arrangement of hyperplanes in
$\Re^{d+1}$. The space complexity of this structure is prohibitive
even in two dimensions (this also effects the preprocessing time).
Furthermore, naturally, the complexity of this structure increases as
$k$ increases.  On the other end of the spectrum one can use
partition-trees and parametric search to answer such queries using
linear space and query time (roughly) $O\pth{n^{1-1/(d+1)}}$
\cite{m-ept-92, c-opt-10}. One can get intermediate results using
standard space/time tradeoffs \cite{ae-rsir-98}.

\paragraph{Known results on approximate $k$-order Voronoi %
   diagram.}
Similar to \AVD, one can define a \AVD for the $k$-nearest neighbor.
The case $k=1$ is the regular approximate Voronoi diagram
\cite{h-rvdnl-01, am-lsavd-02, amm-sttan-09}. The case $k=n$ is the
furthest neighbor Voronoi diagram. It is not hard to see that it has a
constant size approximation (see \cite{h-caspm-99}, although it was
probably known before). Our results (see below) can be interpreted as
bridging between these two extremes.

\paragraph{Quorum clustering.}
Carmi \etal \cite{cdhks-gqsa-05} describe how to compute efficiently a
partition of the given point set $\PntSet$ into clusters of $k$ points each,
such that the clusters are compact. Specifically, this quorum clustering
computes the smallest ball containing $k$ points, removes
this cluster, and repeats, see \secref{quorum} for more details. Carmi
\etal \cite{cdhks-gqsa-05} also describe a data-structure that can
approximate the smallest cluster. The space usage of their data structure is
$\Otilde(n /k)$, but it cannot be directly used for our
purposes. Furthermore, their data-structure is for two dimensions and
it cannot be extended to higher dimensions, as it uses additive
Voronoi diagrams (which have high complexity in higher dimensions).

\section*{Our results.}

We first show, in \secref{const}, how to build a data-structure
that answers $(15,k)$-\ANN queries in time $O( \log n)$, 
where the input is a
set of $n$ points in $\Re^d$. Surprisingly, the space used by this
data-structure is $O(n/k)$.  This result is surprising as the
space usage \emphi{decreases} with $k$. This is in sharp contrast to
behavior in the exact version of the $k$\th-order Voronoi diagram
(where the complexity increases with $k$). Furthermore, for
super-constant $k$ the space used by this data-structure is
sublinear. For example, in some applications the value of $k$ used is
$\Omega\pth{ \sqrt{n}}$, and the space used in this case is a tiny
fraction of the input size.  This is a general reduction showing that
such queries can be reduced to proximity search in an appropriate
product space over $n/k$ points computed carefully.

In \secref{sec:avd}, we show how to construct an \emph{approximate}
$k$-order Voronoi diagram using space $O(\eps^{-d-1} n/k)$ (here
$\eps>0$ is an approximation quality parameter specified in
advance). Using this data-structure one can answer
$(1+\eps,k)$-\ANN queries in $O(\log n)$
time. See \thmref{ann:main} for the exact result.

\paragraph{General density queries.}
We show in \secref{applications}, as an application of our
data-structure, how to answer more robust queries.  For
example, one can approximate (in roughly the same time and space as
above) the sum of distances, or squared distances, from a query point
to its $k$ nearest neighbors. This is useful in approximating density
measures \cite{dhs-pc-01}. Surprisingly, our data-structure can be
used to estimate the sum of any function $f(\cdot)$ defined over the
$k$ nearest neighbors, that depends only on the distance of these
points from the query point. Informally, we require that $f(\cdot)$ is
monotonically increasing with distance, and it is (roughly) not
super-polynomial. For example, for any constant $p > 0$, our
data-structure requires sublinear space (i.e., $\Otilde \pth{
   n/k}^{\MakeBig}$), and given a query point $\query$, it can
$(1+\eps)$-approximate the quantity $\sum_{\pntA \in X}
\dist{\pntA}{\query}^p$, where $X$ is the set of $k$ nearest points 
in $\PntSet$ to $\query$. The query time is logarithmic.

To facilitate this, in a side result, that might be of independent
interest, we show how to perform point-location queries in $I$
compressed quadtrees of total size $m$ simultaneously in $O( \log m +
I)$ time (instead of the naive $O( I \log m)$ query time), without
asymptotically increasing the space needed.

\paragraph{If $k$ is specified with the query.}
In \secref{qtree:algo}, given a set $\PntSet$ of $n$ points in
$\Re^d$, we show how to build a data-structure, in $O(n \log n)$ time
and using $O(n)$ space, such that given a query point and parameters
$k$ and $\eps$, the data-structure can answer $(1+\eps,k)$-\ANN 
queries in $O( \log n + 1/\eps^{d-1})$
time. Unlike previous results, this is the first data-structure where
\emph{both} $k$ and $\eps$ are specified during the query
time. The data-structure of Arya \etal \cite{amm-sttas-05}
required knowing $\eps$ in advance. Using standard techniques
\cite{amnsw-oaann-98} to implement it, should lead to a simple and
practical algorithm for this problem.

\paragraph{If $k$ is not important.}
Note, that our main result can not be done using sampling. Indeed,
sampling is indifferent to the kind of geometric error we care
about. Nevertheless, a related question is how to answer a
$(1+\eps,k)$-\ANN query if one is allowed to also approximate $k$.
Inherently, this is a different question that is, at least
conceptually, easier. Indeed, the problem boils down to using sampling
carefully, and loses much of its geometric flavor. We show to solve
this variant (this seems to be new) in \secref{sampling}. Furthermore,
we study what kind of density functions can be approximated by such an
approach. Interestingly, the sample size needed to provide good
density estimates is of size $\Otilde(n/k)$ (which is sublinear in
$n$), and surprisingly, has only linear dependency on the
dimension. This compares favorably with our main result, where the
space requirement is exponential in the dimension.

\paragraph{Techniques used.}
We use quorum clustering as a starting point in our solution. In
particular, we show how it can be used to get a constant
factor approximation to the approximate $k$-nearest neighbor distance
using sublinear space.  Next, we extend this construction and combine
it with ideas used in the computation of approximate Voronoi
diagrams. This results in an algorithm for computing approximate
$k$-nearest neighbor Voronoi diagram.  To extend this data-structure
to answer general density queries, as described above, requires a
way to estimate the function $f(\cdot)$ for relatively few values
(instead of $k$ values) when answering a query. We use a coreset
construction to find out which values need to be approximated.
Overall, our work combines several known techniques in a non-trivial
fashion, together with some new ideas, to get our new results.

For the sampling results, of \secref{sampling}, we need to use some
sampling bounds that are not widely known in Computational Geometry.

\paragraph{Paper organization.}
In \secref{prelim} we formally define the problem and introduce some
basic tools, including quorum clustering, which is a key insight into
the problem at hand.  The ``generic'' constant factor algorithm is
described in \secref{const}.  We describe the construction of the
approximate $k$-order Voronoi diagram in \secref{sec:avd}.  In
\secref{applications} we describe how to construct a data-structure to
answer density queries of various types. In \secref{qtree:algo} we
present the data-structure for answering $k$-nearest neighbor queries
that does not require knowing $k$ and $\eps$ in advance.  The
approximation via sampling is presented in \secref{sampling}.
We conclude in \secref{conclusions}.

\section{Preliminaries}
\seclab{prelim}

\subsection{Problem definition}

Given a set $\PntSet$ of $n$ points in $\Re^d$ and a number $k$, $1 \leq k
\leq n$, consider a point $\query$ and order the points of $\PntSet$
by their distance from $\query$; that is,
\begin{align*}
    \dist{\query}{\pntA_1} \leq \dist{\query}{\pntA_2} \leq \dots \leq
    \dist{\query}{\pntA_n},
\end{align*}
where $\PntSet = \brc{\pntA_1, \pntA_2, \dots, \pntA_n}$. The point
$\pntA_k = \NNk{k}{\query}{\PntSet}$ is the \emphi{$k$\th-nearest
   neighbor} of $\query$ and $\distPk{\PntSet}{\query}{k} =
\dist{\query}{\pntA_k}$ is the \emphi{$k$\th-nearest neighbor
   distance}. The nearest neighbor distance (i.e., $k=1$) is
$\distS{\query}{\PntSet} = \min_{\pntA \in \PntSet}
\dist{\query}{\pntA}$.  The global minimum of
$\distPk{\PntSet}{\query}{k}$, denoted by $\minDistPk{\PntSet}{k} =
\min_{\query \in \Re^d} \distPk{\PntSet}{\query}{k}$, is the radius of
the smallest ball containing $k$ points of $\PntSet$. 

\begin{observation}%
    \obslab{1:Lipschitz}%
    For any $\pnt, \pntA \in \Re^d$, $k$ and a set $\PntSet \subseteq
    \Re^d$, we have that $\distPk{\PntSet}{\pntA}{k} \leq
    \distPk{\PntSet}{\pnt}{k} + \dist{\pnt}{\pntA}$.
\end{observation}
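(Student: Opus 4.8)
The plan is to observe that the claimed inequality is exactly the statement that the function $\query \mapsto \distPk{\PntSet}{\query}{k}$ is $1$-Lipschitz, and to prove it by a one-line ball-containment argument. First I would set $r = \distPk{\PntSet}{\pnt}{k}$ and recall, from the definition of the $k$\th-nearest neighbor distance, that the closed ball $\ball{\pnt}{r}$ contains at least $k$ points of $\PntSet$ (the $k$ closest points to $\pnt$ all lie within distance $r$).

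Next I would translate this containment from a ball centered at $\pnt$ to one centered at $\pntA$ using the triangle inequality: for every $\pntC \in \ball{\pnt}{r}$ we have $\dist{\pntC}{\pntA} \leq \dist{\pntC}{\pnt} + \dist{\pnt}{\pntA} \leq r + \dist{\pnt}{\pntA}$, so $\ball{\pnt}{r} \subseteq \ball{\pntA}{r + \dist{\pnt}{\pntA}}$. Hence the ball $\ball{\pntA}{r + \dist{\pnt}{\pntA}}$ also contains at least $k$ points of $\PntSet$.

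Finally, since $\distPk{\PntSet}{\pntA}{k}$ is the distance from $\pntA$ to its $k$\th closest point in $\PntSet$ — equivalently, the radius of the smallest ball centered at $\pntA$ enclosing $k$ points of $\PntSet$ — and we have just exhibited such a ball of radius $r + \dist{\pnt}{\pntA}$, it follows that $\distPk{\PntSet}{\pntA}{k} \leq r + \dist{\pnt}{\pntA} = \distPk{\PntSet}{\pnt}{k} + \dist{\pnt}{\pntA}$, as desired.

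There is no genuine obstacle here; the argument is elementary. The only point requiring a little care is to reconcile the formal definition of $\distPk{\PntSet}{\cdot}{k}$ given in the text (order the points by distance, take the $k$\th one) with the ``smallest enclosing ball containing $k$ points'' phrasing used in the last step — both describe the same quantity, and the fact that the number of points enclosed by $\ball{\pntA}{\cdot}$ is monotone in the radius makes passing between the two formulations immediate.
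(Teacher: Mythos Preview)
Your argument is correct and is exactly the natural one-line triangle-inequality/ball-containment proof one would expect. The paper itself does not supply a proof of this observation at all --- it merely restates it as ``the function $\distPk{\PntSet}{\query}{k}$ is $1$-Lipschitz'' --- so your write-up is, if anything, more detailed than what the authors provide.
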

Namely, the function $\distPk{\PntSet}{\query}{k}$ is
$1$-Lipschitz.
The problem at hand is to preprocess $\PntSet$ such that given a query
point $\query$ one can compute $\pntA_k$ quickly.  The standard
\emphi{nearest neighbor problem} is this problem for $k = 1$. In the
\emphi{$(1+\eps,k)$-approximate nearest neighbor} ($(1+\eps,k)$-\ANN) problem, given
$\query$, $k$ and $\eps>0$, one wants to find a point $\pntA \in
\PntSet$, such that $(1-\eps) \dist{\query}{\pntA_k} \leq
\dist{\query}{\pntA} \leq (1+\eps)\dist{\query}{\pntA_k}$.

\subsection{Basic tools}

For a real positive number $\num$ and a point $\pnt = (\pnt_1, \ldots,
\pnt_d) \in \Re^d$, define $\Grid_\num(\pnt)$ to be the grid point
$\pth[]{\floor{\pnt_1/\num} \num, \ldots, \floor{\pnt_d/\num}
   \num}$. We call $\num$ the \emphi{width} or \emphi{sidelength} of
the \emphi{grid} $\Grid_\num$. Observe that the mapping $\Grid_\num$
partitions $\Re^d$ into cubic regions, which we call grid
\emphic{cells}{cell}.

\begin{defn}
    A cube is a \emphi{canonical cube} if it is contained inside the
    unit cube $[0,1]^d$, it is a cell in a grid $\Grid_r$, and $r$ is
    a power of two (i.e., it might correspond to a node in a quadtree
    having $[0,1]^d$ as its root cell).  We will refer to such a grid
    $\Grid_r$ as a \emphic{canonical grid}{canonical!grid}. Note, that
    all the cells corresponding to nodes of a compressed quadtree are
    canonical.
    \index{grid!canonical} %
    \deflab{canonical:grid}
\end{defn}

For a ball $\ballA$ of radius $r$, and a parameter $\psi$, let
$\gridSet{\ballA}{\psi}$ denote the set of all the canonical cells
intersecting $\ballA$, when considering the canonical grid with
sidelength $2^{\floor{\log_2 \psi }}$. Clearly,
$\cardin{\gridSet{\ballA}{\psi}} = O\pth{ (r/\psi)^d}$.

A ball $\ballA$ of radius $r$ in $\Re^d$, centered at a point $\pnt$,
can be interpreted as a point in $\Re^{d+1}$, denoted by
$\mapped{\ballA} = \pth[]{ \pnt, r}$. For a regular point $\pnt \in
\Re^d$, its corresponding image under this transformation is the
\emphi{mapped} point $\mapped{\pnt} = \pth[]{\pnt, 0 } \in \Re^{d+1}$.

Given point $\pntA = \pth{\pntA_1,\dots,\pntA_d} \in \Re^d$ we will
denote its Euclidean norm by $\norm{\pntA}$.  We will consider a point
$\pntA = \pth{\pntA_1, \pntA_2,\dots, \pntA_{d+1}} \in \Re^{d+1}$ to
be in the product metric of $\Re^d \times \Re$ and endowed with the
product metric norm
\begin{align*}
    \normP{\pntA} = \sqrt{\pntA_1^2 + \dots + \pntA_d^2} +
    \abs{\pntA_{d+1}}.
\end{align*}
It can be verified that the above defines a norm and the following
holds for it.
\begin{lemma}%
    \lemlab{p:norm}%
    For any $\pntA \in \Re^{d+1}$ we have $\norm{\pntA} \leq
    \normP{\pntA} \leq \sqrt{2} \norm{\pntA}$.
\end{lemma}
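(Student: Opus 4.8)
The plan is to reduce the inequality to the two-dimensional statement $\sqrt{a^2+b^2}\le a+b\le \sqrt2\,\sqrt{a^2+b^2}$ for nonnegative reals $a,b$, which is entirely elementary. Concretely, for $\pntA=\pth{\pntA_1,\dots,\pntA_{d+1}}\in\Re^{d+1}$ I would set $a=\sqrt{\pntA_1^2+\dots+\pntA_d^2}$ and $b=\abs{\pntA_{d+1}}$, so that by definition $\normP{\pntA}=a+b$ and $\norm{\pntA}=\sqrt{a^2+b^2}$, and note $a,b\ge 0$.

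For the lower bound $\norm{\pntA}\le\normP{\pntA}$, I would square and observe
\begin{align*}
    \pth{a+b}^2 = a^2 + 2ab + b^2 \ge a^2 + b^2,
\end{align*}
since $2ab\ge 0$; taking square roots (both sides nonnegative) gives $\sqrt{a^2+b^2}\le a+b$, i.e. $\norm{\pntA}\le\normP{\pntA}$.

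For the upper bound $\normP{\pntA}\le\sqrt2\,\norm{\pntA}$, I would again square and use $\pth{a-b}^2\ge 0$, which rearranges to $2ab\le a^2+b^2$, hence
\begin{align*}
    \pth{a+b}^2 = a^2 + 2ab + b^2 \le 2a^2 + 2b^2 = 2\pth{a^2+b^2};
\end{align*}
taking square roots yields $a+b\le\sqrt2\,\sqrt{a^2+b^2}$, i.e. $\normP{\pntA}\le\sqrt2\,\norm{\pntA}$. (Alternatively one could cite Cauchy--Schwarz applied to the vectors $(a,b)$ and $(1,1)$, but the direct computation is cleaner here.)

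There is essentially no obstacle: the only thing worth stating carefully is that $a$ and $b$ are nonnegative, which is what makes both squarings reversible and makes the cross term $2ab$ have the sign we need. The claim that $\normP{\cdot}$ is actually a norm, alluded to just before the lemma, is a separate routine check (homogeneity is immediate, and the triangle inequality follows from the triangle inequalities of the Euclidean norm on $\Re^d$ and the absolute value on $\Re$); I would relegate that to a one-line remark rather than include it in the proof of this lemma, since only the two displayed inequalities are needed in the sequel.
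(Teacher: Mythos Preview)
Your proof is correct. The paper does not supply a proof of this lemma at all; it simply states that ``it can be verified that the above defines a norm and the following holds for it,'' leaving the verification to the reader. Your reduction to the two-variable inequality $\sqrt{a^2+b^2}\le a+b\le\sqrt{2}\,\sqrt{a^2+b^2}$ for $a,b\ge 0$, with $a=\sqrt{\pntA_1^2+\cdots+\pntA_d^2}$ and $b=\abs{\pntA_{d+1}}$, is exactly the intended elementary check, and your handling of the two directions (via $2ab\ge 0$ and $(a-b)^2\ge 0$ respectively) is clean and complete.
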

The distance of a point to a set under the $\normP{\cdot}$ norm is
denoted by $\distXS{\pntA}{\PntSet}$.

\begin{assumption}%
    \assumplab{k:div:n}%
    We assume that $k$ divides $n$; otherwise one can easily add fake
    points as necessary at infinity.
\end{assumption}

\begin{assumption}%
    \assumplab{points:in:cube}%
    We also assume that the point set $\PntSet$
    is contained in $[1/2,1/2 + 1/n]^d$, where
    $n=\cardin{\PntSet}$. This can be achieved by scaling and
    translation (which does not affect the distance ordering).
    Moreover, we assume the queries are restricted to the unit cube
    $\Scube = \UnitCube$.
\end{assumption}

\subsubsection{Quorum clustering}
\seclab{quorum}

\begin{figure*}[t]
    \centerline{
       \includegraphics{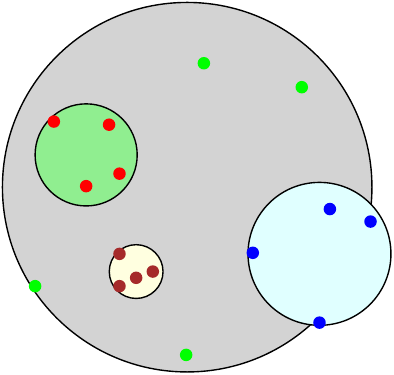}
    }
    \caption{Quorum clustering for $n = 16$ and $k = 4$.}
    \figlab{fig:quorum}
\end{figure*}

Given a set $\PntSet$ of $n$ points in $\Re^d$, and a number $k \geq
1$, where $k \divides n$, we start with the smallest ball $\ballA_1$
that contains $k$ points of $\PntSet$, that is $\radius{\ballA_1} =
\minDistPk{\PntSet}{k}$. Let $\PntSet_1 = \PntSet \cap
\ballA_1$. Continue on the set of points $\PntSet \setminus \PntSet_1$
by finding the smallest ball that contains $k$ points of $\PntSet
\setminus \PntSet_1$, and so on.  Let
$\ballA_1,\ballA_2,\dots,\ballA_{n/k}$ denote the set of balls
computed by this algorithm and let $\PntSet_i = \pth{\PntSet \setminus
   \pth{\PntSet_1 \cup \dots \cup \PntSet_{i-1}}} \cap \ballA_i$. See
\figref{fig:quorum} for an example. Let $\ctrA_i$ and $\radA_i$ denote
the center and radius respectively, of $\ballA_i$, for $i=1,\ldots,
n/k$.  A slight symbolic perturbation can guarantee that
\begin{inparaenum}[(i)]
    \item each ball $\ballA_i$ contains exactly $k$ points of
    $\PntSet$, and
    \item all the centers $\ctrA_1, \ctrA_2, \dots, \ctrA_k$, are
    distinct points.
\end{inparaenum}
Observe that $\radA_1 \leq \radA_2 \leq \dots \leq
\radA_{n/k} \leq \diameter{\PntSet}$.  Such a partition of $\PntSet$
into $n/k$ clusters is a \emphi{quorum clustering}. An
algorithm for computing it is provided in Carmi \etal
\cite{cdhks-gqsa-05}. We assume we have a black-box procedure
\algQC{}$(\PntSet,k)$ \cite{cdhks-gqsa-05} that computes an
\emphi{approximate} quorum clustering. It returns a list of balls,
$(\ctrA_1, \radA_1), \ldots, ({\ctrA_{n/k}, \radA_{n/k}})$.  The
algorithm of Carmi \etal \cite{cdhks-gqsa-05} computes such a sequence
of balls, where each ball is a $2$-approximation to the smallest
ball containing $k$ points of the remaining points. The following is
an improvement over the result of Carmi \etal \cite{cdhks-gqsa-05}.

\begin{lemma}%
    \lemlab{q:clustering}%
    Given a set $\PntSet$ of $n$ points in $\Re^d$ and parameter $k$,
    where $k \divides n$,
    one can compute, in $O(n \log n)$ time, a sequence of $n/k$ balls,
    such that, for all $i, 1 \leq i \leq n/k$, we have \smallskip
    \begin{compactenum}[\rm \quad(A)]
        \item For every ball $\ts\pth[]{\ctrA_i, \radA_i}$ there is an
        associated subset $\PntSet_i$ of $k$ points of
        $\PntSetQ_i = \PntSet\setminus \pth[]{ \PntSet_i \cup \ldots \cup
           \PntSet_{i-1}}$, that it covers.
        \item The ball $\ts \pth[]{\ctrA_i, \radA_i}$ is a
        $2$-approximation to the smallest ball covering $k$ points in
        $\PntSetQ_i$; that is, $\radA_i/2 \leq
        \minDistPk{\PntSetQ_i}{k} \leq \radA_i$.
    \end{compactenum}
\end{lemma}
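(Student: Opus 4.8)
The plan is to improve the running time of the Carmi \etal quorum-clustering algorithm from the naive repeated application of a smallest-enclosing-$k$-ball routine to a single $O(n\log n)$-time computation, while keeping the $2$-approximation guarantee. The naive approach computes $\ballA_1$ on $n$ points, deletes $k$ of them, recomputes on $n-k$ points, and so on; even with a linear-time approximate smallest-$k$-ball primitive this costs $\sum_{i} O(n - (i-1)k) = O(n^2/k)$, which is too slow. So the heart of the argument is an amortized/hierarchical scheme.

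First I would recall (or re-prove) the basic primitive: given a point set $\PntSetQ$ of size $m$, one can compute in $O(m)$ time a ball that is a $2$-approximation to the smallest ball covering $k$ points of $\PntSetQ$ — this follows from a standard grid/net argument (guess the scale by a constant-factor search, bucket points into a grid of that sidelength, and find a grid cell whose neighborhood contains $k$ points), essentially the primitive already used by Carmi \etal. Given this, the point (A) of the lemma is immediate: we maintain the residual set $\PntSetQ_i$, invoke the primitive to get $(\ctrA_i,\radA_i)$, then let $\PntSet_i$ be any $k$ points of $\PntSetQ_i$ inside $\ballA_i$ (after the symbolic perturbation ensuring exactly $k$ points lie in the ball), and set $\PntSetQ_{i+1} = \PntSetQ_i \setminus \PntSet_i$. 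Point (B) is exactly the guarantee of the primitive applied to $\PntSetQ_i$.

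The real work is getting the total time down to $O(n\log n)$. The plan is to preprocess $\PntSet$ into a single compressed quadtree (or a hierarchical grid structure) in $O(n\log n)$ time, and then extract the quorum balls in increasing-radius order by sweeping the quadtree levels from fine to coarse: maintain, for each relevant grid resolution, the counts of surviving points per cell; the smallest cluster is found near the finest level that still has a cell with $k$ survivors in a bounded neighborhood. When a cluster is removed, update the counts of the $O(1)$ affected ancestor cells for each of the $k$ deleted points, at cost $O(\log n)$ per point, hence $O(k\log n)$ per cluster and $O((n/k)\cdot k\log n) = O(n\log n)$ total for all deletions. The monotonicity $\radA_1 \le \dots \le \radA_{n/k}$ (noted in the excerpt) is what lets the sweep be monotone — once a resolution is exhausted we never return to it — so the level pointer advances only $O(\log n + n/k)$ times overall. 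I would also need to argue that restricting the search to canonical grids loses only a constant factor in the ball radius (Observation/Lemma-type reasoning about $\gridSet{\cdot}{\cdot}$), which folds into the $2$-approximation constant.

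The main obstacle I anticipate is the bookkeeping for the level-sweep: ensuring that when we look for a cell containing $k$ survivors we examine the correct bounded neighborhood (a constant number of cells at that resolution, so that a ball of the corresponding radius is captured), that ``survivor'' counts are consistently decremented as earlier clusters are removed, and that the amortized argument genuinely charges $O(\log n)$ to each of the $n$ points exactly once for its deletion plus $O(\log n + n/k)$ total for level advancement. The approximation constant $2$ should be robust to all these discretizations provided we are slightly generous in the neighborhood size; if a clean factor of $2$ is hard to maintain one can prove it for a larger constant and then reduce by refining the scale-guessing step, but I expect the factor $2$ to survive with the standard trick of searching scales at powers of two and taking the neighborhood radius large enough.
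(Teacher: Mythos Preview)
Your plan is substantially more elaborate than what the paper actually does, and it misses the paper's very short argument. The paper does \emph{not} redesign the quorum-clustering algorithm. It observes that Carmi \etal already have an algorithm achieving the stated approximation guarantee in time $O(n\log^d n)$, and that this running time is dominated by $O(n)$ queries to a dynamic data structure that reports the number of surviving points inside a query canonical cube. Carmi \etal use orthogonal range trees, paying $O(\log^d n)$ per query. The paper simply swaps that data structure: store the points under the quadtree linear ordering (the $Z$-order) in any balanced search structure; then a canonical-cube count is an interval-count query, answered in $O(\log n)$ time, and deletions are $O(\log n)$ as well. Plugging this into Carmi \etal's algorithm yields $O(n\log n)$ total, and properties (A) and (B) are inherited verbatim from their analysis.

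By contrast, you are attempting to rebuild the algorithm from scratch via a fine-to-coarse level sweep with amortized deletion charging. That could be made to work, but it is considerably more delicate than necessary: you would have to justify the linear-time $2$-approximate smallest-$k$-ball primitive, control the approximation constant through the grid discretization, and argue monotonicity of the sweep (note that only the \emph{optimal} radii $\minDistPk{\PntSetQ_i}{k}$ are guaranteed monotone; the $2$-approximate radii $\radA_i$ need not be, so the ``never return to a finer level'' claim needs care). None of this is needed once you notice that the existing algorithm's bottleneck is a single replaceable data structure. The takeaway: before redesigning, check whether the prior algorithm's running time is governed by a black-box subroutine that admits a faster drop-in replacement.
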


\begin{proof}
    The guarantee of Carmi \etal is slightly worse -- their algorithm
    running time is $O(n \log^{d} n)$. They use a dynamic
    data-structure for answering $O(n)$ queries, that report how many
    points are inside a query canonical square. Since they use
    orthogonal range trees this requires $O( \log^d n)$ time per
    query. Instead, one can use dynamic quadtrees. More formally, we
    store the points using linear ordering \cite{h-gaa-11}, using any
    balanced data-structure. A query to decide the number of points
    inside a canonical node corresponds to an interval query (i.e.,
    reporting the number of elements that are inside a query interval),
    and can be performed in $O( \log n)$ time. Plugging this
    data-structure into the algorithm of Carmi \etal
    \cite{cdhks-gqsa-05} gives the desired result.
\end{proof}

\section{A $(15,k)$-\ANN in sublinear space}
\seclab{const}

\begin{lemma} %
    \lemlab{5:approx}%
    Let $\PntSet$ be a set of $n$ points in $\Re^d$, $k\geq 1$ be a
    number such that $k \divides n$, $\ts\pth[]{\ctrA_1,\radA_1},
    \ldots$, $\pth{\ctrA_{n/k},\radA_{n/k}}$, be the list of balls
    returned by \algQC{}$(\PntSet,k)$, and let $ x = \min_{i =
       1,\dots,n/k}$ $\pth{\dist{\query}{\ctrA_i} + \radA_i}$. We have
    that $x/5 \leq \distPk{\PntSet}{\query}{k} \leq x$.
\end{lemma}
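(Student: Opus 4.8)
The plan is to prove the two inequalities $\distPk{\PntSet}{\query}{k} \leq x$ and $x \leq 5\,\distPk{\PntSet}{\query}{k}$ separately. For the upper bound $\distPk{\PntSet}{\query}{k} \leq x$, fix the index $i$ achieving the minimum in the definition of $x$, so $x = \dist{\query}{\ctrA_i} + \radA_i$. By \lemref{q:clustering}(A), the ball $\pth{\ctrA_i,\radA_i}$ covers an associated set $\PntSet_i$ of $k$ points of $\PntSet$. Every such point is within distance $\radA_i$ of $\ctrA_i$, hence within distance $\dist{\query}{\ctrA_i} + \radA_i = x$ of $\query$ by the triangle inequality. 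Thus $\query$ has (at least) $k$ points of $\PntSet$ within distance $x$, which immediately gives $\distPk{\PntSet}{\query}{k} \leq x$.

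For the lower bound, write $r = \distPk{\PntSet}{\query}{k}$ and let $B$ be the ball of radius $r$ centered at $\query$, so $B$ contains at least $k$ points of $\PntSet$. The quorum clustering partitions $\PntSet$ into the $n/k$ sets $\PntSet_1,\dots,\PntSet_{n/k}$, each of size $k$; since $B$ holds $\geq k$ points, it must intersect at least \emph{two} of these clusters, or else it is contained in a region holding only $k$ points and we can still argue via that one cluster — more carefully, let $i$ be the \emph{smallest} index such that $\PntSet_i \cap B \neq \emptyset$, and pick $\pnt \in \PntSet_i \cap B$. The key step is to bound $\radA_i$. Since $B$ contains $k$ points of $\PntSet$ but $\PntSet_1 \cup \dots \cup \PntSet_{i-1}$ is disjoint from $B$ by minimality of $i$, all $k$ points of $B \cap \PntSet$ lie in $\PntSetQ_i = \PntSet \setminus (\PntSet_1 \cup \dots \cup \PntSet_{i-1})$; hence $B$ is a ball of radius $r$ containing $k$ points of $\PntSetQ_i$, so $\minDistPk{\PntSetQ_i}{k} \leq r$, and by \lemref{q:clustering}(B) we get $\radA_i \leq 2\minDistPk{\PntSetQ_i}{k} \leq 2r$. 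Now estimate:
\begin{align*}
    x \leq \dist{\query}{\ctrA_i} + \radA_i
      \leq \pth{\dist{\query}{\pnt} + \dist{\pnt}{\ctrA_i}} + \radA_i
      \leq \pth{r + \radA_i} + \radA_i
      = r + 2\radA_i \leq r + 4r = 5r,
\end{align*}
using $\dist{\query}{\pnt} \leq r$ (as $\pnt \in B$) and $\dist{\pnt}{\ctrA_i} \leq \radA_i$ (as $\pnt \in \PntSet_i$, which is covered by $\pth{\ctrA_i,\radA_i}$). This yields $x \leq 5\,\distPk{\PntSet}{\query}{k}$, completing the proof.

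The main obstacle I anticipate is the choice of the index $i$ in the lower bound: one needs the cluster containing a point of $B$ to have a \emph{small} radius, and the clean way to guarantee this is to take the smallest index meeting $B$, so that all of $B$'s points are still "available" in $\PntSetQ_i$ and the $2$-approximation guarantee of \lemref{q:clustering}(B) can be invoked against the ball $B$ itself. A secondary subtlety is making sure the monotonicity $\radA_1 \leq \dots \leq \radA_{n/k}$ is not actually needed here — the argument above uses only properties (A) and (B) at the single index $i$ — but it is worth double-checking that no edge case (e.g. $B$ meeting only $\PntSet_{n/k}$, or $\query$ far outside the point set) breaks the triangle-inequality chain; none should, since every step is a direct application of the triangle inequality and the two clustering guarantees.
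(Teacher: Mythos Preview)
Your proof is correct and follows essentially the same approach as the paper: for the upper bound you use that each quorum ball contains $k$ points, and for the lower bound you pick the first cluster $\PntSet_i$ meeting $B=\ball{\query}{r}$, use that all of $B\cap\PntSet$ then lies in $\PntSetQ_i$ to get $\radA_i\leq 2r$, and finish with a triangle-inequality chain giving $x\leq r+2\radA_i\leq 5r$. The paper's write-up is nearly identical, phrasing the bound on $\dist{\query}{\ctrA_i}$ as $\dist{\query}{\ctrA_i}-\radA_i\leq\distNN{\query}{\ball{\ctrA_i}{\radA_i}}\leq r$ rather than routing through an explicit point $\pnt$, but this is the same inequality; your aside about ``at least two clusters'' is unnecessary (as you note) and can simply be dropped.
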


\begin{proof}
    For any $i = 1, \ldots, n/k,$ we have $\ballA_i =
    \ball{\ctrA_i}{\radA_i} \subseteq
    \ball{\query}{\dist{\query}{\ctrA_i} + \radA_i}$.  Since
    $\cardin{\ballA_i \cap \PntSet} \geq k$, we have
    $\distPk{\PntSet}{\query}{k} \leq \dist{\query}{\ctrA_i} +
    \radA_i$.  As such, $\distPk{\PntSet}{\query}{k} \leq x = \ds
    \min_{i = 1, \dots, n/k} \pth{\dist{\query}{\ctrA_i} + \radA_i}$.
    
    For the other direction, let $i$ be the first index such that
    $\ball{\query}{\distPk{\PntSet}{\query}{k}}$ contains a point of
    $\PntSet_i$, where $\PntSet_i$ is the set of $k$ points of
    $\PntSet$ assigned to $\ballA_i$. Then, we have
    \begin{align*}
        \radA_i/2 %
        \leq %
        \minDistPk{\PntSetQ_i}{k}%
        \leq %
        \distPk{\PntSet}{\query}{k},%
    \end{align*}
    where $\PntSetQ_i = \PntSet \setminus (\PntSet_1 \cup \dots \cup
    \PntSet_{i-1})$, $\radA_i$ is a $2$-approximation to
    $\minDistPk{\PntSetQ_i}{k}$, and the last inequality follows
    as $X = \ball{\query}{\distPk{\PntSet}{\query}{k}} \cap \PntSet$
    is a set of size $k$ and $X \subseteq \PntSetQ_i$.
    Then,
    \begin{align*}
        \dist{\query}{\ctrA_i} - \radA_i%
        \leq%
        \distNN{\query}{\ball{\ctrA_i}{\radA_i}}%
        \leq%
        \distPk{\PntSet}{\query}{k} %
        ,
    \end{align*}
    as the distance from $\query$ to any $\pntA \in
    \ball{\ctrA_i}{\radA_i}$ satisfies $\dist{\query}{\pntA} \geq
    \dist{\query}{\ctrA_i} - \radA_i$ by the triangle inequality.
    Putting the above together, we get
    \begin{align*}
        x%
        = \min_{j=1,\dots,n/k} \pth{\dist{\query}{\ctrA_j} + \radA_j}%
        \leq %
        \dist{\query}{\ctrA_i} + \radA_i%
        = %
        \pth[]{\dist{\query}{\ctrA_i} - \radA_i} + 2\radA_i%
        \leq%
        5 \distPk{\PntSet}{\query}{k}.
    \end{align*}
    \aftermathA 
    \aftermathA 
\end{proof}

\begin{theorem}%
    \thmlab{const:main}%
    Given a set $\PntSet$ of $n$ points in $\Re^d$, and a number $k
    \geq 1$ such that $k \divides n$, one can build a data-structure,
    in $\order{n \log n}$ time, that uses $\order{n/k}$ space, such that
    given any query point $\query \in \Re^d$, one can compute, in
    $\order{\log \pth {n/k}}$ time, a $15$-approximation to
    $\distPk{\PntSet}{\query}{k}$.
\end{theorem}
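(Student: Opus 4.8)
The plan is to combine \lemref{5:approx} with a standard approximate nearest-neighbor data-structure applied to the mapped point set in $\Re^{d+1}$. First I would run \algQC$(\PntSet,k)$ using the improved algorithm of \lemref{q:clustering}; this takes $O(n\log n)$ time and produces the list of $n/k$ balls $\pth{\ctrA_1,\radA_1},\ldots,\pth{\ctrA_{n/k},\radA_{n/k}}$. By \lemref{5:approx}, the quantity $x=\min_i\pth{\dist{\query}{\ctrA_i}+\radA_i}$ satisfies $x/5\le\distPk{\PntSet}{\query}{k}\le x$, so it suffices to build a structure that computes a $3$-approximation to $x$ in $O(\log(n/k))$ time, since $3\cdot 5=15$.

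The key observation is that $\dist{\query}{\ctrA_i}+\radA_i$ is exactly $\normP{\mapped{\query}-\mapped{\ballA_i}}$, the product-norm distance in $\Re^{d+1}$ between the mapped query $\mapped{\query}=(\query,0)$ and the mapped ball $\mapped{\ballA_i}=(\ctrA_i,\radA_i)$: the first $d$ coordinates contribute $\dist{\query}{\ctrA_i}$ and the last coordinate contributes $\abs{0-\radA_i}=\radA_i$. Hence $x=\distXS{\mapped{\query}}{\mapped{\BallSet}}$, where $\mapped{\BallSet}=\brc{\mapped{\ballA_1},\ldots,\mapped{\ballA_{n/k}}}$ is a set of $n/k$ points in $\Re^{d+1}$. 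So computing a constant-factor approximation to $x$ is just a constant-factor approximate nearest-neighbor query over $n/k$ points. By \lemref{p:norm}, the $\normP{\cdot}$ norm and the Euclidean norm on $\Re^{d+1}$ are within a factor of $\sqrt 2$ of each other, so a Euclidean \ANN structure on $\mapped{\BallSet}$ suffices: I would build, in $O((n/k)\log(n/k))$ time and $O(n/k)$ space, a data-structure answering $O(1)$-approximate (say $(1+\eps_0)$-approximate for a fixed small $\eps_0$, or even just $2$-approximate) Euclidean nearest-neighbor queries over $n/k$ points in $\Re^{d+1}$ in $O(\log(n/k))$ time; such structures exist (e.g.\ \cite{amnsw-oaann-98,h-gaa-11}, with the exponential-in-$d$ factors absorbed into the $O(\cdot)$). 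A query then maps $\query$ to $\mapped{\query}$, queries this structure, and returns the resulting approximate value scaled appropriately; the composition of the $\sqrt2$ norm distortion and the constant approximation factor of the \ANN structure can be kept below $3$ (with room to spare), giving overall a $15$-approximation to $\distPk{\PntSet}{\query}{k}$ after multiplying through by the factor from \lemref{5:approx}.

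I should also note that the data-structure can return not just the estimated distance but a point of $\PntSet$ realizing the approximation: the \ANN query identifies some ball $\ballA_i$ with $\dist{\query}{\ctrA_i}+\radA_i$ within the claimed factor of $\distPk{\PntSet}{\query}{k}$, and any of the $k$ points of $\PntSet_i$ (say the one farthest from $\query$, or any stored representative) lies within distance $\dist{\query}{\ctrA_i}+\radA_i$ of $\query$, hence is a $15$-approximate $k$\th-nearest neighbor witness. The bookkeeping of time and space bounds is routine: preprocessing is dominated by the $O(n\log n)$ quorum clustering step; space is $O(n/k)$ for storing the balls and the auxiliary \ANN structure (plus the constant-per-cluster representative data); query time is $O(\log(n/k))$. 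The main thing to be careful about — really the only subtle point — is the chain of constants: checking that the product $5\times\sqrt2\times(\text{ANN factor})$ stays at most $15$, which forces the underlying \ANN structure to be better than roughly a $2.12$-approximation; choosing $\eps_0$ small (or using a $2$-approximate structure) handles this with margin.
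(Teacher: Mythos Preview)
Your proposal is correct and essentially identical to the paper's proof: both build the mapped set $\mapped{\BallSet}\subseteq\Re^{d+1}$, use a Euclidean $2$-\ANN structure on it, and combine the $\sqrt{2}$ distortion of \lemref{p:norm} with the factor $5$ of \lemref{5:approx} to get $5\cdot 2\sqrt{2}=10\sqrt{2}<15$. The paper returns $\distP{\mapped{\query}}{\mapped{\ballA_j}}$ for the reported $2$-\ANN $\mapped{\ballA_j}$ and carries out exactly the constant chain you outline.
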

\begin{proof}
    We invoke \algQC{}$(\PntSet,k)$ to compute the clusters
    $\pth[]{\ctrA_i, \radA_i}$, for $i=1,\ldots, n/k$. For $i=1,
    \ldots, n/k$, let $\mapped{\ballA_i} = \pth{\ctrA_i,\radA_i} \in
    \Re^{d+1}$. We preprocess the set $\mapped{\BallSet} =
    \brc{\mapped{\ballA_1}, \ldots, \mapped{\ballA_{n/k}}}$ for
    $2$-\ANN queries (in $\Re^{d+1}$ under the Euclidean norm). 
    The preprocessing time for the
    \ANN data structure is $\order{\pth{n/k} \log \pth{n/k}}$, 
    the space used is $\order{n/k}$ and the query time is 
    $\order{\log \pth{n/k}}$ \cite{h-gaa-11}.
    
    Given a query point $\query \in \Re^d$ the algorithm computes a
    $2$-ANN to $\mapped{\query} = \pth[]{\query, 0}$, denoted by
    $\mapped{ \ballA_j}$, and returns
    $\distP{\mapped{\query}}{\mapped{\ballA_j}}$ as the approximate
    distance.
    
    Observe that, for any $i$, we have $\dist{\mapped{\query}}{
       \mapped{\ballA_i}} \leq
    \distP{\mapped{\query}}{\mapped{\ballA_i}} \leq \sqrt{2}
    \dist{\mapped{\query}}{ \mapped{\ballA_i} }$ by \lemref{p:norm}.
    As such, the returned distance to $\mapped{\ballA_j}$ is a
    $2$-approximation to $
    \distS{\mapped{\query}}{\mapped{\BallSet}}$; that is,
    \begin{align*}
        {\distSP{\mapped{\query}}{\mapped{\BallSet}}}%
        \leq %
        \distP{\mapped{\query}}{\mapped{\ballA_j}}%
        \leq%
        \sqrt{2} \dist{\mapped{\query}}{\mapped{\ballA_j}}%
        \leq%
        2 \sqrt{2} \distS{\mapped{\query}}{\mapped{\BallSet}}%
        \leq%
        2 \sqrt{2} \distSP{\mapped{\query}}{\mapped{\BallSet}}.%
    \end{align*}
    By \lemref{5:approx}, $
    \distSP{\mapped{\query}}{\mapped{\BallSet}}/5 \leq
    \distPk{\PntSet}{\query}{k} \leq
    \distSP{\mapped{\query}}{\mapped{\BallSet}}$. Namely,
    \[
      \distP{\mapped{\query}}{\mapped{\ballA_j}}/(10\sqrt{2}) \leq
      \distPk{\PntSet}{\query}{k} \leq
      \distP{\mapped{\query}}{\mapped{\ballA_j}},
    \]
    implying the claim.
\end{proof}

\begin{remark}
    The algorithm of \thmref{const:main} works for any metric
    space. Given a set $\PntSet$ of $n$ points in a metric space, one
    can compute $n/k$ points in the product space induced by adding
    an extra coordinate, such that approximating the distance to the
    $k$\th nearest neighbor, is equivalent to answering \ANN queries
    on the reduced point set, in the product space.
\end{remark}

\section{Approximate Voronoi diagram for %
   $\distPk{\PntSet}{\query}{k}$}
\seclab{sec:avd}

Here, we are given a set $\PntSet$ of $n$ points in $\Re^d$, and our
purpose is to build an \AVD that approximates the $k$-\ANN distance,
while using (roughly) $O(n/k)$ space.

\subsection{Construction}

\subsubsection{Preprocessing}
\begin{figure*}[t]
    \centerline{
       \includegraphics{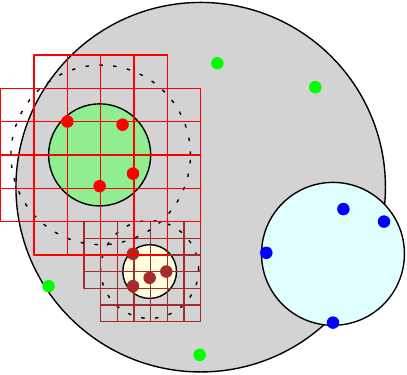}
    }
    \caption{Quorum clustering, immediate environs and grids. }
    \figlab{fig:grid}
\end{figure*}

\SaveIndent
\begin{compactenum}[(A)]
    \RestoreIndent
    \item Compute a quorum clustering for $\PntSet$ using
    \lemref{q:clustering}. Let the list of balls returned be $\ballA_1
    = \pth{\ctrA_1,\radA_1}, \dots, \ballA_{n/k} =
    \pth{\ctrA_{n/k},\radA_{n/k}}$.
    
    \item Compute an exponential grid around each quorum
    cluster. Specifically, let
    \begin{align}
        \ds \CellSetA =\, \bigcup_{i = 1}^{n/k} \;\;\bigcup_{j =0}^{
           \ceiling{\log\pth[]{32/\eps} + 1}}
        \gridSet{\ball{\ctrA_i}{2^j \radA_i} }{ \frac{\eps}{\constA d}
           2^j \radA_i }%
        \eqlab{clusters:around:q}
    \end{align}
    be the set of grid cells covering the quorum clusters and their
    immediate environ, where $\constA$ is a sufficiently large
    constant, see \figref{fig:grid}.
    
    \item Intuitively, $\CellSetA$ takes care of the region of space
    immediately next to a quorum cluster%
    \footnote{That is, intuitively, if the query point falls into one
       of the grid cells of $\CellSetA$, we can answer a query in
       constant time.}.  For the other regions of space, we can apply
    a construction of an approximate Voronoi diagram for
    the centers of the clusters (the details are somewhat more
    involved). To this end, lift the quorum clusters into points in
    $\Re^{d+1}$, as follows
    \begin{align*}
        \mapped{\BallSet} = \brc{\mapped{\ballA_1}, \dots,
           \mapped{\ballA_{n/k}}},
    \end{align*}
    where $\mapped{\ballA_i} = \pth{\ctrA_i,\radA_i} \in \Re^{d+1}$,
    for $i=1,\ldots, n/k$.  Note, that all points in
    $\mapped{\BallSet}$ belong to $\mapped{\Scube} = [0,1]^{d+1}$ by
    \assumpref{points:in:cube}.  Now build a $(1+\eps/8)$-\AVD for
    $\mapped{\BallSet}$ using the algorithm of Arya and Malamatos
    \cite{am-lsavd-02}. The \AVD construction provides a list of
    canonical cubes covering $[0,1]^{d+1}$ such that in the smallest
    cube containing the query point, the associated point of
    $\mapped{\BallSet}$, is a $(1+\eps/8)$-\ANN to the query
    point. (Note, that these cubes are not necessarily disjoint. In
    particular, the smallest cube containing the query point $\query$
    is the one that determines the assigned approximate nearest
    neighbor to $\query$.)

    Clip this collection of cubes to the hyperplane $x_{d+1} = 0$
    (i.e., throw away cubes that do not have a face on this
    hyperplane). For a cube $\cellA$ in this collection, denote by
    $\repX{\cellA}$, the point of $\mapped{\BallSet}$ assigned to it.
    Let $\CellSetC$ be this resulting set of canonical $d$-dimensional
    cubes.

    \item Let $\CellSetAVD$ be the space decomposition resulting from
    overlaying the two collection of cubes, i.e. $\CellSetA$ and
    $\CellSetC$.  Formally, we compute a compressed quadtree $\QTree$
    that has all the canonical cubes of $\CellSetA$ and $\CellSetC$ as
    nodes, and $\CellSetAVD$ is the resulting decomposition of space
    into cells. One can overlay two compressed quadtrees representing
    the two sets in linear time \cite{bhst-sqgqi-10, h-gaa-11}.  Here,
    a cell associated with a leaf is a canonical cube, and a cell
    associated with a compressed node is the set difference of two
    canonical cubes. Each node in this compressed quadtree contains
    two pointers -- to the smallest cube of $\CellSetA$, and 
    to the smallest cube of $\CellSetC$, that contains it. This
    information can be computed by doing a \BFS on the tree.
    
    For each cell $\cellA \in \CellSetAVD$ we store the following.
    \begin{compactenum}[\qquad(I)]
        \item An arbitrary representative point $\repres{\cellA} \in
        \cellA$.
        
        \item The point $\repX{\cellA} \in \mapped{\BallSet}$ that is
        associated with the smallest cell of $\CellSetC$ that contains
        this cell. We also store an arbitrary point, $\pntrepX{\cellA}
        \in \PntSet$, that is one of the $k$ points belonging to the
        cluster specified by $\repX{\cellA}$.
        
        \item A number $\adknn{\repres{\cellA}}$ that satisfies
        $\distPk{\PntSet}{\repres{\cellA}}{k} \leq
        \adknn{\repres{\cellA}} \leq
        (1+\eps/4)\distPk{\PntSet}{\repres{\cellA}}{k}$, and a point
        $\knnrepX{\repres{\cellA}} \in \PntSet$ that realizes this
        distance. In order to compute $\adknn{\repres{\cellA}}$ and
        $\knnrepX{\repres{\cellA}}$ use the data-structure of
        \secref{qtree:algo} (see \thmref{q:tree:main}) or
        the data-structure of Arya \etal \cite{amm-sttas-05}.
    \end{compactenum}
\end{compactenum}

\subsubsection{Answering a query}

Given a query point $\query$, compute the leaf cell (equivalently the
smallest cell) in $\CellSetAVD$ that contains $\query$ by performing a
point-location query in the compressed quadtree $\QTree$.  Let
$\cellA$ be this cell. Return
\begin{align}
    \min\pth{ \MakeBig%
       \distP{\mapped{\query}}{\repX{\cellA}},%
       \,%
       \adknn{\repres{\cellA}} + \dist{\query}{\repres{\cellA}}},
    \eqlab{in:cell}
\end{align}
as the approximate value to $\distPk{\PntSet}{\query}{k}$. 
Return either
$\pntrepX{\cellA}$ or $\knnrepX{\repres{\cellA}}$ depending on which
of the two distances $\distP{\mapped{\query}}{\repX{\cellA}}$ or
$\adknn{\repres{\cellA}} + \dist{\query}{\repres{\cellA}}$ is smaller
(this is the returned approximate value of
$\distPk{\PntSet}{\query}{k}$), as the approximate $k$\th-nearest neighbor.

\subsection{Correctness}

\begin{lemma} %
    \lemlab{algub}%
    Let $\cellA \in \CellSetAVD$ and $\query \in \cellA$. Then the
    number computed by the algorithm is an upper bound on
    $\distPk{\PntSet}{\query}{k}$.
\end{lemma}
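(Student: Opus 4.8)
The plan is to show that \emph{each} of the two quantities appearing inside the $\min$ in \Eqref{in:cell} is individually an upper bound on $\distPk{\PntSet}{\query}{k}$; since the minimum of two upper bounds is again an upper bound, the lemma follows immediately.

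First I would handle the term $\distP{\mapped{\query}}{\repX{\cellA}}$. By construction $\repX{\cellA} \in \mapped{\BallSet}$, so $\repX{\cellA} = \mapped{\ballA_i} = \pth{\ctrA_i,\radA_i}$ for some index $i \in \brc{1,\dots,n/k}$. Since $\mapped{\query} = \pth{\query,0}$, unwinding the definition of $\normP{\cdot}$ gives $\distP{\mapped{\query}}{\repX{\cellA}} = \dist{\query}{\ctrA_i} + \radA_i$. Now $\ballA_i = \ball{\ctrA_i}{\radA_i}$ contains (at least) $k$ points of $\PntSet$, and $\ballA_i \subseteq \ball{\query}{\dist{\query}{\ctrA_i}+\radA_i}$, so $\distPk{\PntSet}{\query}{k} \leq \dist{\query}{\ctrA_i} + \radA_i = \distP{\mapped{\query}}{\repX{\cellA}}$ — this is exactly the argument already carried out in the first paragraph of the proof of \lemref{5:approx}.

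Next I would handle the term $\adknn{\repres{\cellA}} + \dist{\query}{\repres{\cellA}}$. By the invariant stored at the cell (item (III) of the preprocessing), $\distPk{\PntSet}{\repres{\cellA}}{k} \leq \adknn{\repres{\cellA}}$. Since $\repres{\cellA} \in \cellA$ and $\query \in \cellA$, \obsref{1:Lipschitz} (the $1$-Lipschitz property of $\distPk{\PntSet}{\cdot}{k}$) gives $\distPk{\PntSet}{\query}{k} \leq \distPk{\PntSet}{\repres{\cellA}}{k} + \dist{\query}{\repres{\cellA}} \leq \adknn{\repres{\cellA}} + \dist{\query}{\repres{\cellA}}$.

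Combining the two bounds, both arguments of the $\min$ in \Eqref{in:cell} dominate $\distPk{\PntSet}{\query}{k}$, hence so does their minimum, which is the value the algorithm returns. There is essentially no obstacle: the only points requiring a little care are reading off $\distP{\mapped{\query}}{\repX{\cellA}} = \dist{\query}{\ctrA_i}+\radA_i$ correctly from the product-norm definition, and invoking the correct stored guarantee on $\adknn{\cdot}$; the rest is the triangle inequality packaged in \obsref{1:Lipschitz}.
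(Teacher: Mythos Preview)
Your proposal is correct and follows essentially the same approach as the paper's proof: both show that each of the two terms in the $\min$ separately upper-bounds $\distPk{\PntSet}{\query}{k}$, using \obsref{1:Lipschitz} together with the stored guarantee on $\adknn{\repres{\cellA}}$ for one term, and the containment argument from \lemref{5:approx} (equivalently, that $\ball{\ctrA_i}{\radA_i}$ already holds $k$ points) for the other. The only cosmetic difference is the order in which the two terms are treated.
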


\begin{proof}
    By \obsref{1:Lipschitz}, $ \distPk{\PntSet}{\query}{k} \leq
    \distPk{\PntSet}{\repres{\cellA}}{k} +
    \dist{\query}{\repres{\cellA}} \leq \adknn{\repres{\cellA}} +
    \dist{\query}{\repres{\cellA}}$.  Now, let $\repX{\cellA} =(\ctrA,
    \radA)$.  We have, by \lemref{5:approx}, that
    $\distPk{\PntSet}{\query}{k} \leq \dist{\query}{\ctrA} + \radA =
    \distP{\mapped{\query}}{\repX{\cellA}}.  $ As the returned value
    is the minimum of these two numbers, the claim holds.
\end{proof}

\begin{lemma}%
    \lemlab{a:m:easy}%
    Consider any query point $\query \in [0,1]^d$, and let $\cellA$ be
    the smallest cell of $\CellSetAVD$ that contains the query point.
    Then, $\distS{\mapped{\query}}{\mapped{\BallSet}} \leq
    \dist{\mapped{\query}}{\repX{\cellA}} \leq
    (1+\eps/8)\distS{\mapped{\query}}{\mapped{\BallSet}}$.
\end{lemma}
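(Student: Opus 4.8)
The plan is to reduce the statement to the guarantee of the $(1+\eps/8)$-\AVD of Arya and Malamatos that was constructed on the lifted point set $\mapped{\BallSet} \subseteq [0,1]^{d+1}$ in step~(C) of the preprocessing. The left inequality $\distS{\mapped{\query}}{\mapped{\BallSet}} \leq \dist{\mapped{\query}}{\repX{\cellA}}$ is immediate, since $\repX{\cellA}$ is one specific point of $\mapped{\BallSet}$ and the distance to the set is the minimum over all of them. So the content is the right inequality.

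First I would recall how $\repX{\cellA}$ is defined: the \AVD gives a collection of canonical cubes covering $[0,1]^{d+1}$, each tagged with a point of $\mapped{\BallSet}$ that is a $(1+\eps/8)$-\ANN for every point in the smallest such cube containing the query; step~(C) then clips this collection to the hyperplane $x_{d+1}=0$, and $\CellSetC$ is the set of $d$-dimensional cubes obtained. The cell $\cellA \in \CellSetAVD$ carries a pointer to the smallest cube of $\CellSetC$ containing it, and $\repX{\cellA}$ is the tag of that cube. The key geometric observation is that $\mapped{\query} = (\query,0)$ lies on the clipping hyperplane $x_{d+1}=0$, so the smallest $(d+1)$-dimensional \AVD cube containing $\mapped{\query}$ survives the clipping (it has a face on the hyperplane) and becomes a cube of $\CellSetC$ containing $\query$ in the $d$-dimensional picture. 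Hence the smallest cube of $\CellSetC$ containing $\cellA$ — equivalently, the smallest cube of $\CellSetC$ containing $\query$, since $\CellSetAVD$ is the overlay refinement and $\query\in\cellA$ — is contained in that surviving \AVD cube, so $\repX{\cellA}$ is the tag of a cube of the original \AVD collection that contains $\mapped{\query}$.

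Then I would need the monotonicity property of the Arya--Malamatos \AVD: if $\mapped{\query}$ lies in a listed cube $C$ with tag $p$, then $\dist{\mapped{\query}}{p} \leq (1+\eps/8)\distS{\mapped{\query}}{\mapped{\BallSet}}$ — this holds for the \emph{smallest} listed cube containing $\mapped{\query}$, and since $\repX{\cellA}$ is the tag of the smallest cube of $\CellSetC$ (hence of the smallest surviving \AVD cube) containing $\mapped{\query}$, the bound applies to it. Combining the two directions gives the claim.

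The step I expect to need the most care is the bookkeeping in the previous paragraph: verifying that ``smallest cube of $\CellSetC$ containing $\cellA$'' really does coincide with ``smallest cube of the original \AVD collection containing $\mapped{\query}$, restricted to the hyperplane,'' so that the $(1+\eps/8)$ guarantee transfers. This hinges on the clipping in step~(C) preserving the containment order among the relevant cubes — a cube of the \AVD collection that contains $\mapped{\query}$ necessarily meets $x_{d+1}=0$ and is therefore retained — and on the overlay $\CellSetAVD$ being a common refinement so that the pointer stored in $\cellA$ points to exactly this cube. Everything else is the definition of distance-to-a-set and the stated property of the \AVD.
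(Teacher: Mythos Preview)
Your proposal is correct and follows essentially the same approach as the paper: both argue that $\CellSetAVD$ is a refinement of the clipped Arya--Malamatos \AVD, so the representative $\repX{\cellA}$ is exactly the point the \AVD would return for $\mapped{\query}$, and hence inherits the $(1+\eps/8)$-\ANN guarantee. The paper states this in two sentences without unpacking the clipping bookkeeping, whereas you spell out explicitly why every \AVD cube containing $\mapped{\query}$ survives the clipping and why the containment order is preserved --- this extra care is sound and arguably makes the argument more complete.
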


\begin{proof}
    Observe that the space decomposition generated by $\CellSetAVD$ is a
    refinement of the decomposition generated by the Arya and Malamatos
    \cite{am-lsavd-02} \AVD construction, when applied to
    $\mapped{\BallSet}$, and restricted to the $d$ dimensional subspace
    we are interested in (i.e., $x_{d+1}=0$). As such, $\repX{\cellA}$
    is the point returned by the \AVD for this query point
    before the refinement, thus implying the claim.
\end{proof}

\subsubsection{The query point is close to a quorum cluster of the
   right size}

\begin{lemma}%
    \lemlab{lipschitz}%
    Consider a query point $\query$, and let $\cellA \subseteq \Re^d$
    be any set with $\query \in \cellA$, such that $\diameter{\cellA}
    \leq \eps \distPk{\PntSet}{\query}{k}$. Then, for any $\pntA \in
    \cellA$, we have %
    \begin{align*}
        (1-\eps) \distPk{\PntSet}{\query}{k} \leq
        \distPk{\PntSet}{\pntA}{k} \leq
        (1+\eps)\distPk{\PntSet}{\query}{k}.
    \end{align*}%
\end{lemma}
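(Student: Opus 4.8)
The statement is a direct consequence of the $1$-Lipschitz property of $\distPk{\PntSet}{\cdot}{k}$ recorded in \obsref{1:Lipschitz}. First I would apply the observation in one direction: for any $\pntA \in \cellA$, since $\pntA, \query \in \cellA$, we have $\dist{\query}{\pntA} \leq \diameter{\cellA} \leq \eps\, \distPk{\PntSet}{\query}{k}$, and hence $\distPk{\PntSet}{\pntA}{k} \leq \distPk{\PntSet}{\query}{k} + \dist{\query}{\pntA} \leq (1+\eps)\distPk{\PntSet}{\query}{k}$. That handles the upper bound immediately.

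For the lower bound I would again invoke \obsref{1:Lipschitz}, but now with the roles of $\query$ and $\pntA$ swapped: $\distPk{\PntSet}{\query}{k} \leq \distPk{\PntSet}{\pntA}{k} + \dist{\pntA}{\query}$. Since $\dist{\pntA}{\query} = \dist{\query}{\pntA} \leq \eps\, \distPk{\PntSet}{\query}{k}$, rearranging gives $\distPk{\PntSet}{\pntA}{k} \geq \distPk{\PntSet}{\query}{k} - \eps\, \distPk{\PntSet}{\query}{k} = (1-\eps)\distPk{\PntSet}{\query}{k}$, which is exactly the claimed lower bound.

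There is essentially no obstacle here; the only subtlety worth a sentence is to make sure the diameter bound is used against $\distPk{\PntSet}{\query}{k}$ on the \emph{correct} side in the lower-bound step (one wants to bound $\dist{\pntA}{\query}$ by a multiple of the quantity at $\query$, which is exactly what the hypothesis $\diameter{\cellA} \leq \eps\, \distPk{\PntSet}{\query}{k}$ provides). Putting the two chains of inequalities together yields the sandwich $(1-\eps)\distPk{\PntSet}{\query}{k} \leq \distPk{\PntSet}{\pntA}{k} \leq (1+\eps)\distPk{\PntSet}{\query}{k}$, completing the proof.
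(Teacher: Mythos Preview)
Your proof is correct and is essentially the same as the paper's: both directions follow immediately from the $1$-Lipschitz property in \obsref{1:Lipschitz} together with the bound $\dist{\query}{\pntA}\leq\diameter{\cellA}\leq\eps\,\distPk{\PntSet}{\query}{k}$. The paper merely writes out one direction explicitly and dismisses the other as symmetric, whereas you spell out both.
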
%
\begin{proof}
    By \obsref{1:Lipschitz}, we have
    \begin{align*}
        \distPk{\PntSet}{\query}{k}%
        \leq%
        \distPk{\PntSet}{\pntA}{k} + \dist{\pntA}{\query} %
        \leq%
        \distPk{\PntSet}{\pntA}{k} + \diameter{\cellA}%
        \leq%
        \distPk{\PntSet}{\pntA}{k} + \eps \distPk{\PntSet}{\query}{k}.
    \end{align*}
    The other direction follows by a symmetric argument.
\end{proof}

\begin{lemma}%
    \lemlab{small:cell}%
    If the smallest region $\cellA \in \CellSetAVD$ that contains
    $\query$ has diameter $\diameter{\cellA} \leq \eps
    \distPk{\PntSet}{\query}{k}/4$, then the algorithm returns a
    distance which is between $\distPk{\PntSet}{\query}{k}$ and
    $(1+\eps)\distPk{\PntSet}{\query}{k}$.
\end{lemma}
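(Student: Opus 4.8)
The plan is to combine the upper bound already established in Lemma~\ref{lemma:algub} with a matching lower bound on the quantity returned in \eqref{equation:in:cell}, using the hypothesis $\diameter{\cellA} \leq \eps \distPk{\PntSet}{\query}{k}/4$ together with Lemma~\ref{lemma:lipschitz}. The upper bound is immediate: by Lemma~\ref{lemma:algub} the returned value is at most $\distPk{\PntSet}{\query}{k}$ --- wait, that direction is backwards; Lemma~\ref{lemma:algub} says the returned value is an \emph{upper} bound on $\distPk{\PntSet}{\query}{k}$, i.e.\ $\distPk{\PntSet}{\query}{k} \leq (\text{returned value})$. So the left inequality of the claim is exactly Lemma~\ref{lemma:algub} and needs nothing further. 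All the work is in the \emph{right} inequality: showing the returned value is at most $(1+\eps)\distPk{\PntSet}{\query}{k}$.

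For that, I would focus on the second term in the minimum in \eqref{equation:in:cell}, namely $\adknn{\repres{\cellA}} + \dist{\query}{\repres{\cellA}}$, since the minimum is no larger than this term and that suffices for an upper bound. First, since $\repres{\cellA} \in \cellA$ and $\query \in \cellA$, we have $\dist{\query}{\repres{\cellA}} \leq \diameter{\cellA} \leq \eps \distPk{\PntSet}{\query}{k}/4$. Second, apply Lemma~\ref{lemma:lipschitz} with the approximation parameter $\eps/4$: the hypothesis $\diameter{\cellA} \leq (\eps/4)\distPk{\PntSet}{\query}{k}$ is exactly what is needed, so $\distPk{\PntSet}{\repres{\cellA}}{k} \leq (1+\eps/4)\distPk{\PntSet}{\query}{k}$. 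Third, recall that $\adknn{\repres{\cellA}}$ is stored so that $\adknn{\repres{\cellA}} \leq (1+\eps/4)\distPk{\PntSet}{\repres{\cellA}}{k}$. Chaining these, $\adknn{\repres{\cellA}} \leq (1+\eps/4)^2 \distPk{\PntSet}{\query}{k}$, and adding the bound on $\dist{\query}{\repres{\cellA}}$ gives a total of at most $\bigl((1+\eps/4)^2 + \eps/4\bigr)\distPk{\PntSet}{\query}{k}$. A routine check shows $(1+\eps/4)^2 + \eps/4 \leq 1+\eps$ for $\eps \leq 1$ (indeed the left side is $1 + 3\eps/4 + \eps^2/16 \leq 1+\eps$ whenever $\eps/16 \cdot \eps \leq \eps/4$, i.e.\ $\eps \leq 4$), which closes the right inequality.

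There is essentially no serious obstacle here; the lemma is a short computation once the correct approximation parameters are tracked. The only point requiring a little care is bookkeeping the two separate sources of $\eps/4$ error --- the Lipschitz/diameter slack and the stored-value slack $\adknn{\cdot}$ --- and confirming that, together with the $\eps/4$ from $\dist{\query}{\repres{\cellA}}$, they still sum to something below $(1+\eps)$; this is why the construction used $\eps/4$ rather than $\eps$ in the relevant places. I would also remark that we never needed the first term $\distP{\mapped{\query}}{\repX{\cellA}}$ of the minimum for this lemma --- it is the term that will matter in the complementary case where $\cellA$ is \emph{not} small relative to $\distPk{\PntSet}{\query}{k}$ --- so the proof of this lemma can simply drop it via $\min(a,b) \le b$.
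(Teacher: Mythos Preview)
Your proposal is correct and follows essentially the same argument as the paper: use \lemref{algub} for the lower bound, then bound the second term of the minimum via $\adknn{\repres{\cellA}} \leq (1+\eps/4)\distPk{\PntSet}{\repres{\cellA}}{k}$, the Lipschitz estimate $\distPk{\PntSet}{\repres{\cellA}}{k} \leq (1+\eps/4)\distPk{\PntSet}{\query}{k}$, and $\dist{\query}{\repres{\cellA}} \leq \diameter{\cellA} \leq \eps\distPk{\PntSet}{\query}{k}/4$, arriving at exactly the same inequality $(1+\eps/4)^2 + \eps/4 \leq 1+\eps$. The paper's write-up differs only cosmetically in that it invokes the additive $1$-Lipschitz bound $\distPk{\PntSet}{\repres{\cellA}}{k} \leq \distPk{\PntSet}{\query}{k} + \diameter{\cellA}$ before substituting the diameter hypothesis, but the chain of estimates and the final arithmetic are identical.
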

\begin{proof}
    Let $\repres{\cellA}$ be the representative stored with the cell.
    Let $\num$ be the number returned by the algorithm.  By
    \lemref{algub} we have that $\distPk{\PntSet}{\query}{k} \leq
    \num$.  Since the algorithm returns the minimum of two numbers, one
    of which is $\adknn{\repres{\cellA}} +
    \dist{\query}{\repres{\cellA}}$, we have by \lemref{lipschitz},
    \begin{align*}
        \num \;&\leq \adknn{\repres{\cellA}} +
        \dist{\query}{\repres{\cellA}} \leq (1+\eps/4)
        \distPk{\PntSet}{\repres{\cellA}}{k} +
        \dist{\query}{\repres{\cellA}}%
        \\%
        &\leq%
        (1+\eps/4)\pth{\MakeBig \distPk{\PntSet}{\query}{k} + \diameter{\cellA}} +
        \diameter{\cellA}%
        \\%
        &\leq%
        (1+\eps/4)(\distPk{\PntSet}{\query}{k} +
        \eps\distPk{\PntSet}{\query}{k}/4)
        + \eps \distPk{\PntSet}{\query}{k}/4 \\
        &=%
        (1+\eps/4)^2 \distPk{\PntSet}{\query}{k}
        +\eps\distPk{\PntSet}{\query}{k}/4 \leq
        (1+\eps)\distPk{\PntSet}{\query}{k},
    \end{align*}
    establishing the claim.
\end{proof}

\begin{defn}
    Consider a query point $\query \in \Re^d$.  The first quorum
    cluster $\ballA_i = \ball{\ctrA_i}{\radA_i}$ that intersects
    $\ball{\query}{\distPk{\PntSet}{\query}{k}}$ is the \emph{anchor
       cluster} of $\query$. The corresponding \emphi{anchor point} is
    $\pth[]{\ctrA_i,\radA_i} \in \Re^{d+1}$.
    
    \deflab{anchor:p:t}
\end{defn}

\begin{lemma}%
    \lemlab{anchor}%
    For any query point $\query$, we have that
    \begin{compactenum}[(i)]
        \item the anchor point $\pth{\ctrA,\radA}$ is well defined,
        \item $\radA \leq 2\distPk{\PntSet}{\query}{k}$,
        \item for $\ballA = \ball{\ctrA}{\radA}$ we have $\ballA \cap
        \ball{\query}{\distPk{\PntSet}{\query}{k}} \neq \emptyset$,
        and
        \item $\dist{\query}{\ctrA} \leq 3
        \distPk{\PntSet}{\query}{k}$.
    \end{compactenum}
\end{lemma}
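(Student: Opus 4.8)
The plan is to dispatch the four claims in order, deriving (i) and (iii) directly from the definition of the anchor cluster, (ii) from the approximation property of the quorum clustering, and (iv) from (ii) by one triangle inequality; the only genuine estimate is (ii).

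First I would establish (i). The ball $\ball{\query}{\distPk{\PntSet}{\query}{k}}$ contains at least $k$ points of $\PntSet$ (the $k$ nearest, $\pntA_1,\dots,\pntA_k$ in the notation of \secref{prelim}), and since the clusters $\PntSet_1,\dots,\PntSet_{n/k}$ partition $\PntSet$ with $\PntSet_j\subseteq\ballA_j$ by \lemref{q:clustering}(A), any such point lies in some $\PntSet_j\subseteq\ballA_j$. Hence at least one quorum ball meets $\ball{\query}{\distPk{\PntSet}{\query}{k}}$, so the set of indices $j$ with $\ballA_j\cap\ball{\query}{\distPk{\PntSet}{\query}{k}}\neq\emptyset$ is nonempty, its minimum exists, and the anchor point is well defined. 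Claim (iii) is then immediate: by definition the anchor ball $\ballA=\ball{\ctrA}{\radA}$ is one that intersects $\ball{\query}{\distPk{\PntSet}{\query}{k}}$.

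The main step is (ii). Writing $r=\distPk{\PntSet}{\query}{k}$ and letting $i$ be the anchor index (so $\ballA=\ballA_i$, $\ctrA=\ctrA_i$, $\radA=\radA_i$), minimality of $i$ gives $\ballA_j\cap\ball{\query}{r}=\emptyset$ for all $j<i$, and since $\PntSet_j\subseteq\ballA_j$ this forces $\PntSet_j\cap\ball{\query}{r}=\emptyset$ for $j<i$. Hence all of the (at least) $k$ points of $\PntSet\cap\ball{\query}{r}$ survive into $\PntSetQ_i=\PntSet\setminus(\PntSet_1\cup\dots\cup\PntSet_{i-1})$, so $\ball{\query}{r}$ contains $k$ points of $\PntSetQ_i$, whence $\minDistPk{\PntSetQ_i}{k}\leq r$. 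Feeding this into the guarantee $\radA_i/2\leq\minDistPk{\PntSetQ_i}{k}$ of \lemref{q:clustering}(B) gives $\radA=\radA_i\leq 2r$, which is (ii). The one place to be careful is this transfer: the anchor is defined through ball--ball intersection, not through containment of an assigned point, so the implication ``$\ballA_j$ misses $\ball{\query}{r}$, hence $\PntSet_j$ misses it'' is exactly what keeps the $k$ nearest points inside $\PntSetQ_i$ and lets \lemref{q:clustering}(B) apply.

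Finally, for (iv) I would pick any $z\in\ballA\cap\ball{\query}{r}$ (nonempty by (iii)) and combine the triangle inequality with (ii): $\dist{\query}{\ctrA}\leq\dist{\query}{z}+\dist{z}{\ctrA}\leq r+\radA\leq r+2r=3r$. So the lemma reduces to these definitional observations, one bound from \lemref{q:clustering}(B), and a triangle inequality, with the index-minimality argument in (ii) being the only nontrivial ingredient.
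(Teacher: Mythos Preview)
Your proof is correct and follows the same approach as the paper's. The only difference is in (ii): where the paper simply says ``Claim (ii) is implied by the proof of \lemref{5:approx},'' you spell out the argument explicitly and carefully note the subtlety that the anchor is defined via ball--ball intersection rather than via containment of an assigned point, so the implication $\ballA_j\cap\ball{\query}{r}=\emptyset\Rightarrow\PntSet_j\cap\ball{\query}{r}=\emptyset$ (using $\PntSet_j\subseteq\ballA_j$) is what justifies applying \lemref{q:clustering}(B) at the anchor index. This makes your write-up of (ii) slightly more self-contained than the paper's, but the logic is identical.
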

\begin{proof}
    Consider the $k$ closest points to $\query$ in $\PntSet$. As
    $\PntSet \subseteq \ballA_1 \cup \dots \cup \ballA_{n/k}$ it must
    be that $\ball{\query}{\distPk{\PntSet}{\query}{k}}$ intersects
    some $\ballA_i$.  Consider the first cluster $\ball{\ctrA}{\radA}$
    in the quorum clustering that intersects
    $\ball{\query}{\distPk{\PntSet}{\query}{k}}$.  Then
    $\pth{\ctrA,\radA}$ is by definition the anchor point and we
    immediately have $\ball{\ctrA}{\radA} \cap
    \ball{\query}{\distPk{\PntSet}{\query}{k}} \neq \emptyset$.  Claim
    (ii) is implied by the proof of \lemref{5:approx}.  Finally, as
    for (iv), we have $\radA \leq 2\distPk{\PntSet}{\query}{k}$ and
    the ball around $\query$ of radius $\distPk{\PntSet}{\query}{k}$
    intersects $\ball{\ctrA}{\radA}$, thus implying that
    $\dist{\query}{\ctrA} \leq \distPk{\PntSet}{\query}{k} + \radA
    \leq 3\distPk{\PntSet}{\query}{k}$.
\end{proof}

\begin{lemma}%
    \lemlab{large:anchor}%
    Consider a query point $\query$. If there is a cluster
    $\ball{\ctrA}{\radA}$ in the quorum clustering computed, such that
    $\dist{\query}{\ctrA} \leq 6 \distPk{\PntSet}{\query}{k}$ and
    $\eps \distPk{\PntSet}{\query}{k}/4 \leq \radA \leq 6
    \distPk{\PntSet}{\query}{k}$, then the output of the algorithm is
    correct.
\end{lemma}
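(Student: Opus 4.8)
We want to show that if there is a quorum cluster $\ball{\ctrA}{\radA}$ with $\dist{\query}{\ctrA} \leq 6\distPk{\PntSet}{\query}{k}$ and $\eps\distPk{\PntSet}{\query}{k}/4 \leq \radA \leq 6\distPk{\PntSet}{\query}{k}$, then the algorithm's output is correct (i.e., lies between $\distPk{\PntSet}{\query}{k}$ and $(1+\eps)\distPk{\PntSet}{\query}{k}$). The lower bound on the output is already handled by \lemref{algub}, so the whole task reduces to proving the upper bound. The key idea is that such a cluster has radius $\approx \eps\distPk{\PntSet}{\query}{k}$ (up to constants), so when we built the exponential grid $\CellSetA$ in \Eqref{clusters:around:q} around this cluster, we laid down grid cells of sidelength roughly $\frac{\eps}{\constA d}\radA$ extending out to distance $\approx 2^{\ceiling{\log(32/\eps)+1}}\radA = \Omega(\radA/\eps) = \Omega(\distPk{\PntSet}{\query}{k})$ from $\ctrA$. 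Since $\query$ is within $6\distPk{\PntSet}{\query}{k}$ of $\ctrA$, the query point falls inside one of these fine grid cells, provided the constant $\constA$ and the number of exponential levels are chosen large enough (this is where I'd need to be careful: check that $6\distPk{\PntSet}{\query}{k}$ is within the outermost shell $\ball{\ctrA}{2^j\radA}$ for some $j \leq \ceiling{\log(32/\eps)+1}$, using $\radA \geq \eps\distPk{\PntSet}{\query}{k}/4$, so $2^j\radA \geq 2^j\eps\distPk{\PntSet}{\query}{k}/4$, and we need this $\geq 6\distPk{\PntSet}{\query}{k}$, i.e. $2^j \geq 24/\eps$, which holds for $j = \ceiling{\log(32/\eps)}$ or so).

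First I would establish the diameter bound: the grid cell $g \in \gridSet{\ball{\ctrA}{2^j\radA}}{\frac{\eps}{\constA d}2^j\radA}$ containing $\query$ at the appropriate level $j$ has diameter $\sqrt{d}\cdot\frac{\eps}{\constA d}2^j\radA \leq \frac{\eps}{\constA\sqrt{d}}2^j\radA$. Using $2^j\radA = O(\distPk{\PntSet}{\query}{k})$ at the level where $\query$ is first captured (the smallest such $j$ gives $2^j\radA \leq \max(2\radA, 12\distPk{\PntSet}{\query}{k}) = O(\distPk{\PntSet}{\query}{k})$ since $\radA \leq 6\distPk{\PntSet}{\query}{k}$), this diameter is at most $\frac{C\eps}{\constA\sqrt{d}}\distPk{\PntSet}{\query}{k}$ for an absolute constant $C$. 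Choosing $\constA$ large enough (say $\constA \geq 4C/\sqrt d$ or simply $\constA$ absorbing all these constants), the diameter of this cell is at most $\eps\distPk{\PntSet}{\query}{k}/4$.

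Now the cell $\cellA$ actually returned by the query algorithm is the smallest cell of $\CellSetAVD$ containing $\query$, and $\CellSetAVD$ refines $\CellSetA$ (it is the overlay of $\CellSetA$ and $\CellSetC$); hence $\cellA$ is contained in the grid cell $g$ above, so $\diameter{\cellA} \leq \diameter{g} \leq \eps\distPk{\PntSet}{\query}{k}/4$. At this point \lemref{small:cell} applies directly and tells us the algorithm returns a value in $[\distPk{\PntSet}{\query}{k}, (1+\eps)\distPk{\PntSet}{\query}{k}]$, which is exactly the claim.

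The main obstacle is purely bookkeeping: verifying that the chosen number of exponential shells $\ceiling{\log(32/\eps)+1}$ and the chosen grid fineness $\frac{\eps}{\constA d}2^j\radA$ interact correctly — that is, that there genuinely is a level $j$ in the allowed range for which (a) $\query \in \ball{\ctrA}{2^j\radA}$ and (b) the resulting cell is small enough. Both reduce to inequalities of the form $2^j \gtrsim 1/\eps$ given the hypotheses $\eps\distPk{\PntSet}{\query}{k}/4 \leq \radA \leq 6\distPk{\PntSet}{\query}{k}$ and $\dist{\query}{\ctrA}\leq 6\distPk{\PntSet}{\query}{k}$; the factor $32/\eps$ in the definition of $\CellSetA$ is presumably sized precisely to make this go through with room to spare, and the constant $\constA$ is chosen afterward to kill the $\sqrt d$ and the remaining absolute constants. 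I do not anticipate any conceptual difficulty beyond tracking these constants.
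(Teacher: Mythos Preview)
Your proposal is correct and follows essentially the same route as the paper: verify that the exponential grid around $\ball{\ctrA}{\radA}$ reaches out to $\query$ (using $\radA \geq \eps\distPk{\PntSet}{\query}{k}/4$ so that $2^j\radA$ with $j\approx\log(32/\eps)$ exceeds $\dist{\query}{\ctrA}$), take the smallest such $j$ to bound $2^j\radA \leq \max(\radA,\,2\dist{\query}{\ctrA}) = O(\distPk{\PntSet}{\query}{k})$, conclude that the containing cell of $\CellSetAVD$ has diameter at most $\eps\distPk{\PntSet}{\query}{k}/4$, and finish via \lemref{small:cell}. The paper does exactly this (invoking \lemref{lipschitz} at the end, which is the content of \lemref{small:cell}), with the explicit choice $\constA \geq 48$; your handling of the $\sqrt{d}$ factor in the cell diameter is in fact slightly more careful than the paper's.
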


\begin{proof}
    We have
    \begin{align*}
        \frac{32 \radA}{\eps}%
        \geq%
        \frac{32(\eps \distPk{\PntSet}{\query}{k}/4)}{\eps}%
        =%
        8 \distPk{\PntSet}{\query}{k} \geq \dist{\query}{\ctrA}.
    \end{align*}
    Thus, by construction, the expanded environ of the quorum cluster
    $\ball{\ctrA}{\radA}$ contains the query point, see
    \Eqrefpage{clusters:around:q}. Let $j$ be the smallest integer
    such that $2^j \radA \geq \dist{\query}{\ctrA}$. We have that,
    $2^j \radA \leq \max (\radA, 2 \dist{\query}{\ctrA})$.  As such,
    if $\cellA$ is the smallest cell in $\CellSetAVD$ containing the
    query point $\query$, then
    \begin{align*}
        \diameter{\cellA}%
        &\leq %
        \frac{\eps}{\constA d} 2^j \radA%
        \leq %
        \frac{\eps}{\constA d} \cdot \max \pth{ \radA,2
           \dist{\query}{\ctrA} }%
        \leq %
        \frac{\eps}{\constA d} \cdot \max \pth{
           6\distPk{\PntSet}{\query}{k}, 12\distPk{\PntSet}{\query}{k}
           \MakeBig } %
        \\
        &\leq%
        \frac{\eps}{4 d} \distPk{\PntSet}{\query}{k}, %
    \end{align*}
    by \Eqrefpage{clusters:around:q} and if $\constA \geq 48$.  As
    such, $\diameter{\cellA} \leq \eps \distPk{\PntSet}{\query}{k} /
    4$, and the claim follows by \lemref{lipschitz}.
\end{proof}

\subsubsection{The general case}

\begin{lemma}%
    \lemlab{correct}%
    The data-structure constructed above returns
    $(1+\eps)$-approximation to $\distPk{\PntSet}{\query}{k}$, for any
    query point $\query$.
\end{lemma}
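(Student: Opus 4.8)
The plan is to combine \lemref{algub} with a case analysis governed by the size of the smallest cell $\cellA\in\CellSetAVD$ containing $\query$ and by the radii of the quorum clusters near $\query$. Since \lemref{algub} already shows that the value returned by the algorithm is at least $\distPk{\PntSet}{\query}{k}$, it suffices to prove the upper bound $(1+\eps)\distPk{\PntSet}{\query}{k}$. I will assume $\eps\le 1$ throughout (otherwise rescale $\eps$).

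If $\diameter{\cellA}\le \eps\distPk{\PntSet}{\query}{k}/4$, then \lemref{small:cell} gives the claim immediately, so assume $\diameter{\cellA}>\eps\distPk{\PntSet}{\query}{k}/4$. Let $(\ctrA,\radA)$ be the anchor point of $\query$, which is well defined by \lemref{anchor}(i); by \lemref{anchor} we have $\radA\le 2\distPk{\PntSet}{\query}{k}$ and $\dist{\query}{\ctrA}\le 3\distPk{\PntSet}{\query}{k}$. If in addition $\radA\ge\eps\distPk{\PntSet}{\query}{k}/4$, then $\ball{\ctrA}{\radA}$ satisfies the hypotheses of \lemref{large:anchor} (its centre is within $6\distPk{\PntSet}{\query}{k}$ of $\query$, and $\eps\distPk{\PntSet}{\query}{k}/4\le\radA\le 6\distPk{\PntSet}{\query}{k}$), so the output is correct. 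Hence it remains to handle the case $\radA<\eps\distPk{\PntSet}{\query}{k}/4$.

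In this case, since $\ball{\ctrA}{\radA}$ meets $\ball{\query}{\distPk{\PntSet}{\query}{k}}$ by \lemref{anchor}(iii), the triangle inequality gives $\dist{\query}{\ctrA}\le \distPk{\PntSet}{\query}{k}+\radA<(1+\eps/4)\distPk{\PntSet}{\query}{k}$, hence
\[
   \dist{\mapped{\query}}{(\ctrA,\radA)}=\sqrt{\dist{\query}{\ctrA}^2+\radA^2}<\sqrt{(1+\eps/4)^2+(\eps/4)^2}\,\distPk{\PntSet}{\query}{k}\le(1+5\eps/16)\distPk{\PntSet}{\query}{k}.
\]
Write $\repX{\cellA}=(\ctrB,\radB)$. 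Since $(\ctrA,\radA)\in\mapped{\BallSet}$ is a candidate nearest neighbour, \lemref{a:m:easy} yields
\[
   \sqrt{\dist{\query}{\ctrB}^2+\radB^2}=\dist{\mapped{\query}}{\repX{\cellA}}\le(1+\eps/8)\,\distS{\mapped{\query}}{\mapped{\BallSet}}\le(1+\eps/8)(1+5\eps/16)\distPk{\PntSet}{\query}{k}\le 2\distPk{\PntSet}{\query}{k},
\]
and in particular $\dist{\query}{\ctrB}\le 2\distPk{\PntSet}{\query}{k}$ and $\radB\le 2\distPk{\PntSet}{\query}{k}$. Now split once more. If $\radB\ge\eps\distPk{\PntSet}{\query}{k}/4$, then $\ball{\ctrB}{\radB}$ satisfies the hypotheses of \lemref{large:anchor} and the output is correct. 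Otherwise $\radB<\eps\distPk{\PntSet}{\query}{k}/4$, and then the first term of \Eqref{in:cell} satisfies
\[
   \distP{\mapped{\query}}{\repX{\cellA}}=\dist{\query}{\ctrB}+\radB\le(1+\eps/8)(1+5\eps/16)\distPk{\PntSet}{\query}{k}+\eps\distPk{\PntSet}{\query}{k}/4\le(1+\eps)\distPk{\PntSet}{\query}{k},
\]
so, as the returned value is the minimum of the two terms of \Eqref{in:cell}, it lies in $[\distPk{\PntSet}{\query}{k},(1+\eps)\distPk{\PntSet}{\query}{k}]$ by \lemref{algub}.

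The genuinely delicate point — the main obstacle — is that the \AVD is constructed for the Euclidean metric on $\Re^{d+1}$ while the algorithm reports the product-metric distance $\distP{\cdot}{\cdot}$, and $\normP{\cdot}$ can be as large as $\sqrt2\,\norm{\cdot}$; bounding the \AVD term directly would leak this $\sqrt2$ factor. The split on the size of $\radB$ is exactly what removes it: when $\radB<\eps\distPk{\PntSet}{\query}{k}/4$ the product and Euclidean lengths of $\mapped{\query}-\repX{\cellA}$ differ only by an additive $O(\eps)\distPk{\PntSet}{\query}{k}$, so the direct estimate goes through; whereas if $\radB$ is larger then $\ball{\ctrB}{\radB}$ is ``fat'' enough that its expanded environ (built with $\constA\ge 48$) already swallows $\query$, so \lemref{large:anchor} takes over. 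The only remaining work is to check that the nested $(1+O(\eps))$ factors compose to at most $1+\eps$ for $\eps\le 1$, which is routine.
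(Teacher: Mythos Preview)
Your proof is correct and follows essentially the same approach as the paper: both arguments use the anchor point from \lemref{anchor}, split on whether the anchor radius $\radA$ or the \AVD-returned radius (your $\radB$, the paper's $\radA_\query$) exceeds $\eps\distPk{\PntSet}{\query}{k}/4$, invoke \lemref{large:anchor} in the ``fat'' cases, and in the remaining case bound $\dist{\query}{\ctrB}+\radB$ directly via \lemref{a:m:easy}. Your initial case split on $\diameter{\cellA}$ is harmless but unnecessary---the assumption $\diameter{\cellA}>\eps\distPk{\PntSet}{\query}{k}/4$ is never used in the remainder of your argument, and the paper dispenses with it.
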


\begin{proof}
    Consider the query point $\query$ and its anchor point $(\ctrA,
    \radA)$. By \lemref{anchor}, we have $\radA \leq 2
    \distPk{\PntSet}{\query}{k}$ and $\dist{\query}{\ctrA} \leq 3
    \distPk{\PntSet}{\query}{k}$. This implies that
    \begin{equation}
        \distS{\mapped{\query}}{\mapped{\BallSet}} %
        \leq%
        \dist{\mapped{\query}}{(\ctrA,\radA)}%
        \leq%
        \dist{\query}{\ctrA} + \radA%
        \leq%
        5 \distPk{\PntSet}{\query}{k}.
        \eqlab{no:clue}
    \end{equation}
    Let the returned point, which is a $(1+\eps/8)$-\ANN for
    $\mapped{\query}$ in $\mapped{\BallSet}$, be $(\ctrA_\query,
    \radA_\query) =\repX{\cellA}$, where $\mapped{\query} =
    \pth[]{\query,0}$. We have that
    $\dist{\mapped{\query}}{(\ctrA_\query, \radA_\query)} \leq
    (1+\eps/8)\distS{\mapped{\query}}{\mapped{\BallSet}}\leq 6
    \distPk{\PntSet}{\query}{k}$. In particular,
    $\dist{\query}{\ctrA_\query} \leq 6 \distPk{\PntSet}{\query}{k}$
    and $\radA_\query \leq 6 \distPk{\PntSet}{\query}{k}$.

    Thus, if $\radA_\query \geq \eps \distPk{\PntSet}{\query}{k}/4$ or
    $\radA \geq \eps \distPk{\PntSet}{\query}{k}/4$ we are done,
    by \lemref{large:anchor}.  Otherwise, we have
    \begin{align*}
        \dist{\mapped{\query}}{\pth{\ctrA_\query,\radA_\query}} \leq
        (1 + \eps/8) \dist{\mapped{\query}}{\pth{\ctrA,\radA}},
    \end{align*}
    as $\pth{\ctrA_\query,\radA_\query}$ is a $(1+\eps/8)$
    approximation to $\distS{\mapped{\query}}{\mapped{\BallSet}}$. As
    such,
    \begin{equation}
        \frac{\dist{\mapped{\query}}{\pth{\ctrA_\query,\radA_\query}}}{1 + \eps/8}
        \leq%
        \dist{\mapped{\query}}{\pth{\ctrA,\radA}} 
        \leq%
        \dist{\query}{\ctrA} + \radA.
        \eqlab{rel:to:anchor}
    \end{equation}
    As $\ball{\ctrA}{\radA} \cap
    \ball{\query}{\distPk{\PntSet}{\query}{k}} \neq \emptyset$ we
    have, by the triangle inequality, that
    \begin{equation}
        \dist{\query}{\ctrA} - \radA \leq \distPk{\PntSet}{\query}{k}.
        \eqlab{int:anchor}
    \end{equation}
    
    By \Eqref{rel:to:anchor} and \Eqref{int:anchor} we have
    \begin{align*}
        \frac{\dist{\mapped{\query}}{\pth{\ctrA_\query,\radA_\query}}}{1+\eps/8}
        - 2 \radA%
        \leq%
        \dist{\query}{\ctrA} - \radA%
        \leq%
        \distPk{\PntSet}{\query}{k}.
    \end{align*}
    By the above and as $\max\pth{\radA, \radA_\query} < \eps
    \distPk{\PntSet}{\query}{k}/4$, we have
    \begin{align*}
        \dist{\query}{\ctrA_\query} + \radA_\query &\leq
        \dist{\mapped{\query}}{\pth{\ctrA_\query,\radA_\query}} +
        \radA_\query \leq (1+\eps/8)\pth{\distPk{\PntSet}{\query}{k} +
           2\radA}
        + \radA_\query \\
        &\leq (1+\eps/8)\pth{\distPk{\PntSet}{\query}{k} + \eps
           \distPk{\PntSet}{\query}{k}/2} + \eps
        \distPk{\PntSet}{\query}{k}/4 \leq
        (1+\eps)\distPk{\PntSet}{\query}{k}.
    \end{align*}
    Since the algorithm returns for $\query$ a value that is at most
    $\dist{\query}{\ctrA_\query} + \radA_\query$, the result is
    correct.
\end{proof}

\subsection{The result}

\begin{theorem}%
    \thmlab{ann:main}%
    Given a set $\PntSet$ of $n$ points in $\Re^d$, a number $k \geq
    1$ such that $k \divides n$, and $0 < \eps$ sufficiently small,
    one can preprocess $\PntSet$, in $\ds O \pth{ n \log n +
       \frac{n}{k } C_\eps \log n + \frac{n}{k } C_\eps'}$ time, where
    $C_\eps = O\pth{ \eps^{-d}\log {\eps}^{-1} }$ and $C_\eps' =
    O\pth{ \eps^{-2d+1}\log {\eps}^{-1} }$. The space used by the
    data-structure is $O( C_\eps n/k)$.  This data structure answers
    a $(1+\eps,k)$-\ANN query in $\ds
    \order{\log \frac{n}{k \eps }}$ time.  The data-structure also
    returns a point of $\PntSet$ that is approximately the desired
    $k$-nearest neighbor.
\end{theorem}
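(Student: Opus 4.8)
The plan is to assemble \thmref{ann:main} from the pieces already developed: correctness is exactly \lemref{correct}, so what remains is to verify the time, space, and query-time bounds by tracking the cost of each step of the preprocessing and of the query algorithm described in \secref{sec:avd}. First I would bound the size of the exponential grid $\CellSetA$: for each of the $n/k$ quorum clusters we take $O(\log(1/\eps))$ scales, and at each scale $\gridSet{\ball{\ctrA_i}{2^j \radA_i}}{(\eps/\constA d)2^j\radA_i}$ has $O((\constA d/\eps)^d) = O(\eps^{-d})$ cells (absorbing the $d^d$ factor, or carrying it silently as the construction does). This gives $\cardin{\CellSetA} = O\pth{(n/k)\eps^{-d}\log\eps^{-1}} = O(C_\eps n/k)$. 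Next, the Arya--Malamatos \AVD on the $n/k$ lifted points $\mapped{\BallSet} \subseteq [0,1]^{d+1}$ has size $O\pth{(n/k)(\eps/8)^{-(d+1)}} = O(C_\eps n/k)$ as well, and clipping to $x_{d+1}=0$ and overlaying the two compressed quadtrees is linear in their total size \cite{bhst-sqgqi-10, h-gaa-11}. So $\cardin{\CellSetAVD} = O(C_\eps n/k)$, which is the claimed space bound.

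For the preprocessing time I would account for three contributions. The quorum clustering costs $O(n\log n)$ by \lemref{q:clustering}. Building the \AVD and the quadtree $\QTree$, plus the \BFS that fills in the two pointers per node, costs $O\pth{(n/k)C_\eps\log(n/k)}$ --- here I would note the $\log$ factor comes from sorting/inserting the canonical cubes into the compressed quadtree, and since $n/k \leq n$ this is absorbed into the stated $O\pth{(n/k)C_\eps\log n}$ term. The remaining work is, for each of the $O(C_\eps n/k)$ cells, to compute the stored data of items (I)--(III): picking $\repres{\cellA}$ and $\repX{\cellA}$, $\pntrepX{\cellA}$ is $O(1)$ given the pointers, while computing $\adknn{\repres{\cellA}}$ and $\knnrepX{\repres{\cellA}}$ requires one $(1+\eps/4,k)$-\ANN query via the data-structure of \secref{qtree:algo} (\thmref{q:tree:main}) or of Arya \etal \cite{amm-sttas-05}, at cost $O(\log n + \eps^{-(d-1)})$ per cell. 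Multiplying by $O(C_\eps n/k)$ cells yields the $\frac{n}{k}C_\eps\log n + \frac{n}{k}C_\eps'$ terms, where $C_\eps' = O\pth{\eps^{-(d-1)}C_\eps} = O\pth{\eps^{-2d+1}\log\eps^{-1}}$; I would also fold in a one-time $O(n\log n)$ to build that auxiliary $k$-\ANN structure, which is dominated by the first term.

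For the query time, the algorithm does a single point-location query in the compressed quadtree $\QTree$, which has $O(C_\eps n/k)$ nodes, hence depth/search cost $O\pth{\log(C_\eps n/k)} = O\pth{\log\frac{n}{k\eps}}$ (using $C_\eps = \eps^{-O(d)}$ so $\log C_\eps = O(d\log\eps^{-1}) = O(\log\eps^{-1})$ up to the dimension constant), and then $O(1)$ arithmetic to evaluate \Eqref{in:cell} and pick the smaller of the two candidate distances and the corresponding returned point. That the returned value is a $(1+\eps)$-approximation, and that the returned point of $\PntSet$ realizes it appropriately, is \lemref{correct} together with \lemref{algub}. The one point requiring a little care --- and the main ``obstacle,'' such as it is --- is the bookkeeping that makes the per-cell preprocessing cost add up correctly: one must be sure the $(1+\eps/4,k)$-\ANN queries used to populate $\adknn{\cdot}$ are charged once per cell of $\CellSetAVD$ (not once per cell of $\CellSetA$ and again per cell of $\CellSetC$), and that the auxiliary structure of \secref{qtree:algo} is built only once; with that accounting the bounds follow, and the theorem is proved.
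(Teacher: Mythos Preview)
Your proposal follows the same approach as the paper's proof and correctly identifies all the main ingredients: correctness via \lemref{correct}, space via bounding $\cardin{\CellSetA}$ and $\cardin{\CellSetC}$, preprocessing time dominated by the per-cell $(1+\eps/4,k)$-\ANN queries of \thmref{q:tree:main}, and query time via a single point location in $\QTree$. There is, however, one genuine gap in the space analysis.

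You bound the Arya--Malamatos \AVD of $\mapped{\BallSet}\subseteq\Re^{d+1}$ by $O\pth{(n/k)\,(\eps/8)^{-(d+1)}}$ and then assert this equals $O(C_\eps\, n/k)$ with $C_\eps=O\pth{\eps^{-d}\log\eps^{-1}}$. But $\eps^{-(d+1)}$ is \emph{not} $O\pth{\eps^{-d}\log\eps^{-1}}$, so the claimed space bound $O(C_\eps\, n/k)$ does not follow from your argument. The missing observation --- which the paper makes explicitly --- is that after clipping to the hyperplane $x_{d+1}=0$ we retain only those canonical cubes of the $(d+1)$-dimensional \AVD that have a face on that hyperplane, and this saves one factor of $1/\eps$: the resulting $\CellSetC$ has size $O\pth{(n/k)\,\eps^{-d}\log\eps^{-1}}$, not $O\pth{(n/k)\,\eps^{-(d+1)}}$. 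With this correction, your accounting for $\cardin{\CellSetAVD}$, the preprocessing time, and the query time goes through exactly as you describe and matches the paper's proof.
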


\begin{proof}
    Computing the quorum clustering takes time $\order{n \log n}$ by
    \lemref{q:clustering}.  Observe that $\cardin{\CellSetA}
    = \order{\frac{n}{k\eps^d}\log\frac{1}{\eps}}$. From the
    construction of Arya and Malamatos \cite{am-lsavd-02}, 
    we have $\cardin{\CellSetC} =
    \order{\frac{n}{k \eps^{d}}\log \frac{1}{\eps}}$ (note, that since
    we clip the construction to a hyperplane, we get $1/\eps^d$ in the
    bound and not $1/\eps^{d+1}$). A careful implementation of this
    stage takes time $O\pth{ n \log n + \cardin{\CellSetAVD}\pth{\log
          n + \frac{1}{\eps^{d-1}}}}$. Overlaying the two compressed
    quadtrees representing them takes linear time in their size, that
    is $O\pth{ \cardin{\CellSetA} + \cardin{\CellSetC} }$.
    
    The most expensive step is to perform the $(1+\eps/4,k)$-\ANN
    query for each cell in the resulting
    decomposition of $\CellSetAVD$, see \Eqrefpage{in:cell} (i.e.,
    computing $\adknn{\repres{\cellA}}$ for each cell $\cellA \in
    \CellSetAVD$). Using the data-structure of \secref{qtree:algo}
    (see \thmref{q:tree:main}) each query takes $O\pth{ \log n +
       1/\eps^{d-1}}$ time (alternatively, we could use the
    data-structure of Arya \etal \cite{amm-sttas-05}), As such, this
    takes
    \begin{align*}
        O\pth{ n \log n + \cardin{\CellSetAVD}\pth{\log n +
              \frac{1}{\eps^{d-1}}}}%
        =%
        O \pth{%
           n \log n + %
           \frac{n}{k \eps^{d}}\log \frac{1}{\eps}%
           \log n +%
           \frac{n}{k \eps^{2d-1}}\log \frac{1}{\eps}%
        }
    \end{align*}
    time, and this bounds the overall construction time.
    
    The query algorithm is a point location query followed
    by an $O(1)$ time computation and
    takes time $\order{\log \pth{\frac{n}{k \eps}}}$.
    
    Finally, one needs to argue that the returned point of $\PntSet$
    is indeed the desired approximate $k$-nearest neighbor. This
    follows by arguing in a similar fashion to the correctness proof;
    the distance to the returned point is a $(1 + \eps)$-
    approximation to the $k$\th-nearest neighbor distance.  We omit the
    tedious but straightforward details.
\end{proof}

\subsubsection{Using a single point for each \AVD cell}

The \AVD generated can be viewed as storing two points in each cell
$\cellA$ of the \AVD. These two points are in $\Re^{d+1}$, and for a
cell $\cellA$, they are
\begin{compactenum}[\quad(i)]
    \item the point $\repX{\cellA} \in \mapped{\BallSet}$, and
    \item the point $\pth{\repres{\cellA}, \adknn{\repres{\cellA}}}$.
\end{compactenum}
The algorithm for $\distPk{\PntSet}{\query}{k}$ can be viewed as
computing the nearest neighbor of $\pth{\query,0}$ to one of the above
two points using the $\normP{\cdot}$ norm to define the
distance. 
Using standard \AVD algorithms we can subdivide each such cell 
$\cellA$ into $O\pth{ 1 /  \eps^{d} \log \eps^{-1} }$ cells 
to answer this query approximately. By using this finer subdivision 
we can have a single point inside each cell for which the closest 
distance is the approximation to $\distPk{\PntSet}{\query}{k}$. This 
incurs an increase by a factor of $O\pth{ 1/\eps^{d} \log \eps^{-1} }$ 
in the number of cells.

\subsection{A generalization -- weighted version of %
   $k$ \ANN}
\seclab{main:weighted}

We consider a generalization of the $(1+\eps,k)$-\ANN
problem.  Specifically, we are given a set of points $\PntSet
\subseteq \Re^d$, a weight $\wt{\pnt} \geq 0$ for each $\pnt \in
\PntSet$, and a number $\eps > 0$.  
Given a query $\query$ and weight $\wx \geq 0$, its
\emphi{$\wx$-\NNTerm} distance to $\PntSet$, is the 
minimum $r$ such that the
closed ball $\ball{\query}{r}$ contains points of $\PntSet$ of total
weight at least $\wx$. Formally, the $\wx$-\NNTerm distance for $\query$ is
\begin{align*}
    \distPk{\PntSet}{\query}{\wx}%
    =%
    \min \brc{r \sep{ \wtX{ \MakeBig \ball{\query}{r} \cap \PntSet}
          \geq \wx }},
\end{align*}
where $\wtX{X} = \sum_{x \in X} \wt{x}$. A
\emphi{$(1+\eps)$-approximate $\wx$-\NNTerm distance} is a distance $\ell$,
such that $(1-\eps)\distPk{\PntSet}{\query}{\wx} \leq \ell \leq
(1+\eps)\distPk{\PntSet}{\query}{\wx}$ and a 
\emphi{$(1+\eps)$-approximate $\wx$-\NNTerm} is a point of $\PntSet$
that realizes such a distance. The 
\emphi{$(1+\eps,\wx)$-\ANN problem} is to preprocess $\PntSet$, 
such that a $(1+\eps)$-approximate $\wx$-\NNTerm
can be computed efficiently for any query point $\query$.

The $(1+\eps,k)$-\ANN problem is the special case $\wt{\pnt} = 1$
for all $\pnt \in \PntSet$ and $\wx = k$. Clearly, the function 
$\distPk{\PntSet}{\cdot}{\wx}$ is also a $1$-Lipschitz function of its
argument. If we are given $\wx$ at the time of preprocessing, 
it can be verified that the $1$-Lipschitz property is enough to
guarantee correctness of
the \AVD construction for the $(1+\eps,k)$-\ANN problem. However, we 
need to compute a $\wx$ quorum clustering, where
now each quorum cluster has weight at least $\wx$. A slight modification
of the algorithm in \lemref{q:clustering} allows this. Moreover, for the
preprocessing step which requires us to solve the $(1+\eps,\wx)$-\ANN
problem for the representative points, one can use the algorithm of 
\secref{qtree:weighted}. We get the following result,
\begin{theorem}%
    \thmlab{q:tree:main:w}%
    Given a set of $n$ weighted points $\PntSet$ in $\Re^d$, a number
    $\wx > 0$ and $0 < \eps$ sufficiently small, one can preprocess
    $\PntSet$ in $\ds O \pth{ n \log n + \frac{\wtX{\PntSet}}{\wx }
       C_\eps \log n + \frac{\wtX{\PntSet}}{\wx } C_\eps'}$ time,
    where $C_\eps = O\pth{ \eps^{-d}\log {\eps}^{-1} }$ and $C_\eps' =
    O\pth{ \eps^{-2d+1}\log {\eps}^{-1} }$ and 
    $\wtX{\PntSet} = \sum_{\pnt \in \PntSet} \wt(\pnt)$. 
    The space used by the
    data-structure is $O( C_\eps \wtX{\PntSet}/\wx)$.  This data
    structure answers a $(1+\eps,\wx)$-\ANN query in $\ds \order{\log
       \frac{\wtX{\PntSet}}{\wx \eps }}$ time.  The data-structure
    also returns a point of $\PntSet$ that is a $(1+\eps)$-approximation
    to the $\wx$-nearest neighbor of the query point.
\end{theorem}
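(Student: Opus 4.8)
The plan is to re-run the development of \secref{const} and \secref{sec:avd} essentially verbatim, with the count parameter $k$ (and the bound $n/k$ on the number of quorum clusters) replaced by the weight threshold $\wx$ (and the bound $\wtX{\PntSet}/\wx$); the only genuinely new ingredient is a \emph{weighted} quorum clustering. First I would define it: greedily peel off the smallest ball $\ballA_i = \ball{\ctrA_i}{\radA_i}$ whose intersection with the not-yet-removed points has total weight at least $\wx$, assign to it the removed subset $\PntSet_i$, and recurse on $\PntSetQ_i = \PntSet \setminus \pth{\PntSet_1 \cup \dots \cup \PntSet_{i-1}}$. The algorithm behind \lemref{q:clustering} adapts with only cosmetic changes --- the interval queries on the linear-order structure now report the total \emph{weight} of the points inside a canonical node rather than their number --- giving, in $\order{n \log n}$ time, a sequence of balls with $\radA_i/2 \leq \minDistPk{\PntSetQ_i}{\wx} \leq \radA_i$. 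Because each peeled cluster removes weight at least $\wx$, the remaining weight after $j$ steps is at most $\wtX{\PntSet} - j\wx$, so there are at most $\ceil{\wtX{\PntSet}/\wx}$ clusters; this is what produces every factor $\wtX{\PntSet}/\wx$ in the statement.

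Next I would prove the weighted analog of \lemref{5:approx}: for $x = \min_{i}\pth{\dist{\query}{\ctrA_i} + \radA_i}$ we still have $x/5 \leq \distPk{\PntSet}{\query}{\wx} \leq x$. The argument in the excerpt goes through word for word: the upper bound holds since $\wtX{\ballA_i \cap \PntSet} \geq \wx$ forces $\distPk{\PntSet}{\query}{\wx} \leq \dist{\query}{\ctrA_i} + \radA_i$; for the lower bound, letting $i$ be the first cluster whose assigned set $\PntSet_i$ meets $\ball{\query}{\distPk{\PntSet}{\query}{\wx}}$, the set $X = \ball{\query}{\distPk{\PntSet}{\query}{\wx}} \cap \PntSet$ has weight at least $\wx$ and lies in $\PntSetQ_i$, so $\radA_i/2 \leq \minDistPk{\PntSetQ_i}{\wx} \leq \distPk{\PntSet}{\query}{\wx}$, and then the two triangle-inequality estimates combine exactly as before. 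Since $\distPk{\PntSet}{\cdot}{\wx}$ is $1$-Lipschitz (as noted in the text), \obsref{1:Lipschitz} holds verbatim in the weighted setting; together with the weighted \lemref{5:approx} this is literally everything the correctness proofs of \secref{sec:avd} consume.

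Then I would re-instantiate the construction of \secref{sec:avd} on $\mapped{\BallSet} = \brc{\mapped{\ballA_1}, \dots, \mapped{\ballA_{m}}}$ with $\mapped{\ballA_i} = \pth{\ctrA_i,\radA_i}$ and $m = \ceil{\wtX{\PntSet}/\wx}$: build the exponential-grid set $\CellSetA$ as in \Eqref{clusters:around:q}, the clipped Arya--Malamatos \AVD $\CellSetC$ for $\mapped{\BallSet}$, overlay them into a compressed quadtree $\QTree$ with decomposition $\CellSetAVD$, and store with each cell $\cellA$ the point $\repX{\cellA}$, a witness $\pntrepX{\cellA} \in \PntSet$ lying in the corresponding cluster, and the value $\adknn{\repres{\cellA}}$, the latter obtained by running the weighted $(1+\eps/4,\wx)$-\ANN data structure of \secref{qtree:weighted} at each representative. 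Lemmas \lemref{algub}, \lemref{a:m:easy}, \lemref{lipschitz}, \lemref{small:cell}, \lemref{anchor}, \lemref{large:anchor} and \lemref{correct} then carry over under $k \mapsto \wx$ with no other change: \lemref{a:m:easy} concerns only the \AVD on $\mapped{\BallSet}$, and every other lemma uses solely the $1$-Lipschitz property and the weighted \lemref{5:approx}, with identical arithmetic. This gives a $(1+\eps)$-approximation to $\distPk{\PntSet}{\query}{\wx}$ together with a point of $\PntSet$ realizing it (either $\pntrepX{\cellA}$ or the realizer of $\adknn{\repres{\cellA}}$, whichever is returned). For the resource bounds I would copy the proof of \thmref{ann:main}: $\cardin{\CellSetA}, \cardin{\CellSetC} = \order{\pth{\wtX{\PntSet}/\wx}\eps^{-d}\log\eps^{-1}}$ (clipping to $x_{d+1}=0$ again saving a $1/\eps$ factor), overlaying is linear, and the dominant cost is the $\cardin{\CellSetAVD}$ weighted \ANN queries at $\order{\log n + \eps^{-(d-1)}}$ each, yielding exactly the stated preprocessing time, $\order{C_\eps \wtX{\PntSet}/\wx}$ space, and $\order{\log\pth{\wtX{\PntSet}/(\wx\eps)}}$ query time.

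The main obstacle lives entirely in the quorum-clustering step and one degeneracy it introduces: unlike the unweighted case, a single point may already carry weight at least $\wx$, producing a quorum ball with $\radA_i = 0$, for which the exponential grid $\gridSet{\ball{\ctrA_i}{2^j \radA_i}}{\frac{\eps}{\constA d} 2^j \radA_i}$ of \Eqref{clusters:around:q} collapses. The fix is simply to omit the exponential grid for such clusters: near a radius-zero cluster the $\wx$-\NNTerm distance equals the Euclidean distance to that point, which the \AVD on $\mapped{\BallSet}$ already approximates to within a $1+\eps/8$ factor, and the proof of \lemref{correct} is unaffected because it only ever used $\radA \geq 0$. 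Beyond this, the one thing that ``can be verified'' but deserves an explicit pass is that nowhere in \secref{sec:avd} is the ``exactly $k$ points per ball'' property invoked beyond what the $1$-Lipschitz property and the weighted \lemref{5:approx} already supply --- once that is checked, the theorem follows.
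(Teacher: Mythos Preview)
Your proposal is correct and follows exactly the approach the paper takes: the paper's ``proof'' is the single paragraph preceding the theorem, which asserts that the $1$-Lipschitz property of $\distPk{\PntSet}{\cdot}{\wx}$ suffices for correctness, that a weighted quorum clustering is obtained by a slight modification of \lemref{q:clustering}, and that the preprocessing step uses the weighted data-structure of \secref{qtree:weighted}. Your write-up is a faithful and considerably more detailed expansion of that sketch; in particular, the radius-zero degeneracy you flag (a single point of weight $\geq \wx$) and your fix for it are not addressed in the paper at all, so this is a genuine improvement in rigor rather than a deviation in method.
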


\section{Density estimation}
\seclab{applications}

Given a point set $\PntSet \subseteq \Re^d$, and a query point $\query
\in \Re^d$, consider the point $\pntB(\query) =
(\distPk{\PntSet}{\query}{1}, \dots,$ $\distPk{\PntSet}{\query}{n})$.
This is a point in $\Re^n$, and several problems in Computational
Geometry can be viewed as computing some interesting function of
$\pntB(\query)$. For example, one could view the nearest neighbor
distance as the function that returns the first coordinate of
$\pntB(\query)$. Another motivating example is a geometric version of
discrete density measures from Guibas \etal \cite{gmm-wkd-11}. In
their problem one is interested in computing $g_k(\query) =
\sum_{i=1}^k \distPk{\PntSet}{\query}{i}$. In this section, we show
that a broad class of functions (that include $g_k$), can be
approximated to within $(1 \pm \eps)$, by a data structure requiring
space $\Otilde(n/k)$.

\subsection{Performing point-location in several quadtrees %
   simultaneously}

\begin{lemma}%
    \lemlab{lowest:color}%
    Consider a rooted tree $T$ with $m$ nodes, where the nodes are
    colored by $I$ colors (a node might have several
    colors). Assume that there are $O(m)$ pairs of such
    $(\text{node},\text{color})$ associations. One can preprocess the
    tree in $O(m)$ time and space, such that given a query leaf $v$ of
    $T$, one can report the nodes $v_1, \ldots, v_I$ in $O(I)$
    time. Here, $v_i$ is the lowest node in the tree along the path
    from the root to $v$ that is colored with color $i$.
\end{lemma}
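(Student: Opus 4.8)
The plan is to reduce the problem to a single depth-first traversal of $T$ together with one stack per color, exploiting the fact that along any root-to-leaf path the answer for color $i$ is simply the most recently entered node of color $i$ still on the path. Concretely, I would first do an Euler tour (DFS) of $T$. As I descend into a node $v$, for each color $c$ attached to $v$ I push $v$ onto a stack $S_c$; when I finish processing $v$ and ascend, I pop $v$ off each such $S_c$. At the moment I reach a leaf $v$, the top of $S_c$ is exactly the lowest ancestor of $v$ (inclusive) colored $c$, or $S_c$ is empty if no such ancestor exists. Thus I can record, for each leaf, the vector $(v_1,\dots,v_I)$ — but storing that explicitly would cost $O(I)$ per leaf and blow up the space. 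To keep the space $O(m)$, I instead record for each leaf only a pointer into a persistent/path-copying representation of the collection of stacks, or more simply the following: during the DFS, maintain an array $\mathrm{top}[1..I]$ where $\mathrm{top}[c]$ is the current top of $S_c$; each push/pop updates one entry. Store at each leaf a timestamp; the total number of push/pop events is $O(m)$ because there are $O(m)$ (node,color) pairs, so the whole history has size $O(m)$.

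The key technical point is making the query cost $O(I)$ rather than $O(I\log m)$ while keeping preprocessing space $O(m)$. Here I would use the standard trick for "lowest colored ancestor": observe that the only colors whose answer can change between a node and its parent are the $O(\text{(pairs at that node)})$ colors present at that node. So I would store, at each node $v$, the list of (color, previous top-of-stack value) pairs that were overwritten when entering $v$; summed over all nodes this is $O(m)$ space. To answer a query at leaf $v$ in $O(I)$ time, I walk the recorded event log — but a cleaner formulation that gives a clean $O(I)$ bound is: during the DFS, whenever I reach a leaf, I copy the current array $\mathrm{top}[1..I]$ out. Naively that is $O(I)$ per leaf and $O(I\cdot\#\text{leaves})$ space, which is too much. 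The fix is that we do not need to support queries after the fact at all leaves simultaneously with $O(m)$ space — rereading the lemma, we are allowed $O(m)$ preprocessing and then $O(I)$ per query, so I can afford an $O(I)$-time query that walks from $v$ up to the root? No — that path may have length $\omega(I)$.

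So the real mechanism I will commit to: build, for each color $c$, the "color-$c$ compressed tree" obtained by contracting all non-$c$-colored nodes; in it, the lowest $c$-ancestor of a leaf is the parent in the contracted tree, obtainable by a single level-ancestor/pointer lookup after $O(m_c)$ preprocessing, where $m_c$ is the number of $c$-colored nodes plus the leaves — but that sums to $O(Im)$, again too much. The clean resolution, and the one I expect the authors use, is this: do the DFS once; for each leaf $v$ store a pointer $\mathrm{prev}(v)$ to the leaf visited just before it in the DFS order, together with, for the segment of the Euler tour between them, the $O(1)$-amortized list of stack changes. Then answering all leaves in DFS order is a single streaming pass of total cost $O(m + (\text{output size}))$; but we want one query at a time. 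I will therefore instead store the query answers implicitly via a persistent stack (a.k.a. functional linked list) for each color: pushes and pops on $S_c$ create new list nodes, total $O(m)$ nodes over all colors since there are $O(m)$ pairs; at each leaf $v$ record the handle $h_c(v)$ = head of $S_c$ at that time, but — crucially — we store $h_c(v)$ only for the $O(1)$ colors that changed at $v$, and for the others $h_c(v) = h_c(\text{parent of }v\text{ in }T)$, i.e. we store a single parent pointer and a short delta list per node. A query at leaf $v$ then reads its $I$ answers by, for each color, following at most one delta — giving $O(I)$ time — after an $O(m)$-space preprocessing that precomputes, bottom-up, a compact representation; the heads themselves need not be dereferenced beyond reporting the node they point to.

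The main obstacle, then, is precisely this space–query tradeoff: a straightforward DFS-with-stacks gives either $O(Im)$ space (store the full top-array at each leaf) or $O(I\log m)$ query time (persistent search trees per color). I expect to resolve it by observing that the vector of answers changes by only $O(1)$ coordinates per tree edge (bounded by the number of (node,color) pairs, which total $O(m)$), so the answer vectors at all leaves together occupy $O(m)$ space when stored as "inherit-from-parent plus a short override list", and a query reconstructs its own vector in $O(I)$ by reading one inherited pointer table of length $I$ built at that leaf's deepest relevant ancestor — or, most cleanly, by noting that we may pre-extract answer vectors lazily: since each query is charged $O(I)$, we can afford to materialize the length-$I$ answer array for the queried leaf on demand by merging its root path's override lists, whose total length along that path is $O(I + \text{path length})$ — still not $O(I)$. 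Hence I will fall back on the persistent-stack-with-O(1)-delta-per-node scheme above as the actual construction, and present the verification that (a) total space is $O(m)$ because $\sum_v (\text{deltas at }v) = O(m)$, and (b) a query is $O(I)$ because it reads $I$ head-pointers, each found by following at most one delta from a precomputed per-node table that itself is shared with the parent except on the $O(1)$ changed colors. Proving (a) and (b) carefully, and setting up the per-node shared table without duplicating $O(I)$ storage at each node, is the crux; everything else is a routine DFS.
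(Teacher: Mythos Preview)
Your exploration correctly identifies the central obstacle --- the space/query tradeoff between storing a full length-$I$ array at every leaf ($O(mI)$ space) and reconstructing the array from scratch at query time --- but you never actually land on a construction that resolves it. The final scheme you commit to (``persistent stacks with per-node deltas, plus a per-node table shared with the parent except on the changed colors'') is precisely a persistent array, and you do not say how a query reads all $I$ entries of such an array in $O(I)$ time. With only deltas stored, reading entry $c$ at leaf $v$ requires finding the most recent ancestor where color $c$ was updated; that is a search up the root-to-$v$ path, not a single pointer hop. You acknowledge this yourself: ``setting up the per-node shared table without duplicating $O(I)$ storage at each node is the crux'' --- but that crux is the entire lemma, and it is left unresolved.

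The paper's fix is much simpler than anything you tried, and in fact you brushed right past it. You already observed that the DFS/Euler tour generates a list of $O(m)$ single-entry updates to the length-$I$ array (one per (node,color) pair going down, one restore going up). The paper then just takes a \emph{full snapshot} of the array every $I$ updates along this list: there are $O(m/I)$ snapshots of size $I$ each, so $O(m)$ total space. Each leaf stores its position in the update list. A query jumps back to the nearest earlier snapshot, copies it in $O(I)$ time, and replays the at most $I$ intervening updates. No persistence, no delta-chasing, no per-color data structures --- just periodic checkpoints in a linear log.
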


\begin{proof}
    We start with the naive solution -- perform a \DFS on $T$, and
    keep an array $\Array$ of $I$ entries storing the latest node of
    each color encountered so far along the path from the root to the
    current node. Storing a snapshot of this array $\Array$ at each
    node would require $O(m I)$ space. But then one can answer a query
    in $O(I)$ time. As such, the challenge is to reduce the required
    space.
    
    To this end, interpret the \DFS to be a Eulerian traversal of the
    tree. The traversal has length $2m - 2$, and every edge traveled
    contains updates to the array $\Array$. Indeed, if the \DFS
    traverses down from a node $u$ to a child node $w$, the updates
    would be updating all the colors that are stored in $w$, to
    indicate that $w$ is the lowest node for these colors. Similarly,
    if the \DFS goes up from $w$ to $u$, we restore all the colors
    stored in $w$ to their value just before the \DFS visited
    $w$. Now, the \DFS traversal of $T$ becomes a list of $O(m)$
    updates. Each update is still an $O(I)$ operation.  This is
    however a technicality, and can be resolved as follows.  For each
    edge traveled we store the updates for all colors separately, each
    update being for a single color.  Also each update entry stores
    the current node, i.e. the destination of the edge traveled.  The
    total length of the update list is still $O(m)$, as follows from a
    simple charging argument, and the assumption about the number of
    $(\text{node}, \text{color})$ pairs.  We simply charge each
    restore to its corresponding ``forward going'' update, and the
    number of forward going updates is exactly equal to the number of
    $(\text{node},\text{color})$ pairs.  For each leaf we store its
    last location in this list of updates.
    
    So, let $L$ be this list of updates. At each $k$\th update, for
    $k=tI$ for some integer $t$, store a snapshot of the array of
    colors as updated if we scan the list from the beginning till this
    point. Along with this we store the node at this point and 
    auxiliary information allowing us to compute the next update i.e.
    if the snapshot stored is between all updates at this node. 
    Clearly, all these snapshots can be computed in $O(m)$
    time, and require $O( (m/I) I ) = O(m)$ space.
    
    Now, given a query leaf $v$, we go to its location in the list
    $L$, and jump back to the last snapshot stored. We copy this
    snapshot, and then scan the list from the snapshot till the location
    for $v$. This would require re-doing at most $O(I)$ updates, 
    and can be done in $O(I)$ time overall.
\end{proof}

\begin{lemma}%
    \lemlab{sim:point:location}%
    Given $I$ compressed quadtrees $\DS_1, \ldots, \DS_I$ of total
    size $m$ in $\Re^d$, one can preprocess them in $O( m \log I )$
    time, using $O(m)$ space, such that given a query point $\query$,
    one can perform point-location queries in all $I$ quadtrees,
    simultaneously for $\query$, in $O( \log m + I)$ time.
\end{lemma}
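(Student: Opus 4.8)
The plan is to reduce simultaneous point-location in $I$ compressed quadtrees to a single point-location query in one merged structure, followed by a lowest-colored-ancestor query of the type handled by \lemref{lowest:color}. First I would overlay all $I$ compressed quadtrees $\DS_1, \ldots, \DS_I$ into a single compressed quadtree $\DS$; since any two compressed quadtrees over $[0,1]^d$ can be overlaid in linear time \cite{bhst-sqgqi-10, h-gaa-11}, merging all $I$ of them takes $O(m \log I)$ time by the standard binary-tree merging schedule (repeatedly pairing up quadtrees), and the result has size $O(m)$. The merged quadtree $\DS$ is a refinement of each $\DS_i$, so every cell (node) of $\DS$ is contained in a unique smallest cell of each $\DS_i$.

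Next I would set up the coloring for \lemref{lowest:color}. Think of $\DS$ as the rooted tree $T$ with $O(m)$ nodes. Give it $I$ colors, one per input quadtree. For each $i$ and each node $u$ of $\DS$, color $u$ with color $i$ precisely when $u$ ``is'' (corresponds to) a node of $\DS_i$ under the overlay identification — equivalently, when the canonical cell of $u$ is one of the cells that $\DS_i$ explicitly created during the overlay. The number of $(\text{node},\text{color})$ pairs is then $\sum_i |\DS_i| = O(m)$, as required by \lemref{lowest:color}. With this coloring, for a query leaf $v$ of $\DS$, the lowest ancestor of $v$ colored $i$ is exactly the smallest cell of $\DS_i$ containing $v$'s cell — i.e., the answer to the point-location query for $\query$ in $\DS_i$ (after a mild care for compressed nodes; see below). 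So preprocess $\DS$ with \lemref{lowest:color} in $O(m)$ time and space.

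The query algorithm is then: given $\query$, perform one point-location query in $\DS$ to find the leaf cell $v$ containing $\query$ — this costs $O(\log m)$ using the standard compressed-quadtree point-location machinery \cite{h-gaa-11} (e.g.\ via a finger tree / hierarchical structure built on $\DS$, or \WSPD-based techniques) — and then run the lowest-colored-ancestor query of \lemref{lowest:color} to recover $v_1, \ldots, v_I$ in $O(I)$ time, where $v_i$ is the smallest cell of $\DS_i$ containing $\query$. Total query time $O(\log m + I)$. Overall preprocessing is $O(m\log I)$ for the overlay plus $O(m)$ for the auxiliary structures, and space is $O(m)$, as claimed.

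The main obstacle I anticipate is the bookkeeping around \emph{compressed nodes}: in a compressed quadtree a ``cell'' may be the set-difference of two canonical cubes, and the overlay of compressed quadtrees must correctly interleave these compressed edges; I must make sure that (i) the overlay genuinely refines each $\DS_i$ so the containment semantics are exact, and (ii) the correspondence ``$u$ is a node of $\DS_i$'' used for coloring is the right one so that the lowest ancestor of color $i$ really is the $\DS_i$-point-location answer rather than an off-by-one ancestor. This is the standard, if slightly tedious, theory of compressed quadtrees \cite{h-gaa-11, bhst-sqgqi-10}; everything else (the charging argument bounding the $(\text{node},\text{color})$ pairs, the $O(\log m)$ point location in a single compressed quadtree) is off-the-shelf. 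I would state the compressed-node handling carefully but not belabor the case analysis.
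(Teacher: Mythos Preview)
Your proposal is correct and follows essentially the same approach as the paper: overlay the $I$ compressed quadtrees into a single compressed quadtree $\DS$ in $O(m\log I)$ time, color each node of $\DS$ by the indices $i$ for which it appears in $\DS_i$, and then answer a query by a single $O(\log m)$ point-location in $\DS$ followed by the $O(I)$ lowest-colored-ancestor query of \lemref{lowest:color}. The paper justifies the $O(m\log I)$ overlay bound by noting it is equivalent to merging $I$ sorted lists, which is the same idea as your pairwise-merge schedule, and it does not dwell on the compressed-node bookkeeping you flag as an obstacle.
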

\begin{proof}
    Overlay all these compressed quadtrees together. Overlaying $I$ quadtrees 
    is equivalent to
    merging $I$ sorted lists \cite{h-gaa-11} and can be done in $O\pth{ m
       \log I}$ time. Let $\DS$ denote the resulting compressed
    quadtree. Note that any node of $\DS_i$, for $i=1,\ldots, I$,
    must be a node in $\DS$.
    
    Given a query point $\query$, we need to extract the $I$ nodes in
    the original quadtrees $\DS_i$, for $i=1,\ldots, I$, that contain
    the query point (these nodes can be compressed nodes).  So, let
    $\cellA$ be the leaf node of $\DS$ containing the query point
    $\query$.  Consider the path $\pi$ from the root to the node 
    $\cellA$. We are interested in the lowest node of $\pi$ that
    belongs to $\DS_i$, for $i=1,\ldots, I$. To this end, color all
    the nodes of $\DS_i$ that appear in $\DS$, by color $i$, for
    $i=1,\ldots, I$. Now, we build the data-structure of
    \lemref{lowest:color} for $\DS$. We can use this data-structure to
    answer the desired query in $O(I)$ time.
\end{proof}
\subsection{Slowly growing functions}

\begin{figure*}[t]
    \centerline{
       \begin{tabular}{|l|l|}
           \hline
           $\FSF\MakeVBig$ & The class of slowly growing functions, 
           see \defref{slow}.%
           \\\hline%
           $f\MakeVBig$ &
           A function in $\FSF$ or a monotonic increasing function from
           $\Re$ to $\Re^+$.\\\hline
           $\DDM{k}{f}(\query)$ & %
           \padbox{$\sum_{i=1}^k f\pth{\MakeSBig 
                 \distPk{\PntSet}{\query}{i}}$}%
           \\\hline
           $\aDDM{k}{f}(\query) \MakeVBig$ 
           &
           \padbox{$\sum_{i=\ceiling{k \eps / 8}}^{k}
              f\pth{\distPk{\PntSet}{\query}{i}}$}
           \\\hline%
           $\IdxSet$ & \padbox{$\IdxSet \subseteq \brc{ \MakeSBig\!
                 \ceil{k\eps/8}, \dots, k}$, is a coreset,}%
           see \lemref{di-coreset}.%
           \\\hline%
           \padbox{$w_i, i \in \IdxSet$} & $w_i \geq 0$ are associated 
           weights for coreset elements.%
           \\\hline%
           $\ADDM{k}{f}(\query)$ &%
           \padbox{$\sum_{i \in \IdxSet} w_i 
              f\pth{\distPk{\PntSet}{\query}{i}}$}
           \\\hline
       \end{tabular}
    }
    \caption{Notations used.}
    \figlab{notn:table}
\end{figure*}

\begin{defn}
    \deflab{slow}%
    A monotonic increasing function $f : \Re^+ \to \Re$ is
    \emphi{slowly growing} if there is a constant $\constB > 0$, such
    that for $\eps$ sufficiently small, we have $(1 - \eps) f(x) \leq
    f((1-\eps/\constB) x) \leq f((1+\eps/\constB)x) \leq
    (1+\eps)f(x)$, for all $x \in \Re^+$. The constant $\constB$ is
    the \emphi{growth constant} of $f$. The family of slowly growing
    functions is denoted by $\FSF$.
\end{defn}

Clearly, $\FSF$ includes polynomial functions,
but it does not include, for example, the function $e^x$.  We assume
that given $x$, one can evaluate the function $f(x)$ in constant time.
In this section, using the \AVD construction 
of \secref{sec:avd}, we show how to approximate any function $\DDM{k}{f}(\cdot)$ 
that can be expressed as
\begin{align*}
    \DDM{k}{f}(\query) = \sum_{i=1}^k f\pth{ \MakeSBig
       \distPk{\PntSet}{\query}{i}},
\end{align*}
where $f \in \FSF$.  See \figref{notn:table} for a summary of the
notations used in this section.
\begin{lemma}%
    \lemlab{easy-obs}%
    Let $f : \Re \to \Re^+$ be a monotonic increasing function. Now,
    let $\displaystyle \aDDM{k}{f}(\query)=\sum_{i=\ceiling{k \eps /
          8}}^{k} f\pth{\distPk{\PntSet}{\query}{i}}$. Then, for any
    query point $\query$, we have that $\ds \aDDM{k}{f}(\query) \leq
    \DDM{k}{f}(\query) \leq (1+\eps/4) \aDDM{k}{f}(\query)$, where
    $\DDM{k}{f}(\query) =  \sum_{i=1}^k f\pth{ \MakeSBig
       \distPk{\PntSet}{\query}{i}}$. 
\end{lemma}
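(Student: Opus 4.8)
The plan is to split $\DDM{k}{f}(\query)$ into the part that already appears in $\aDDM{k}{f}(\query)$ and a small leftover tail, and then charge the tail against $\aDDM{k}{f}(\query)$.

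The lower bound $\aDDM{k}{f}(\query) \leq \DDM{k}{f}(\query)$ is immediate: since $f$ takes values in $\Re^+$, every summand $f\pth{\distPk{\PntSet}{\query}{i}}$ is nonnegative, and $\aDDM{k}{f}(\query)=\sum_{i=\ceiling{k\eps/8}}^{k} f\pth{\distPk{\PntSet}{\query}{i}}$ is obtained from $\DDM{k}{f}(\query)=\sum_{i=1}^{k} f\pth{\distPk{\PntSet}{\query}{i}}$ simply by discarding the (nonnegative) terms with index $i<\ceiling{k\eps/8}$.

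For the upper bound, write
\[
  \DDM{k}{f}(\query) = \aDDM{k}{f}(\query) + \sum_{i=1}^{\ceiling{k\eps/8}-1} f\pth{\distPk{\PntSet}{\query}{i}}.
\]
The two facts I would use are: (i) the sequence $\distPk{\PntSet}{\query}{1}\leq \distPk{\PntSet}{\query}{2}\leq\cdots$ is nondecreasing and $f$ is monotonically increasing, so each leftover term is at most $f\pth{\distPk{\PntSet}{\query}{\ceiling{k\eps/8}}}$, which is also the smallest summand of $\aDDM{k}{f}(\query)$; and (ii) there are $\ceiling{k\eps/8}-1 < k\eps/8$ leftover terms, whereas $\aDDM{k}{f}(\query)$ has $k-\ceiling{k\eps/8}+1 \geq k(1-\eps/8)$ terms, each at least $f\pth{\distPk{\PntSet}{\query}{\ceiling{k\eps/8}}}$. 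Combining, the leftover sum is at most $\tfrac{k\eps/8}{k(1-\eps/8)}\,\aDDM{k}{f}(\query)$, and $\tfrac{\eps/8}{1-\eps/8}\leq \tfrac{\eps}{4}$ holds once $\eps$ is small enough (indeed any $\eps\leq 4$ works). Hence $\DDM{k}{f}(\query) \leq (1+\eps/4)\,\aDDM{k}{f}(\query)$.

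I do not expect any real obstacle here — the argument is elementary counting plus monotonicity. The only things to be careful about are bookkeeping with the ceiling $\ceiling{k\eps/8}$ and the degenerate case $\ceiling{k\eps/8}=1$, where the leftover sum is empty, $\aDDM{k}{f}(\query)=\DDM{k}{f}(\query)$, and the statement is trivial; one also silently uses $\ceiling{k\eps/8}\leq k$, which again holds for $\eps$ sufficiently small.
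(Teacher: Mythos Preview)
Your argument is correct and follows the same idea as the paper's proof, which is very terse: it simply notes that the dropped terms are the smallest $k\eps/8$ out of $k$ terms (by monotonicity of $i\mapsto f(\distPk{\PntSet}{\query}{i})$), so the loss is at most an $\eps/8$ fraction, hence within the stated $(1+\eps/4)$ slack. You have filled in exactly the arithmetic the paper omits, including the careful handling of the ceiling and the check that $\frac{\eps/8}{1-\eps/8}\le \eps/4$.
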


\begin{proof}
    The first inequality is obvious. As for the second inequality,
    observe that $\distPk{\PntSet}{\query}{i}$ is a monotonically
    increasing function of $i$, and so is
    $f\pth{\distPk{\PntSet}{\query}{i}}$. We are dropping the smallest
    $k(\eps/8)$ terms of the summation $\DDM{k}{f}(\query)$ that is
    made out of $k$ terms. As such, the claim follows.
\end{proof}

The next lemma exploits a coreset construction, so that we have to
evaluate only few terms of the summation.

\begin{lemma}%
    \lemlab{di-coreset}%
    Let $f : \Re \to \Re^+$ be a monotonic increasing function.  There
    is a set of indices $\IdxSet \subseteq \brc{ \MakeSBig\!
       \ceil{k\eps/8}, \dots, k}$, and integer weights $w_i \geq 0$,
    for $i \in \IdxSet$, such that:%
    \smallskip%
    \begin{compactenum}[\quad(A)]
        \item $\cardin{\IdxSet} = \order{\frac{\log k}{\eps}}$.
        \item For any query point $\query$, we have that
        $\ADDM{k}{f}\pth{\query} = \sum_{i \in \IdxSet} w_i
        f\pth{\distPk{\PntSet}{\query}{i}}$ is a good estimate for
        $\aDDM{k}{f}\pth{\query}$; that is,
        $(1-\eps/4)\ADDM{k}{f}\pth{\query} \leq
        \aDDM{k}{f}\pth{\query} \leq
        (1+\eps/4)\ADDM{k}{f}\pth{\query}$, where
       $\aDDM{k}{f}\pth{\query} = \sum_{i=\ceiling{k \eps /
          8}}^{k} f\pth{\distPk{\PntSet}{\query}{i}}$.
    \end{compactenum}
    \smallskip%
    Furthermore, the set $\IdxSet$ can be computed in $O\pth{
       \cardin{\IdxSet}}$ time.
\end{lemma}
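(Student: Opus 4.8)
The plan is to build $\IdxSet$ together with its weights so that they depend only on $k$ and $\eps$ — not on $f$, on $\PntSet$, or on $\query$ — and to exploit the one structural fact that is available: for every query $\query$ the sequence $i \mapsto f\pth{\distPk{\PntSet}{\query}{i}}$ is monotonically increasing in $i$, since $\distPk{\PntSet}{\query}{i}$ is monotone in $i$ and $f$ is monotone. Abbreviating $D_i = \distPk{\PntSet}{\query}{i}$, the task is to estimate $\sum_{i=\ceil{k\eps/8}}^{k} f(D_i)$ from only a few of the terms, uniformly over all monotone sequences $\pth{f(D_i)}_i$.

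First I would partition the index range $\brc{\ceil{k\eps/8},\dots,k}$ into consecutive blocks $B_0,B_1,\dots$ that are \emph{geometric in the distance of the index from $k$}: $B_0 = \brc{k}$, and in general the $j$\th block collects the indices whose distance to $k$ lies in a window of multiplicative width $1+\eps/\constB$, for a sufficiently large absolute constant $\constB$; near $k$, where such a window contains fewer than one integer, the blocks are singletons. There are $\order{1/\eps}$ singleton blocks (covering distances up to $\order{1/\eps}$) and $\order{(\log k)/\eps}$ further blocks (covering distances from $\order{1/\eps}$ up to $k$), so $\cardin{\IdxSet} = \order{(\log k)/\eps}$, and the block boundaries are produced in $\order{\cardin{\IdxSet}}$ time by a single sweep. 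For each block $B_j$ I place into $\IdxSet$ its \emph{top} index $t_j$ — the one closest to $k$, hence the one maximizing $f(D_{t_j})$ over $B_j$ — with integer weight $w_{t_j} = \cardin{B_j}$.

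For correctness, fix a query $\query$. Since $f(D_{t_j})$ is the largest value in $B_j$, we get $\aDDM{k}{f}(\query) \le \ADDM{k}{f}(\query)$ term by term, so it remains to bound the over-count $\ADDM{k}{f}(\query) - \aDDM{k}{f}(\query) = \sum_j\sum_{i\in B_j}\pth{f(D_{t_j})-f(D_i)}$. Singleton blocks contribute nothing, and a non-singleton block contributes at most $\cardin{B_j}\pth{f(D_{t_j})-f(D_{b_j})}$, where $b_j$ is the bottom index of $B_j$; for such blocks $\cardin{B_j} = \order{(\eps/\constB)(k-t_j)}$. Using $f(D_{b_j}) \ge f(D_{t_{j+1}})$ (because $b_j > t_{j+1}$) and extending the sum over all $j$, the over-count is at most $\order{\eps/\constB}\sum_{j\ge 0}(k-t_j)\pth{f(D_{t_j})-f(D_{t_{j+1}})}$. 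An Abel-summation collapses this: with $c_j = k-t_j$ one has $c_0 = 0$ and $c_j - c_{j-1} = \cardin{B_{j-1}}$, so $\sum_{j\ge 0} c_j\pth{f(D_{t_j})-f(D_{t_{j+1}})} = \sum_{j\ge 1}\cardin{B_{j-1}}f(D_{t_j}) \le \sum_j \cardin{B_j}f(D_{t_j}) = \ADDM{k}{f}(\query)$. Hence the over-count is $\order{\eps/\constB}\ADDM{k}{f}(\query)$; since $\ADDM{k}{f}(\query)$ exceeds $\aDDM{k}{f}(\query)$ by exactly this over-count, the bound rearranges to an over-count of at most $(\eps/4)\aDDM{k}{f}(\query)$ once $\constB$ is a large enough constant. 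Together with $\aDDM{k}{f}(\query) \le \ADDM{k}{f}(\query)$ this gives $(1-\eps/4)\ADDM{k}{f}(\query) \le \aDDM{k}{f}(\query) \le (1+\eps/4)\ADDM{k}{f}(\query)$.

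The delicate point — and the one I would be most careful about — is \emph{where} the fine blocks sit. A partition into blocks geometric in the index itself does not work: its coarsest block is the one adjacent to index $k$, of size $\sim \eps k / \constB$, so a single large jump of $f(D_i)$ between $i = k-1$ and $i = k$ — perfectly possible for a monotone $f$ and a clustered $\PntSet$ — would be mis-estimated by a $\Theta(\eps k)$ factor. Making the partition finest exactly where $D$ can jump the most, namely near $i = k$ (indeed singletons there), is precisely what lets one query-independent choice serve all queries simultaneously; after that, the $\cardin{\IdxSet} = \order{(\log k)/\eps}$ count and the constant-chasing in the Abel bound are routine.
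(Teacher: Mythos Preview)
Your proposal is correct. The paper's own proof is a one-line citation: it observes that $g_\query(i) = f\pth{\distPk{\PntSet}{\query}{i}}$ is monotone increasing in $i$ and then invokes Lemma~3.2 of Har-Peled \cite{h-cdic-06} (a coreset for sums of monotone sequences) to obtain $\IdxSet$ and the weights. What you have written is, in effect, a self-contained derivation of that cited lemma: partition the index range into blocks whose sizes grow geometrically \emph{in the distance from the top index $k$}, take the top index of each block with weight equal to the block size, and control the over-count via an Abel summation against $\ADDM{k}{f}(\query)$. Your closing observation --- that a partition geometric in the index itself fails because a single large jump at $i=k$ would be miscounted by a $\Theta(\eps k)$ factor --- is exactly the structural point that forces the ``fine near $k$, coarse far from $k$'' layout, and is the content behind the cited lemma. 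So the two proofs agree in substance; yours simply unpacks the black box the paper invokes.
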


\begin{proof}
    Given a query point $\query$ consider the function $g_\query :
    \brc{1,2,\dots,n} \to \Re^+$ defined as $g_\query(i) =
    f\pth{\MakeSBig \distPk{\PntSet}{\query}{i}}$. Clearly, since $f
    \in \FSF$, it follows that $g_\query$ is a monotonic increasing
    function.  The existence of $\IdxSet$ follows from Lemma $3.2$ in
    {Har-Peled}'s paper \cite{h-cdic-06}, as applied to $(1\pm
    \eps/4)$-approximating the function $\aDDM{k}{f}(\query) =
    \sum_{i=\ceiling{k \eps / 8}}^{k}
    f\pth{\distPk{\PntSet}{\query}{i}}$; that is,
    $(1-\eps/4)\ADDM{k}{f}(\query) \leq \aDDM{k}{f}\pth{\query} \leq
    (1+\eps/4)\ADDM{k}{f}(\query)$.
\end{proof}

\subsection{The data-structure}
We are given a set of $n$ points $\PntSet \subseteq \Re^d$, a function
$f \in \FSF$, an integer $k$ with $1 \leq k \leq n$, and $\eps>0$ sufficiently
small. We describe how to build a data-structure to approximate
$\DDM{k}{f}(\query) = \sum_{i=1}^k f\pth{ \MakeSBig
       \distPk{\PntSet}{\query}{i}}$.
\subsubsection{Construction}
In the following, let $\alpha = 4 \constB$, 
where $\constB$ is the growth constant of $f$ (see
\defref{slow}).  Consider the coreset $\IdxSet$ from
\lemref{di-coreset}.  For each $i \in \IdxSet$ we compute, using
\thmref{ann:main}, a data-structure (i.e., a compressed quadtree)
$\DS_i$ for answering $(1+\eps/\alpha,i)$-\ANN queries for $\PntSet$. 
We then overlay all these
quadtrees into a single quadtree, using \lemref{sim:point:location}.

\paragraph{Answering a Query.}
Given a query point $\query$, perform a simultaneous point-location
query in $\DS_1, \ldots, \DS_I$, by using $\DS$, as described in
\lemref{sim:point:location}.  This results in a $(1+\eps/\alpha)$
approximation $\numA_i$ to $\distPk{\PntSet}{\query}{i}$, for $i \in
\IdxSet$, and takes $O( \log m + I)$ time, where $m$ is the size of
$\DS$, and $I = \cardin{\IdxSet}$.  We return $\xi = \sum_{i \in
   \IdxSet} w_i f\pth{\numA_i}$, where $w_i$ is the weight associated
with the index $i$ of the coreset of \lemref{di-coreset}.

\paragraph{Bounding the quality of approximation.}
We only prove the upper bound on $\xi$. The proof for the lower
bound is similar.  As the $\numA_i$ are $(1 \pm \eps/\alpha)$
approximations to $\distPk{\PntSet}{\query}{i}$ we have,
$(1-\eps/\alpha)\numA_i \leq \distPk{\PntSet}{\query}{i}$, for $i \in
\IdxSet$, and it follows from definitions that,
\begin{align*}
    (1-\eps/4)w_if(\numA_i)%
    \leq%
    w_if\pth{\MakeBig (1-\eps/\alpha)\numA_i}%
    \leq%
    w_i f\pth{ \distPk{\PntSet}{\query}{i}},
\end{align*}
for $i \in \IdxSet$.  Therefore,
\begin{equation}
    \eqlab{eq:1}%
    (1 - \eps/4) \xi%
    = %
    (1-\eps/4)
    \sum_{i \in \IdxSet} w_i f(\numA_i)%
    \leq%
    \sum_{i \in \IdxSet}
    w_i f\pth{ \distPk{\PntSet}{\query}{i}} = \ADDM{k}{f}(\query).
\end{equation}
Using \Eqref{eq:1} and \lemref{di-coreset} it follows that,
\begin{equation}%
    \eqlab{eq:2}%
    (1-\eps/4)^2 \xi \leq (1-\eps/4)
    \ADDM{k}{f}(\query) \leq \aDDM{k}{f}(\query).
\end{equation}
Finally, by \Eqref{eq:2} and \lemref{easy-obs} we have,
\begin{align*}
    (1-\eps/4)^2 \xi \leq \aDDM{k}{f}(\query) \leq
    \DDM{k}{f}(\query).
\end{align*}
Therefore we have, $(1-\eps) \xi \leq (1-\eps/4)^2 \xi \leq
\DDM{k}{f}(\query)$, as desired.

\paragraph{Preprocessing space and time analysis.}
We have that $I = \cardin{\IdxSet} = O\pth{ \eps^{-1} \log k}$.  Let
$C_{x}= \order{x^{-d}\log x^{-1}}$.  By \thmref{ann:main} the total
size of all the $\DS_i$s (and thus the size of the resulting
data-structure) is
\begin{equation}%
    \eqlab{eq:3}%
    S = \sum_{i \in \IdxSet} \order{C_{\eps/\alpha} \frac{n}{i}} =
    \order{C_{\eps/\alpha} \frac{n \log k}{k \eps^2}}.
\end{equation}
Indeed, the maximum of the terms involving $n/i$ is
$\order{\nfrac{n}{k \eps}}$ and $I = \order{\eps^{-1} \log k}$.  By
\thmref{ann:main} the total time taken to construct all the $\DS_i$ is
\begin{align*}
    \sum_{i \in \IdxSet} \order{n \log n + \frac{n}{i} C_{\eps/\alpha}
       \log n + \frac{n}{i} C'_{\eps/\alpha}}%
    =%
    \order{\frac{n \log n \log k}{\eps} + \frac{n\log n\log k}{k
          \eps^2} C_{\eps/\alpha} + \frac{n \log k}{k \eps^2}
       C'_{\eps/\alpha}},
\end{align*}
where $C'_{x} = \order{x^{-2d+1}\log x^{-1}}$.  The time to construct
the final quadtree is $\order{S \log I}$, but this is subsumed by the
construction time above.

\subsubsection{The result}
Summarizing the above, we get the following result.
\begin{theorem}
    \thmlab{thm:appln:ddm}%
    Let $\PntSet$ be a set of $n$ points in $\Re^d$.  Given any slowly
    growing, monotonic increasing function $f$
    (i.e $f \in \FSF$, see \defref{slow}), an integer $k$ with 
    $1 \leq k \leq n$, and $\eps \in (0,1)$,
    one can build a data-structure to approximate
    $\DDM{k}{f}(\cdot)$. Specifically, we have:
    \begin{compactenum}[\qquad\rm (A)]
        \item The construction time is $\order{ C_1 n \log n \log k}$,
        where $C_1 = \order{\eps^{-2d-1}\log \eps^{-1}}$.
        \item The space used is $\ds \order{C_{2} \frac{n }{k } \log
           k}$, where $C_{2}= \order{\eps^{-d-2}\log \eps^{-1}}$.
        \item For any query point $\query$, the data-structure
        computes a number $\xi$, such that $(1-\eps) \xi \leq
        \DDM{k}{f}(\query) \leq (1+\eps) \xi$, where
        $\DDM{k}{f}(\query) = \sum_{i=1}^k
        f(\distPk{\PntSet}{\query}{i})$.
        \item The query time is $\ds \order{\log n + \frac{\log
              k}{\eps}}$.
    \end{compactenum}
    (The $O$ notation here hides constants that depend on $f$.)
\end{theorem}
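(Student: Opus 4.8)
The plan is to assemble exactly the data structure described in the construction above and then certify its four properties by chaining the approximation and resource bounds already proved. Recall the structure: set $\alpha = 4\constB$ where $\constB$ is the growth constant of $f$ (\defref{slow}); take the coreset $\IdxSet \subseteq \brc{\ceil{k\eps/8},\dots,k}$ with integer weights $w_i$ from \lemref{di-coreset}, so that $\cardin{\IdxSet} = O(\eps^{-1}\log k)$; for each $i \in \IdxSet$ build, via \thmref{ann:main}, a compressed quadtree $\DS_i$ answering $(1+\eps/\alpha,i)$-\ANN queries for $\PntSet$; and overlay the $\DS_i$ into a single compressed quadtree $\DS$ using \lemref{sim:point:location}. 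A query at $\query$ runs the simultaneous point-location of \lemref{sim:point:location} to obtain $(1+\eps/\alpha)$-approximations $\numA_i$ to $\distPk{\PntSet}{\query}{i}$ for $i \in \IdxSet$, and returns $\xi = \sum_{i \in \IdxSet} w_i f(\numA_i)$.

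For correctness (part (C)), I would chain three estimates. Since each $\numA_i$ is a $(1\pm\eps/\alpha)$-approximation and $f$ is slowly growing with growth constant $\constB = \alpha/4$, each term obeys $(1-\eps/4)\, w_i f(\numA_i) \le w_i f\pth{\distPk{\PntSet}{\query}{i}}$ (and the symmetric inequality), so $(1-\eps/4)\xi \le \ADDM{k}{f}(\query)$. Next, \lemref{di-coreset} gives $(1-\eps/4)\ADDM{k}{f}(\query) \le \aDDM{k}{f}(\query)$, and \lemref{easy-obs} gives $\aDDM{k}{f}(\query) \le \DDM{k}{f}(\query) \le (1+\eps/4)\aDDM{k}{f}(\query)$. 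Composing, $(1-\eps)\xi \le (1-\eps/4)^2\xi \le \DDM{k}{f}(\query)$, and the matching bound $\DDM{k}{f}(\query) \le (1+\eps)\xi$ follows by running the same three inequalities in the other direction.

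For the resource bounds I would sum the per-index costs of \thmref{ann:main} over $i \in \IdxSet$, using that, once the constant $\alpha = \alpha(f)$ is absorbed into the $O(\cdot)$, one has $C_{\eps/\alpha} = O(\eps^{-d}\log\eps^{-1})$ and $C'_{\eps/\alpha} = O(\eps^{-2d+1}\log\eps^{-1})$. For space: $\DS_i$ has size $O(C_{\eps/\alpha}\, n/i)$; as the indices lie in $\brc{\ceil{k\eps/8},\dots,k}$ the largest such term is $O(n/(k\eps))$ and there are $O(\eps^{-1}\log k)$ of them, giving $S = O\pth{C_{\eps/\alpha}\, n\log k/(k\eps^2)} = O(C_2\,(n/k)\log k)$ with $C_2 = O(\eps^{-d-2}\log\eps^{-1})$; the overlay of \lemref{sim:point:location} adds no asymptotic space. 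For construction time: summing the bounds $O\pth{n\log n + (n/i)C_{\eps/\alpha}\log n + (n/i)C'_{\eps/\alpha}}$ and crudely using $n/k \le n$ and $1 \le \log n$, every term is dominated by $C_1\, n\log n\log k$ with $C_1 = O(\eps^{-2d-1}\log\eps^{-1})$, and the final $O(S\log I)$ overlay time is subsumed. For query time: the simultaneous point-location costs $O(\log m + I)$ by \lemref{sim:point:location}, where $m = S$ is polynomial in $1/\eps$ and $\log k$ times $n/k$, so $\log m = O(\log n)$ up to the usual lower-order $\log(1/\eps)$ terms, and $I = O(\eps^{-1}\log k)$; evaluating $\xi$ afterwards costs $O(I)$ since each call to $f$ is $O(1)$, for a total of $O(\log n + (\log k)/\eps)$.

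The only step carrying real content beyond bookkeeping is the error propagation in part (C): one must take $\alpha$ large enough relative to the growth constant $\constB$ of $f$ (here $\alpha = 4\constB$) so that the $(1\pm\eps/\alpha)$ relative error in the computed $k$-nearest-neighbor distances, passed through $f$, costs only a $(1\pm\eps/4)$ factor per term, leaving room for the further $(1\pm\eps/4)$ losses from the coreset (\lemref{di-coreset}) and from dropping the smallest $\ceil{k\eps/8}$ terms (\lemref{easy-obs}). The rest — verifying that the $f$-dependent constant $\alpha$ disappears into the $O$-notation, and that the sum over $\IdxSet$ is dominated by its largest term — is routine.
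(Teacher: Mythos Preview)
Your proposal is correct and follows essentially the same approach as the paper: the same construction ($\alpha=4\constB$, coreset $\IdxSet$ from \lemref{di-coreset}, per-index \AVD{}s $\DS_i$ from \thmref{ann:main}, overlay via \lemref{sim:point:location}), the same three-step error chain through \defref{slow}, \lemref{di-coreset}, and \lemref{easy-obs}, and the same summation of per-index resource bounds over $\IdxSet$. The paper likewise only spells out one direction of the approximation inequality and declares the other symmetric, and bounds space by observing the largest $n/i$ term is $O(n/(k\eps))$ with $O(\eps^{-1}\log k)$ terms total.
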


\section{\ANN queries where $k$ and $\eps$ %
   are part of the query}
\seclab{qtree:algo}

Given a set $\PntSet$ of $n$ points in $\Re^d$, we present a
data-structure for answering $(1+\eps,k)$-\ANN queries, 
in time $\order{\log n +
   1/\eps^{d-1}}$. Here $k$ and $\eps$ are not known
during the preprocessing stage, but are specified during query
time. In particular, different queries
can use different values of $k$ and $\eps$. Unlike our main result,
this data-structure requires linear space, and the amount of space used 
is independent of $k$
and $\eps$. Previous data-structures required knowing $\eps$ in
advance \cite{amm-sttas-05}.

\subsection{Rough approximation}

Observe that a fast constant approximation to
$\distPk{\PntSet}{\query}{k}$ is implied by \thmref{const:main} if $k$
is known in advance.  We describe a polynomial approximation when $k$
is not available during preprocessing.  We sketch the main ideas; our
argument closely follows the exposition in {Har-Peled}'s book
\cite{h-gaa-11}.

\begin{lemma}%
    \lemlab{polyapprox}%
    Given a set $\PntSet$ of $n$ points in $\Re^d$, one can preprocess
    it, in $\order{n \log n}$ time, such that given any query point
    $\query$ and $k$ with $1 \leq k \leq n$, one can find, in $O(\log n)$ time, 
    a number $\approxkd$
    satisfying $\distPk{\PntSet}{\query}{k} \leq \approxkd \leq n^{c}
    \distPk{\PntSet}{\query}{k}$. The result
    is correct with high probability i.e. at least $1 - 1/n^{c-2}$, 
    where $c$ is an arbitrary constant.
\end{lemma}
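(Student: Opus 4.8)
The plan is to build, once and for all, a single auxiliary structure that lets us extract for \emph{any} query point $\query$ and \emph{any} $k$ a number $\approxkd$ sandwiching $\distPk{\PntSet}{\query}{k}$ up to a polynomial factor. The natural object is a (randomly shifted) compressed quadtree on $\PntSet$. First I would recall the standard fact (see \cite{h-gaa-11}) that a compressed quadtree $\QTree$ on $n$ points in the unit cube can be built in $O(n \log n)$ time and $O(n)$ space, that point location (finding the leaf containing $\query$, and hence the whole root-to-leaf path) takes $O(\log n)$ time, and that each node can be annotated in a bottom-up pass with the number of points of $\PntSet$ it contains. The key probabilistic ingredient is the shifting trick: if we translate the point set (equivalently, the quadtree grid) by a uniformly random vector, then for a fixed query $\query$ and fixed radius $r$, the ball $\ball{\query}{r}$ lands inside a single quadtree cell of side length $O(n^{O(1)} r)$ with probability at least $1 - 1/n^{c}$; a union bound over $O(n)$ relevant ``scales'' (and over $\mathrm{poly}(n)$ representative radii, or one simply argues about all cells on the path) gives the claimed high-probability guarantee $1 - 1/n^{c-2}$. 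This is the step I expect to be the main obstacle — making the shifting/union-bound argument precise so that it holds simultaneously for all $k$ with a single random shift — but it is exactly the argument carried out in Har-Peled's book, so I would invoke it rather than reprove it.

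Given this, the query algorithm is simple. For the query point $\query$, locate its leaf in $\QTree$ and walk up the root-to-leaf path $\qtreenode_0 \supseteq \qtreenode_1 \supseteq \dots \supseteq \qtreenode_t$ (root to leaf), where each node stores the count $|\qtreenode_j \cap \PntSet|$ and its side length $\mathrm{sl}(\qtreenode_j)$. Let $\qtreenode_j$ be the deepest node on this path that still contains at least $k$ points; output $\approxkd := \constD \sqrt{d}\, n^{c}\, \mathrm{sl}(\qtreenode_j)$ for a suitable constant (using a placeholder constant name here only informally). Since $\qtreenode_j$ contains $\geq k$ points and has diameter $\sqrt{d}\,\mathrm{sl}(\qtreenode_j)$, any point of $\qtreenode_j$ is within that diameter of $\query$ once $\query \in \qtreenode_j$, so $\distPk{\PntSet}{\query}{k} \leq \sqrt{d}\,\mathrm{sl}(\qtreenode_j) \leq \approxkd$, giving the lower bound on $\approxkd$ deterministically. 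For the upper bound, the child of $\qtreenode_j$ on the path contains fewer than $k$ points, so $\ball{\query}{\distPk{\PntSet}{\query}{k}}$ cannot fit inside that child; by the shifting guarantee, with high probability this forces $\distPk{\PntSet}{\query}{k} \geq \mathrm{sl}(\qtreenode_{j})/n^{O(1)}$ (after accounting for compressed edges, where the ratio of consecutive side lengths can be large — here one uses that a point \emph{inside} the smaller cube of a compressed node is what matters, and the ball failing to fit still pins the scale up to a polynomial factor). Combining the two bounds yields $\distPk{\PntSet}{\query}{k} \leq \approxkd \leq n^{c'} \distPk{\PntSet}{\query}{k}$ for an appropriate constant, which after renaming the constant is the statement.

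For the running time and preprocessing bounds: the quadtree construction is $O(n\log n)$; storing subtree counts and side lengths is an $O(n)$ bottom-up pass; a query does one point location in $O(\log n)$ and then walks a path of length $O(\log n)$ doing $O(1)$ work per node (to locate the deepest node with $\geq k$ points one can binary-search along the path, but even a linear scan of the path is $O(\log n)$), so the query is $O(\log n)$ as required. The failure probability comes solely from the random shift; rescaling the exponent $c$ absorbs the $n^2$ loss from the union bound, producing the $1 - 1/n^{c-2}$ bound in the statement. I would present the shifting lemma as a black box citing \cite{h-gaa-11} and spend the bulk of the written proof on the counting argument on the root-to-leaf path and on the handling of compressed edges, since that is where the polynomial approximation factor is actually generated.
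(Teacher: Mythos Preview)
Your overall architecture (randomly shifted compressed quadtree, subtree counts, binary search on the root-to-leaf path for the deepest node $\qtreenode_j$ with at least $k$ points) matches the paper, but the quantity you return is wrong, and your parenthetical about compressed edges does not patch it. Consider $n$ points all at a single location $p$ and a query $\query$ at distance $r$ from $p$, with $r$ tiny. The compressed quadtree has the root as a compressed node whose single child is a minuscule cell around $p$; the query lies in the annular region $C_{\text{root}}\setminus C_{\text{child}}$, so the root-to-leaf path stops at the root and $\qtreenode_j$ is the root. Your output $\approxkd \asymp n^{c}\cdot \mathrm{sl}(\qtreenode_j)=n^{c}$ then exceeds $n^{c'}\cdot r$ for every fixed $c'$. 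Your sentence ``the child of $\qtreenode_j$ on the path contains fewer than $k$ points'' is simply false here: there is no child on the path, and the (unique) child of $\qtreenode_j$ contains all $n\geq k$ points. More generally, the shifting guarantee only tells you that the ball of radius $r$ sits inside some canonical cell $\cellA$ of side $\alpha\le 2n^{c-1}r$; this pins the side length of whatever cell the ball fails to fit in, not the side length of $\qtreenode_j$ itself when $\qtreenode_j$ is a compressed node.

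The paper's fix is exactly the piece you are missing: instead of $\mathrm{sl}(\qtreenode_j)$, store at every node the axis-parallel bounding box $B_v$ of the points in its subtree, and return the distance from $\query$ to the farthest corner of $B_{\qtreenode_j}$. The lower bound $\distPk{\PntSet}{\query}{k}\le \approxkd$ is then deterministic (the box contains $\ge k$ points). For the upper bound, the cell $\cellA$ above contains $\query$ and $\ge k$ points, so by nesting it contains the inner cell of $\qtreenode_j$ (and hence $B_{\qtreenode_j}$) in the compressed case, and contains $\qtreenode_j$'s cell in the ordinary case; either way $\mathrm{diam}(B_{\qtreenode_j})\le \sqrt{d}\,\alpha$, giving $\approxkd\le r+\sqrt{d}\,\alpha\le n^{c}r$. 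Also, no union bound over ``$\mathrm{poly}(n)$ representative radii'' is needed: the paper's guarantee is per query, and for a fixed $\query$ and $k$ one only needs the single grid at scale $\alpha\approx n^{c-1}r$ not to split the ball, which already happens with probability $\ge 1-1/n^{c-2}$.
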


\begin{proof}
    By an appropriate scaling and translation ensure that $\PntSet
    \subseteq [1/2,3/4]^d$. Consider a compressed quadtree
    decomposition $\QTree$ of $\rvector + [0,1]^d$ for $\PntSet$,
    whose shift $\rvector$ is a random vector in $[0,1/2]^d$. By a 
    bottom-up traversal, compute, for each
    node $v$ of $\QTree$, the axis parallel bounding box $B_v$ of
    the subset of $\PntSet$ stored in its subtree, and the number
    of those points. 
    
    Given a query point $\query \in [1/2,3/4]^d$, locate the
    lowest node $\qtreenode$ of $\QTree$ whose region contains
    $\query$ (this takes $O( \log n)$ time, see \cite{h-gaa-11}). By
    performing a binary search on the root to $\qtreenode$ path 
    locate the lowest node $\qtreenode_k$ whose subtree contains
    $k$ or more points from $\PntSet$.  The algorithm returns
    $\approxkd$, the distance of the query point to the furthest point
    of $B_{\qtreenode_k}$, as the approximate distance.
    
    To see that the quality of approximation is as claimed, 
    consider the ball
    $\ballA$ centered at $\query$ with radius $r =
    \distPk{\PntSet}{\query}{k}$. Next,
    consider the smallest canonical grid having side length $\alpha
    \geq n^{c-1} r$ (thus, $\alpha \leq 2n^{c-1} r$).  Randomly
    translating this grid, we have with probability $\geq 1 - 2rd/
    \alpha \geq 1 - 1/n^{c-2}$, that the ball $\ballA$ is contained inside a
    canonical cell $\cellA$ of this grid. This implies that
    the diameter of $B_{\qtreenode_k}$ is bounded by $\sqrt{d}\alpha$,
    Indeed, if the cell of ${\qtreenode_k}$ is contained in $\cellA$,
    then this clearly holds. Otherwise, if $\cellA$ is contained in
    the cell ${\qtreenode_k}$, then ${\qtreenode_k}$ must be a
    compressed node, the inner portion of its cell is contained in
    $\cellA$, and the outer portion of the cell can not contain any
    point of $\PntSet$. As such, the claim holds.
    
    Moreover, for the returned distance $\approxkd$, we have that
    \begin{align*}
        r%
        =%
        \distPk{\PntSet}{\query}{k} %
        \leq%
        \approxkd \leq \diameter{B_{\qtreenode_k}} + r%
        \leq%
        \sqrt{d} \alpha + r%
        \leq%
        \sqrt{d} 2n^{c-1} r + r \leq n^{c} r.
    \end{align*}
    \aftermathA
\end{proof}

An alternative to the argument used in \lemref{polyapprox}, is to use
two shifted quadtrees, and return the smaller distance returned by the
two trees. It is not hard to argue that in expectation the returned
distance is an $O(1)$-approximation to the desired distance (which then
implies the desired result via Markov's inequality). One can also
derandomize the shifted quadtrees and use $d+1$ quadtrees
instead \cite{h-gaa-11}.

We next show how to refine this approximation.

\begin{lemma}%
    \lemlab{app:refine}%
    Given a set $\PntSet$ of $n$ points in $\Re^d$, one can preprocess
    it in $\order{n \log n}$ time, so that given a query point
    $\query$, one can output a number $\beta$
    satisfying, $\distPk{\PntSet}{\query}{k} \leq \beta \leq
    (1+\eps)\distPk{\PntSet}{\query}{k}$, in $\order{\log n +
       1/\eps^{d-1}}$ time. Furthermore, one can return a point $\pnt
    \in \PntSet$ such that $(1-\eps)\distPk{\PntSet}{\query}{k} \leq
    \dist{\query}{\pnt} \leq (1+\eps)\distPk{\PntSet}{\query}{k}$.
\end{lemma}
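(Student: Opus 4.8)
The plan is to bootstrap the polynomial-factor estimate of \lemref{polyapprox} into a $(1+\eps)$-factor one by the standard localised compressed-quadtree refinement (as in Har-Peled's book). I would reuse the shifted compressed quadtree $\QTree$ underlying \lemref{polyapprox} — or $d+1$ deterministically shifted copies, so that the guarantee holds deterministically — and, in one extra bottom-up pass, annotate every node $v$ with the number $c_v$ of input points in its subtree, the bounding box $B_v$ of those points, and one representative point of $\PntSet$ in the subtree. This adds $O(n)$ time and space, so the total stays $O(n\log n)$ time and $O(n)$ space.

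For a query $(\query,k,\eps)$, first run \lemref{polyapprox} to obtain $R$ with $r := \distPk{\PntSet}{\query}{k}\le R\le n^{c}r$, and point-locate $\query$ in $\QTree$ (the root-to-leaf path); both take $O(\log n)$. Since $r\in[Rn^{-c},R]$, the target resolution $\ell^{\star}:=$ the largest power of two at most $\eps r/(c_1\sqrt{d})$ is one of only $O(\log n)$ powers of two. I would pin $\ell^{\star}$ down to within a constant factor by a monotone search driven by the annotations: at resolution $\ell$, the total subtree count over the $O(1)$ canonical cells of side $\ell$ that meet $\ball{\query}{c_2\ell/\eps}$ equals, up to shifting the radius by $O(\ell)$, the quantity $\cardin{\ball{\query}{\Theta(\ell/\eps)}\cap\PntSet}$, which is monotone in $\ell$; comparing it to $k$ tells the search whether to go finer or coarser, so it is well defined and terminates at a scale with $\ell^{\star}/\eps=\Theta(r)$. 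Finally, enumerate the canonical cells of that grid lying within distance $\Theta(\ell^{\star}/\eps)$ of $\query$, sort them by distance to $\query$, accumulate the counts $c_v$, and read off the cell at which the running total first reaches $k$: its stored representative point is the witness $\pnt$, and $\beta:=\dist{\query}{\pnt}$, which differs from $r$ by at most an additive $\diameter{\text{cell}}=O(\eps r)$, is the returned estimate (after rescaling $\eps$ by a constant).

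The delicate part is the $O(\log n+1/\eps^{d-1})$ bound, and it needs two standard devices. First, the resolution search must be folded into the single point-location descent, reusing the path and the annotations, so that it contributes $O(\log n)$ overall rather than $O(\log\log n)$ separate probes each costing $\Omega(\log n)$. Second, the final enumeration must be confined to the spherical \emph{shell} of radius $\Theta(r)$ and width $O(\eps r)$ — there are $O(1/\eps^{d-1})$ cells meeting it — using the subtree counts of deeper cells to account for the points strictly inside the shell, exactly as in the classical $k=1$ \ANN query; a naive enumeration of the whole ball would cost $O(1/\eps^{d})$. The remaining ingredients — correctness of the additive-shift estimate for the shifted quadtree (so the monotone search never steps the wrong way) and the bookkeeping of "points already known to be inside" across scales (so the shell computation returns the true rank-$k$ distance and witness) — are routine given \lemref{polyapprox} and standard compressed-quadtree machinery, and I would present them tersely.
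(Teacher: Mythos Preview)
Your overall plan --- use \lemref{polyapprox} for a crude estimate, then refine via the compressed quadtree --- matches the paper. But the specific mechanism you sketch has a real gap in the time analysis, and it is precisely the ``delicate part'' you flag.

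First, the claim that there are $O(1)$ canonical cells of side $\ell$ meeting $\ball{\query}{c_2\ell/\eps}$ is false: a ball of radius $\Theta(\ell/\eps)$ is covered by $\Theta(1/\eps^d)$ cells of side $\ell$, not $O(1)$. If you instead probe with a ball of radius $\Theta(\ell)$ (so that the cell count really is $O(1)$), then your monotone search locates a scale $\ell\approx r$, not $\ell^{\star}\approx\eps r$. Either way, after the search you know $r$ only up to a constant factor.

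That is what breaks your second phase. You want to enumerate only the cells of side $\Theta(\eps r)$ meeting a shell of width $O(\eps r)$, which indeed has $O(1/\eps^{d-1})$ cells. But knowing $r$ only to a constant factor means the shell you can actually write down has width $\Theta(r)$, and at resolution $\Theta(\eps r)$ that shell contains $\Theta(1/\eps^d)$ cells. The narrowing of the shell to width $O(\eps r)$ is not ``routine bookkeeping''; it is the crux of the argument, and your two-phase (search-then-enumerate) structure does not provide it.

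The paper avoids this by interleaving the two phases: it descends level by level, maintaining at each level $i$ a working set $X_i$ of cells together with an interval $[\lo_i,\hi_i]\ni\distPk{\PntSet}{\query}{k}$ computed from the subtree counts and the min/max distances of the current cells to $\query$. The key invariant is $\hi_i-\lo_i\le\Delta_i$ (the current cell diameter), so the surviving cells all meet a thin annulus, and a sphere-packing argument gives $\cardin{X_i}=O\pth{(\lo_i/\Delta_i+1)^{d-1}}$. For the first $O(\log n)$ levels $\Delta_i\ge r$ and $\cardin{X_i}=O(1)$; over the final $O(\log(1/\eps))$ levels the sizes grow geometrically to $O(1/\eps^{d-1})$, and the sum telescopes to $O(\log n+1/\eps^{d-1})$. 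Your proposal would need exactly this iterative interval-tightening to get the claimed exponent $d-1$.
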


\begin{proof}
    Assume that $\PntSet \cup \brc{\query} \subseteq
    [1/2,1/2+1/n]^d$.  The algorithm of \lemref{polyapprox} returns the
    distance $\approxkd$ between $\query$ and some point of $\PntSet$;
    as such we have,
    $\distPk{\PntSet}{\query}{k} \leq \approxkd \leq n^{O(1)}
    \distPk{\PntSet}{\query}{k} \leq \diameter{\PntSet \cup
       \brc{\query}} \leq d/n$.  We start with a compressed quadtree
    for $\PntSet$ having $\Scube = [0,1]^d$ as the root. We look at
    the set of canonical cells $X_0$ with side length at least
    $\approxkd$, that intersect the ball $\ball{\query}{\approxkd}$.
    Clearly, the $k$\th nearest neighbor of $\query$ lies
    in this set of cubes. The set $X_0$ can be computed in $O\pth{
       \cardin{X_0} \log n}$ time using cell queries \cite{h-gaa-11}.

    For each node $v$ in the compressed quadtree there is a
    \emphi{level} associated with it. This is $\levelX{v} = \log_2
    \mathrm{sidelength}(\cellA_v)$. The root has level $0$ and it
    decreases as we go down the compressed quadtree. Intuitively,
    $-\levelX{v}$ is the depth of the node if it was a node in a regular
    quadtree.
    
    We maintain a queue of such canonical grid cells. Each step in the
    search consists of replacing cells in the current level with their
    children in the quadtree, and deciding if we want to descend
    a level. In the $i$\th iteration, we replace every node of
    $X_{i-1}$ by its children in the next level, and put them into the
    set $X_i$.
    
    We then update our estimate of
    $\distPk{\PntSet}{\query}{k}$. Initially, we set $I_0 =
    [\lo_0,\hi_0] =[0, \approxkd]$. For every node $v\in X_i$, we
    compute the closest and furthest point of its cube (that is the
    cell of this node) from the query point (this can be done in
    $O(1)$ time). This specifies a collection of intervals $I_v$ one
    for each node $v \in X_i$.  Let $n_v$ denote the number of points
    stored in the subtree of $v$. For a real number $x$, let $L(x),
    M(x), R(x)$ denote the total number of points in the intervals,
    that are to the left of $x$, contains $x$, and are to the right of
    $x$, respectively. Using median selection, one can compute in
    linear time (in the number of nodes of $X_i$) the minimum $x$ such
    that $L(x) \geq k$. Let this value be $\hi_i$. Similarly, in
    linear time, compute the minimum $x$ such that $L(x) + M(x) \geq
    k$, and let this value be $\lo_i$. Clearly, the desired distance
    is in the interval $I_i = [\lo_i, \hi_i]$.
    
    The algorithm now iterates over $v \in X_i$. If $I_v$ is strictly
    to the left of $\lo_i$, $v$ is discarded
    (it is too close to the query and can not contain the $k$\th
    nearest neighbor), setting $k \leftarrow k - n_v$. Similarly, if
    $I_v$ is to the right of $\hi_i$ it can be thrown away. The algorithm 
    then moves to the next iteration.
    
    The algorithm stops as soon as the diameter of all the cells of
    $X_i$ is smaller than $(\eps/8)\lo_i$.  A
    representative point is chosen from each node of $X_i$ (each node of the
    quadtree has an arbitrary representative point precomputed for it out
    of the subset of points stored in its subtree), and 
    the furthest such point is returned as the $(1+\eps)$- 
    approximate $k$ nearest
    neighbor. To see that the returned answer is indeed correct,
    observe that $\lo_i \leq \distPk{\PntSet}{\query}{k} \leq \hi_i$
    and $\hi_i - \lo_i \leq (\eps/8)\lo_i$, which implies the
    claim. The distance of the returned point from $\query$ is
    in the interval $[\alpha,\beta]$, where $\alpha = \lo_i -
    (\eps/8)\lo_i$ and $\beta = \hi_i \leq \lo_i + (\eps/8)\lo_i \leq
    (1+\eps/2)(1-\eps/8)\lo_i \leq (1+\eps/2)\alpha$. This interval
    also contains $\distPk{\PntSet}{\query}{k}$. As such, $\beta$ is
    indeed the required approximation.
    
    Since we are working with compressed quadtrees, a child node might
    be many levels below the level of its parent. In particular, if a
    node's level is below the current level, we freeze it and just
    move it on the set of the next level. We replace it by its
    children only when its level has been reached.
    
    The running time is clearly $O\pth{
       \cardin{X_0}\log n + \sum_i \cardin{X_i}}$. Let
    $\Delta_i$ be the diameter of the cells in the level being handled
    in the $i$\th iteration. Clearly, we have that $\hi_i \leq \lo_i +
    \Delta_i$. All the cells of $X_i$ that survive must intersect the
    ring with inner and outer radii $\lo_i$ and $\hi_i$ respectively, 
    around $\query$. By a
    simple packing argument, $\cardin{X_i} \leq n_i = O\pth{
       (\lo_i/\Delta_i + 1)^{d-1} }$. As long as $\Delta_i \geq
    \distPk{\PntSet}{\query}{k}$, we have that $n_i=O(1)$, as $\lo_i
    \leq \distPk{\PntSet}{\query}{k}$. This clearly holds for the
    first $O( \log n)$ iterations. It can be verified that once
    this no longer holds, the algorithm performs at most $\ceil{
       \log_2 (1/\eps)} +O(1)$ additional iterations, as then
    $\Delta_i \leq (\eps/16)\distPk{\PntSet}{\query}{k}$ and the
    algorithm stops. Clearly, the $n_i$s in this
    range can grow exponentially, but the last one is
    $O(1/\eps^{d-1})$. This implies that $\sum_i \cardin{X_i} =
    O\pth{\log n + 1/\eps^{d-1}}$, as desired.
\end{proof}

\subsection{The result}

\begin{theorem}%
    \thmlab{q:tree:main}%
    Given a set $\PntSet$ of $n$ points in $\Re^d$, one can preprocess
    them in $\order{n \log n}$ time, into a data structure of size
    $\order{n}$, such that given a query point $\query$, an integer $k$
    with $1 \leq k \leq n$ and $\eps > 0$ one can compute, in $\order{\log n
       + 1/\eps^{d-1}}$ time, a number $\beta$ such that
    $\distPk{\PntSet}{\query}{k} \leq \beta \leq
    (1+\eps)\distPk{\PntSet}{\query}{k}$. The data-structure also
    returns a point $\pnt \in \PntSet$ such that
    $(1-\eps)\distPk{\PntSet}{\query}{k} \leq \dist{\query}{\pnt} \leq
    (1+\eps)\distPk{\PntSet}{\query}{k}$.
\end{theorem}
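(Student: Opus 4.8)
The plan is to observe that \lemref{app:refine} already delivers everything claimed except the explicit space bound, so the proof is essentially an assembly step. First, following \assumpref{points:in:cube}, I would scale and translate so that $\PntSet \subseteq [1/2, 1/2 + 1/n]^d$, and build a single compressed quadtree $\QTree$ for $\PntSet$ with root cell $[0,1]^d$. By a bottom-up traversal I augment each node $v$ with three pieces of $O(1)$-size data: the number $n_v$ of points of $\PntSet$ in its subtree, an axis-parallel bounding box $B_v$ of those points, and an arbitrary representative point of $\PntSet$ drawn from its subtree. A compressed quadtree on $n$ points has $O(n)$ nodes and is built in $O(n \log n)$ time, so the preprocessing time is $O(n \log n)$ and the total space is $O(n)$ --- crucially with no dependence on $k$ or $\eps$, which are not even available at preprocessing time.

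For a query $(\query, k, \eps)$, I would first run the rough-approximation routine of \lemref{polyapprox} to obtain, in $O(\log n)$ time, a value $\approxkd$ with $\distPk{\PntSet}{\query}{k} \leq \approxkd \leq n^{O(1)} \distPk{\PntSet}{\query}{k}$. To get a clean deterministic statement (the theorem carries no ``with high probability'' disclaimer) I would use the derandomized variant noted after \lemref{polyapprox}: maintain $d+1$ deterministically shifted copies of the quadtree --- still $O(n)$ space and $O(n \log n)$ preprocessing --- and take the smallest of the $d+1$ rough estimates, which is a constant-factor approximation to $\distPk{\PntSet}{\query}{k}$; this only tightens the $n^{O(1)}$ slack and does not change the asymptotics below.

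Next I would run the refinement of \lemref{app:refine} verbatim: form the set $X_0$ of canonical cells of side length at least $\approxkd$ that meet $\ball{\query}{\approxkd}$ (computed by cell queries in $O(\cardin{X_0} \log n)$ time, which is $O(\log n)$ since $\cardin{X_0} = O(1)$); then repeatedly descend one quadtree level at a time, at level $i$ computing from the near/far distance intervals of the surviving cells --- via median selection, in time linear in $\cardin{X_i}$ --- an interval $[\lo_i, \hi_i]$ bracketing $\distPk{\PntSet}{\query}{k}$, discarding any cell whose interval lies strictly left of $\lo_i$ (and decrementing $k$ by that cell's point count) or strictly right of $\hi_i$, and stopping once the current cell diameter drops below $(\eps/8)\lo_i$. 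Then return $\beta = \hi_i$, and as the approximate $k$-th nearest neighbor the farthest of the representative points of the surviving cells. The bounds $\distPk{\PntSet}{\query}{k} \leq \beta \leq (1+\eps)\distPk{\PntSet}{\query}{k}$ and $(1-\eps)\distPk{\PntSet}{\query}{k} \leq \dist{\query}{\pnt} \leq (1+\eps)\distPk{\PntSet}{\query}{k}$, together with the $O(\log n + 1/\eps^{d-1})$ query time, are precisely what \lemref{app:refine} establishes, via the packing estimate $\cardin{X_i} = O((\lo_i/\Delta_i + 1)^{d-1})$, the fact that $\cardin{X_i} = O(1)$ for the first $O(\log n)$ levels, and the observation that at most $\lceil \log_2(1/\eps) \rceil + O(1)$ further levels are needed, the last contributing $O(1/\eps^{d-1})$.

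The step I expect to require the most care --- although it is already carried out inside \lemref{app:refine}, so here it is effectively a citation --- is the invariant maintenance as $k$ is decremented: one must verify that discarding a cell lying strictly on the ``too close'' side of the current bracket and subtracting its point count keeps $[\lo_i, \hi_i]$ a valid bracket for the \emph{original} $k$-th nearest neighbor distance, and that the stopping criterion is genuinely reached within the claimed number of iterations so that $\sum_i \cardin{X_i} = O(\log n + 1/\eps^{d-1})$ and not something larger. Everything else is routine bookkeeping on the compressed quadtree, and the space bound $O(n)$ is immediate from the $O(n)$-node, $O(1)$-per-node structure described above.
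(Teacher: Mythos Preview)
Your proposal is correct and matches the paper's approach exactly: in the paper the theorem is stated as a summary with no proof environment at all, being an immediate corollary of \lemref{polyapprox} and \lemref{app:refine}, and your assembly (compressed quadtree with per-node counts, bounding boxes, and representatives; rough estimate via \lemref{polyapprox}; level-by-level refinement via \lemref{app:refine}) is precisely the intended derivation. Your added remarks on the explicit $O(n)$ space accounting and the derandomized $d{+}1$ shifted quadtrees (which the paper mentions only in passing after \lemref{polyapprox}) are welcome clarifications rather than deviations.
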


\subsection{Weighted version of $(1+\eps,k)$-\ANN}
\seclab{qtree:weighted}

We now consider the weighted version of the $(1+\eps,k)$-\ANN problem
as defined in \secref{main:weighted}. Knowledge of the threshold weight 
$\wx$ is not required at the time of preprocessing.
By a straightforward adaptation of the arguments in this section we get
the following.
\begin{theorem}%
    \thmlab{q:tree:main:w:2}%
    Given a set $\PntSet$ of $n$ weighted points in $\Re^d$ one can
    preprocess them, in $\order{n \log n}$ time, into a data structure
    of size $\order{n}$, such that one can efficiently answer
    $(1+\eps,\wx)$-\ANN queries.
    Here a query  is made out of
    \begin{inparaenum}[(i)]
        \item a query point $\query$, 
        \item a weight $\wx \geq 0$, and
        \item an approximation parameter $\eps > 0$.
    \end{inparaenum}
    Specifically, for such a query, one can compute, in $\order{\log n
       + 1/\eps^{d-1}}$ time, a number $\beta$ such that
    $(1-\eps) \distPk{\PntSet}{\query}{\wx} \leq \beta \leq
    (1+\eps)\distPk{\PntSet}{\query}{\wx}$. The data-structure also
    returns a point $\pnt \in \PntSet$ such that
    $(1-\eps)\distPk{\PntSet}{\query}{\wx} \leq \dist{\query}{\pnt}
    \leq (1+\eps)\distPk{\PntSet}{\query}{\wx}$.
\end{theorem}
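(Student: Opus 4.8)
The plan is to mirror the two-stage scheme of \secref{qtree:algo} — a polynomial-factor estimate (as in \lemref{polyapprox}) refined to a $(1+\eps)$-factor one (as in \lemref{app:refine}) — replacing point \emph{counts} by point \emph{weights} throughout. In the preprocessing we build the same (randomly shifted) compressed quadtree(s) for $\PntSet$ as in \lemref{polyapprox}, but in the bottom-up pass we store at each node $v$, besides the bounding box $B_v$ of the points in its subtree, the \emph{total weight} of those points rather than their number. As in \lemref{polyapprox} this takes $\order{n \log n}$ time and $\order{n}$ space, and is independent of $\wx$ and $\eps$. The degeneracies $\wx = 0$ and $\wx > \wtX{\PntSet}$ are handled by convention, adding fake points of weight $0$ (or at infinity) exactly as in \assumpref{k:div:n}.

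First I would redo \lemref{polyapprox} in the weighted setting: given $\query$ and $\wx$, descend to the lowest quadtree node $\qtreenode$ containing $\query$, binary-search the root-to-$\qtreenode$ path for the lowest node $\qtreenode_\wx$ whose subtree has total weight at least $\wx$, and return the distance from $\query$ to the farthest corner of $B_{\qtreenode_\wx}$. The randomized-grid analysis is unchanged: with probability $\geq 1 - 1/n^{c-2}$ the ball $\ball{\query}{r}$ with $r = \distPk{\PntSet}{\query}{\wx}$ is contained in a canonical cell of side $\alpha \leq 2 n^{c-1} r$, and the same case analysis — on whether $\qtreenode_\wx$'s cell contains, or is contained in, that cell, using that the outer region of a compressed node is empty of $\PntSet$ — yields $\diameter{B_{\qtreenode_\wx}} = \order{n^{c-1} r}$, hence $r \leq \approxkd \leq n^{c} r$. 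Nothing in this argument used that the masses were unit.

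Next I would adapt \lemref{app:refine}. Starting from $\approxkd$, run the identical level-by-level BFS over the compressed quadtree, maintaining a frontier $X_i$ of canonical cells and an interval $[\lo_i, \hi_i]$ that provably contains $\distPk{\PntSet}{\query}{\wx}$. For each $v \in X_i$, form the interval $I_v = [\text{closest},\text{farthest}]$ of $v$'s cell to $\query$ with associated mass $n_v$ equal to the total weight in $v$'s subtree; with a weighted median/selection compute $\hi_i$ as the least $x$ with $L(x) \geq \wx$ and $\lo_i$ as the least $x$ with $L(x) + M(x) \geq \wx$, where now $L, M, R$ sum weights. Discard a node whose $I_v$ lies strictly left of $\lo_i$, decrementing $\wx \leftarrow \wx - n_v$; discard a node strictly right of $\hi_i$; expand the rest to their children; and stop once every cell of $X_i$ has diameter $< (\eps/8)\lo_i$, returning $\hi_i$ (equivalently $\lo_i$), both of which then lie within $(1\pm\eps)$ of $\distPk{\PntSet}{\query}{\wx}$, together with the farthest precomputed representative among the surviving cells (whose distance, as in \lemref{app:refine}, lies in an interval $[\alpha,\beta]$ with $\beta \leq (1+\eps/2)\alpha$ that also contains $\distPk{\PntSet}{\query}{\wx}$). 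The packing estimate $\cardin{X_i} = \order{(\lo_i/\Delta_i + 1)^{d-1}}$ is purely geometric, so the running time is again $\order{\log n + 1/\eps^{d-1}}$, and the bounds $\lo_i \leq \distPk{\PntSet}{\query}{\wx} \leq \hi_i$ and $\hi_i - \lo_i \leq (\eps/8)\lo_i$ carry over verbatim.

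The one step that needs genuine care — the main obstacle — is the discard/decrement move in the weighted setting. One must argue that subtracting the \emph{full} mass of a ``too close'' subtree is valid even though a cell's mass need not be spread over its annulus $[\text{closest},\text{farthest}]$ in any controlled way: it is, because every point of that subtree lies within distance $\lo_i \leq \distPk{\PntSet}{\query}{\wx}$ of $\query$, so it genuinely contributes and the residual threshold stays nonnegative; and one must check that a single heavy point whose weight exceeds the current $\wx$ cannot corrupt the invariant — such a point is never discarded, it is merely drilled into until its cell is small and is then reported. Modulo these bookkeeping points (and the $\wx = 0$ / $\wx > \wtX{\PntSet}$ degeneracies above), the theorem follows; I would present it as ``a straightforward adaptation of \lemref{polyapprox} and \lemref{app:refine}'', spelling out only the weighted selection step and this decrement invariant.
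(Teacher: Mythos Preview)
Your proposal is correct and matches the paper's approach exactly: the paper itself offers no detailed proof, stating only that the result follows ``by a straightforward adaptation of the arguments in this section,'' and what you have written is precisely that adaptation spelled out (replacing counts by weights in \lemref{polyapprox} and \lemref{app:refine}). If anything, your write-up is more careful than the paper's, since you explicitly flag and resolve the one nontrivial bookkeeping point (the validity of the mass-decrement step and the behavior on a single heavy point).
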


\section{Density and distance estimation via sampling}
\seclab{sampling}

In this section, we investigate the ability to approximate density
functions using sampling. Note, that sampling can not handle our basic
proximity result (\thmref{ann:main}), since sampling is indifferent to
geometric error. Nevertheless, one can get meaningful results, that
are complementary to our main result, giving another intuition why it
is possible to have sublinear space when approximating the $k$-\NNTerm
and related density quantities.

\subsection{Answering $\pth{ 1+\eps, (1\pm\eps)k}$-\ANN}

\subsubsection{Relative approximation}

We are given a \emphi{range space} $(\X,\R)$, where $\X$ is a set of
$n$ objects and $\R$ is a collection of subsets of $\X$, called
\emphi{ranges}. In a typical geometric setting, $\X$ is a subset of
some infinite ground set $\GroundSet$ (e.g., $\GroundSet=\Re^d$ and
$\X$ is a finite point set in $\Re^d$), and $\R = \brc{\range \cap \X
   \sep{ \range\in \R_\GroundSet}}$, where $\R_\GroundSet$ is a
collection of subsets (i.e., \emphi{ranges}) of $\GroundSet$ of some
simple shape, such as halfspaces, simplices, balls, etc.

The \emphi{measure} of a range $\range \in \R$, is $ \Measure{\range}
= {\cardin{\range}} /{\cardin{\X}}$, and its \emphi{estimate} by a
subset $Z \subseteq \X$ is $\sMeasureX{\range}{Z} = \cardin{\range
   \cap Z} / \cardin{Z}$.  We are interested in range spaces that have
bounded \VC dimension, see \cite{h-gaa-11}. More specifically, we are
interested in an extension of the classical $\eps$-net and
$\eps$-approximation concepts.

\begin{defn}
    For given parameters $0 < \prob,\eps <1$, a subset $Z\subseteq \X$
    is a \emphi{relative $(\prob, \eps)$-approximation} for $(\X,\R)$
    if, for each $\range \in \R$, we have
    \begin{compactenum}[\quad(i)]
        \item $(1-\eps)\Measure{\range} \leq \sMeasureX{\range}{Z}
        \leq (1+\eps)\Measure{\range}$, if $\Measure{\range} \geq \prob$.
        
        \item $\Measure{\range} - \eps \prob \leq
        \sMeasureX{\range}{Z} \leq \Measure{\range} + \eps \prob$, if
        $\Measure{\range} \leq \prob$.
    \end{compactenum}
\end{defn}

\begin{lemma}[\cite{hs-rag-11, h-gaa-11}]%
    \lemlab{relative}%
    For a range space with \VC dimension $\Dim$, a random sample of
    size $\ds O\pth{ \frac{\Dim}{\eps^2 \prob} \pth{
          \log\frac{1}{\prob} + \log \frac{1}{\BadProb}}}$, is a
    relative $(\prob,\eps)$-approximation with probability $\geq
    1-\BadProb$.
\end{lemma}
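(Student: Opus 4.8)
The plan is to establish \lemref{relative} by the classical Vapnik--Chervonenkis symmetrization argument, sharpened so that it yields \emph{multiplicative} rather than additive error; since the statement is quoted from \cite{hs-rag-11, h-gaa-11}, I only sketch the architecture. Let $Z$ be a uniform random sample of $m$ points of $\X$, and let $\mathcal{E}$ be the bad event that some range $\range \in \R$ violates condition (i) or (ii) in the definition of a relative $(\prob,\eps)$-approximation. The goal is to choose $m$ as in the statement so that $\Prob{\mathcal{E}} \le \BadProb$.

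First I would perform the standard \emph{double-sampling} reduction: draw an independent ghost sample $Z'$ of the same size $m$. A routine argument --- for any fixed range the empirical measure on $Z'$ concentrates around its true value, so conditioned on $Z$ being bad, $Z'$ looks ``good'' with probability bounded below by a constant --- gives $\Prob{\mathcal{E}} \le 2\,\Prob{\mathcal{E}'}$, where $\mathcal{E}'$ is the event that some range has a large discrepancy between its fractions in $Z$ and in $Z'$. The payoff is that $\mathcal{E}'$ depends only on the multiset $Z \cup Z'$ of $2m$ points, so by the Sauer--Shelah lemma the number of \emph{distinct} ranges that must be controlled is at most the shatter function $\Pi_\Dim(2m) = O\pth{(2m)^\Dim}$.

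Next, for a \emph{fixed} range $\range$ with $\mu = \Measure{\range}$, I would bound the deviation probability by a multiplicative Chernoff bound rather than an additive Hoeffding bound: the probability that $\sMeasureX{\range}{Z}$ differs from $\mu$ by a relative factor $\eps/4$ is at most $2\exp\pth{-c\,\eps^2 \mu\, m}$ for an absolute constant $c > 0$, and likewise for the $Z$-versus-$Z'$ discrepancy. When $\mu \ge \prob$ this is exactly the relative bound (i). When $\mu < \prob$, monotonicity of the Chernoff exponent on the relevant interval lets one replace $\mu$ by $\prob$ (a deviation of additive size $\eps\prob$ from a mean at most $\prob$ is no more likely than a relative-$\eps$ deviation from mean $\prob$), which yields the additive-in-$\prob$ bound (ii). Thus a single tail estimate with exponent $c\,\eps^2 \prob\, m$ covers both cases.

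Finally I would take a union bound over the $O\pth{(2m)^\Dim}$ ranges, so it suffices to have $O\pth{(2m)^\Dim}\cdot \exp\pth{-c\,\eps^2 \prob\, m} \le \BadProb$, i.e.
\[
  c\,\eps^2 \prob\, m \;\ge\; \Dim \log(2m) + \log\tfrac{1}{\BadProb} + O(1).
\]
Solving this implicit inequality is the only genuinely fiddly part: plugging in $m = \Theta\pth{\tfrac{\Dim}{\eps^2 \prob}\pth{\log\tfrac{1}{\prob} + \log\tfrac{1}{\BadProb}}}$ one checks that $\log(2m) = O\pth{\log\tfrac{1}{\prob} + \log\tfrac{1}{\BadProb}}$ after absorbing the lower-order $\log\Dim$, $\log\tfrac1\eps$, and $\log\log$ terms into the constant, so the inequality closes. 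I expect the two main obstacles to be (a) making the Chernoff estimate work \emph{uniformly} across the regimes $\mu \ge \prob$ and $\mu < \prob$ so that one sample size simultaneously serves conditions (i) and (ii), and (b) resolving the implicit inequality above to extract the clean closed form. Everything else is the textbook symmetrization argument; the full details appear in \cite{hs-rag-11} (see also \cite{h-gaa-11}).
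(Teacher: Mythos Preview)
The paper does not prove this lemma at all; it is stated as a cited result from \cite{hs-rag-11, h-gaa-11} and used as a black box. Your sketch correctly outlines the standard symmetrization--Sauer--Shelah--Chernoff architecture by which that result is established in those references, so there is nothing to compare against here beyond noting that the paper's ``proof'' is simply the citation.
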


\subsubsection{Sampling the $(1\pm \eps)k$-\ANN}

So, let $\PntSet$ be a set of $n$ points in $\Re^d$, $k > 0$ and $\eps
\in (0,1)$, be prespecified parameters. The range space of balls in
$\Re^d$ has \VC dimension $d+1$, as follows by a standard lifting
argument, and Radon's theorem \cite{h-gaa-11}. Set $\prob = k/n$, and
compute a random sample $\RSample$ of size
\begin{align*}
    m%
    =%
    O\pth{ \frac{d+1}{\eps^{2}\prob} \pth{ \log \frac{1}{\prob} + \log
          \frac{1}{\BadProb} } }%
    =%
    O\pth{ \frac{n}{k \eps^2} \log \frac{n}{k \BadProb} }.
\end{align*}
This sample is a relative $(p/2,\eps/2)$-approximation
with probability $\geq 1- \BadProb$, and assume that this indeed
holds.

\paragraph{Answering a $(1\pm \eps)k$-\ANN query.}
Given a query point $\query \in \Re^d$, let $\pntA$ be its
$k'$-\NNTerm in $\RSample$, where $k' = \prob m = (k/n)m = O\pth{
   \eps^{-2} \log \frac{n}{k \BadProb} }$.  Return $\pntA$ as the
desired $(1\pm \eps)k$-\ANN.

\paragraph{Analysis.}
Let $r = \dist{\query}{\pntA}$, and consider the ball $\ballA =
\ball{\query}{r}$.  We have that
\begin{align*}
    \sMeasureX{\ballA}{\RSample}%
    =%
    \frac{\cardin{\ballA \cap \RSample}}{\cardin{\RSample}}%
    =%
    \frac{k'}{m}%
    =%
    \frac{k}{n}.
\end{align*}
If $\Measure{\ballA} = \cardin{\ballA \cap \PntSet} /\cardin{\PntSet}
\leq \prob /2 = (k/n)/2 $, then by the relative approximation
definition, we have that $\Measure{\ballA} - \eps (k/n)/4 \leq k/n
\leq \Measure{\ballA} +\eps (k/n)/4$. But this implies that
$\Measure{\ballA} \geq (3/4) (k/n)$, which is a contradiction.

As such, we have that $\Measure{\ballA} \geq \prob /2$. Again, by the
relative approximation definition, we have that $(1-\eps
/2)\Measure{\range} \leq \sMeasureX{\range}{\RSample} \leq (1+\eps
/2)\Measure{\range}$, and this in turn implies that

\begin{align*}
    (1-\eps) k%
    \leq %
    \frac{n}{1+\eps/2 } \sMeasureX{\range}{\RSample}%
    \leq%
    n \cdot \Measure{\range}%
    =%
    \cardin{\ballA \cap \PntSet}%
    \leq%
    \frac{n}{1-\eps/ 2} \sMeasureX{\range}{\RSample}%
    \leq%
    (1+\eps) k,
\end{align*}
as $\sMeasureX{\ballA}{\RSample} = k/n$.

\paragraph{The result.} 
Of course, there is no reason to compute the exact $k'$-\NNTerm in
$\RSample$. Instead, one can compute the $(1+\eps,k')$-\ANN in
$\RSample$ to the query. In particular, using the data-structure of
\thmref{q:tree:main}, we get the following.

\begin{lemma}%
    \lemlab{random:sample:k:ann}%
    Given a set $\PntSet$ of $n$ points in $\Re^d$, and parameters $k$
    $\eps > 0$, and $\BadProb > 0$. Consider a random sample
    $\RSample$ from $\PntSet$ of size $m= O\pth{ \frac{n}{k \eps^2}
       \log \frac{n}{k \BadProb} }$. One can build a data-structure in
    $O(m \log m)$ time, using $O(m)$ space, such that for any query
    point $\query$, one can compute a $\pth{\MakeBig 1+\eps, (1\pm
       \eps)k}$-\ANN in $\PntSet$, by answering $k'$-\NNTerm or
    $(1+\eps,k')$-\ANN query on $\RSample$, where $k'= O\pth{
       \eps^{-2} \log \frac{n}{k \BadProb} }$.

    Specifically, the query time is $O(\log m + 1/\eps^{d-1})$, and
    the result is correct for all query points with probability $\geq
    1- \BadProb$; that is, for the returned point $\pntA$, we have
    that $(1-\eps)\distPk{\PntSet}{\query}{(1-\eps)k} \leq
    \dist{\query}{\pnt} \leq
    (1+\eps)\distPk{\PntSet}{\query}{(1+\eps)k}$.
\end{lemma}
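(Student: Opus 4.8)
The plan is to combine the relative-approximation sampling bound of \lemref{relative} with the data-structure of \thmref{q:tree:main}, essentially packaging the calculation already done in the analysis above into a data-structure statement. First I set $\prob = k/n$ and draw a random sample $\RSample \subseteq \PntSet$ of the stated size $m = \order{\frac{n}{k\eps^2}\log\frac{n}{k\BadProb}}$. Since the range space of balls in $\Re^d$ has \VC dimension $d+1$, \lemref{relative} ensures that, with probability at least $1-\BadProb$, $\RSample$ is a relative $(\prob/2,\eps/2)$-approximation for this range space (I would use parameter $\eps/c$ in place of $\eps/2$ for a suitable constant $c$ if the error accounting below needs slack; this does not change $m$ asymptotically). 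I then preprocess $\RSample$ with \thmref{q:tree:main}, in $\order{m\log m}$ time and $\order{m}$ space, which gives the claimed preprocessing and space bounds. A query for $\query$ is answered by running a $(1+\eps,k')$-\ANN query on $\RSample$, with $k' = \prob m = \order{\eps^{-2}\log\frac{n}{k\BadProb}}$, which by \thmref{q:tree:main} takes $\order{\log m + 1/\eps^{d-1}}$ time and returns a point $\pntA \in \RSample \subseteq \PntSet$ with $(1-\eps)\distPk{\RSample}{\query}{k'} \leq \dist{\query}{\pntA} \leq (1+\eps)\distPk{\RSample}{\query}{k'}$.

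For correctness I would condition on $\RSample$ being the desired relative approximation, which handles all query points at once. Fix $\query$ and let $\ballA = \ball{\query}{\distPk{\RSample}{\query}{k'}}$, so $\sMeasureX{\ballA}{\RSample} = k'/m = \prob$. Exactly as in the analysis above, $\Measure{\ballA} \leq \prob/2$ is impossible, since the additive clause of the relative-approximation definition would give $\sMeasureX{\ballA}{\RSample} \leq \Measure{\ballA} + (\eps/2)(\prob/2) < \prob$, a contradiction; hence $\Measure{\ballA} \geq \prob/2$, and the multiplicative clause yields $\frac{\prob}{1+\eps/2} \leq \Measure{\ballA} \leq \frac{\prob}{1-\eps/2}$, i.e. $\cardin{\ballA \cap \PntSet} = n\Measure{\ballA} \in \pbrc{(1-\eps)k,\,(1+\eps)k}$ for $\eps \in (0,1)$. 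The radius $\distPk{\RSample}{\query}{k'}$ is attained by a point of $\RSample \subseteq \PntSet$ whose rank within $\PntSet$ equals $\cardin{\ballA \cap \PntSet}$, so by monotonicity of $\distPk{\PntSet}{\query}{\cdot}$ we get $\distPk{\PntSet}{\query}{(1-\eps)k} \leq \distPk{\RSample}{\query}{k'} \leq \distPk{\PntSet}{\query}{(1+\eps)k}$. Feeding this into the $(1\pm\eps)$ guarantee for $\dist{\query}{\pntA}$ gives $(1-\eps)\distPk{\PntSet}{\query}{(1-\eps)k} \leq \dist{\query}{\pntA} \leq (1+\eps)\distPk{\PntSet}{\query}{(1+\eps)k}$, which is the claim.

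The routine but fussy part is the error bookkeeping: three slacks must be kept simultaneously under control --- the $\eps/2$ of the relative approximation, the $(1+\eps)$ of the \ANN query on $\RSample$, and the $\ceil{\cdot}$/$\floor{\cdot}$ rounding hidden in the index $(1\pm\eps)k$ and in $k' = \prob m$ --- and this is handled by taking the sample's and the query's approximation parameters to be suitable constant fractions of $\eps$, which leaves $m$, $k'$, the preprocessing cost and the query time unchanged up to constants. I expect the only genuinely non-mechanical point to be recognising why a relative $(\prob,\eps)$-approximation is the right tool rather than a plain $\eps$-approximation at one fixed scale: the ball around the query whose sample-measure is $\prob$ can have true measure anywhere above $\prob/2$, so one needs simultaneous multiplicative control over all balls once their measure exceeds $\prob/2$, which is precisely what \lemref{relative} buys, and it is also what makes a single sample correct for every query point with probability $\geq 1-\BadProb$.
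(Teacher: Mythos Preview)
Your proposal is correct and follows essentially the same argument as the paper: set $\prob=k/n$, take a random sample that is a relative $(\prob/2,\eps/2)$-approximation for balls via \lemref{relative}, answer a $(1+\eps,k')$-\ANN query on the sample using \thmref{q:tree:main}, and use the two clauses of the relative-approximation definition (first ruling out $\Measure{\ballA}\le\prob/2$, then applying the multiplicative guarantee) to bound $\cardin{\ballA\cap\PntSet}$ in $[(1-\eps)k,(1+\eps)k]$. The paper's exposition presents this analysis in the text preceding the lemma rather than as a separate proof, but the content and sequence of steps are the same as yours.
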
  

\begin{remark}
    \begin{inparaenum}[(A)]
        \item If one plugs the random sample into \thmref{ann:main},
        then one gets a data-structure of size $O\pth{n /\pth{k
              \eps^{O(1)}}}$, that can answer $\pth{ 1+\eps,
           (1\pm\eps)k}$-\ANN in logarithmic time.

        \item Once computed, the data-structure of
        \lemref{random:sample:k:ann} works for approximating any
        $\pth[]{ 1+\eps, (1\pm\eps)t}$-\ANN, for any $t \geq k$, by
        computing the $(1+\eps, t')$-\ANN on $\RSample$, where $t' =
        (t/n) m$.
    \end{inparaenum}
\end{remark}

\subsection{Density estimation via sampling}

\subsubsection{Settings}
\seclab{settings:density}

Let $\PntSet$ be a set of $n$ points in $\Re^\hdim$, and let $k$ be a
parameter.  In the following, for a point $\query$, let $\Pk{\query}$
be the set of $k$ points closest to $\query$ in $\PntSet$.  For such a
query point $\query \in \Re^d$, we are interested in estimating the
quantity
\begin{align}%
    \eqlab{estimate:sq}%
    {F}_1(\query) = \pth[]{\frac{1}{k}\sum_{\pntA \in \Pk{\query}}
       {\distY{\query}{\pntA}}^2 }^{1/2}.
\end{align}
Since we care only about approximation, it is sufficient to
approximate the function without the square root. Formally, a
$\pth[]{1+O(\eps^2)}$-approximation $\alpha$ to
$\pth[]{{F}_1(\pnt)}^2$, yields the approximation $\sqrt{\alpha}$ to
${{F}_1(\pnt)}$, and this is a $(1+\eps)$-approximation to the
original quantity, see \cite[Lemma 4.6]{ahv-aemp-04}. Furthermore, as
in \defrefpage{slow}, we can handle more general functions than
squared distances. However, since we are interested in random
sampling, we have to assume something additional about the
distribution of points.

\begin{defn}%
    \deflab{well:behaved}%
    For a point-set $\PntSet \subseteq \Re^d$, and a parameter $k$,
    the function $f:\Re \rightarrow \Re^+$ is a \emphi{well-behaved}
    distance function, if %
    \smallskip%
    \begin{compactenum}[\quad(i)]
        \item $f$ is monotonically increasing, and
        \item for any point $\query \in \Re^d$, there exists a
        constant $\constC > 0$, such that $f\pth{
           \distPk{\PntSet}{\query}{(3/2)k}} \leq \constC f
        \pth{\distPk{\PntSet}{\query}{k/4}}$.
    \end{compactenum}
    \smallskip%
    A set $\Family$ of functions is \emphi{well-behaved} if the above
    holds for any function in $\Family$ (with the same constant
    $\constA$ for all the functions in $\Family$).
\end{defn}

As such, the target here is to approximate
\begin{align}
    F(\query)%
    =%
    \frac{1}{k}\sum_{\pntA \in \Pk{\query}}
    f\pth{\distY{\query}{\pntA} \MakeBig\! } ,%
    \eqlab{halving}%
\end{align}
where $f(\cdot)$ is a well-behaved distance function. 

\subsubsection{The estimation algorithm}

Let $\RSample$ be a random sample from $\PntSet$ of size $\ds m =
O\pth{ \frac{ d }{\prob \eps^2 } \log \frac{ n}{k \BadProb}}$, where
$\prob = k/n$, and $\BadProb>0$ is a prespecified confidence
parameter. Given a query $\query$, compute the quantity
\begin{align}
    G(\query)%
    =%
    \frac{1}{k'}\sum_{\pntA \in \PkExt{\RSample}{k'}{\query}}
    f\pth{\distY{\query}{\pntA} \MakeBig\! } ,%
    \eqlab{estimate}%
\end{align}
where $k'= \prob m$. Return this as the required estimate to
$F(\query)$, see \Eqref{halving}.

\subsubsection{Analysis}

We claim that this estimate is good, with good probability for all
query points.  Fix a query point $\query \in \Re^d$, and let $\eps >
0$ be the prespecified approximation parameter.  For the sake of
simplicity of exposition, we assume that
$f\pth{\distPk{\PntSet}{\query}{(1+\eps)k}} = k/n$ -- this can be
achieved by dividing $f(\cdot)$ by the right constant, and applying
our analysis to this modified function. In particular,
$f\pth{\distPk{\PntSet}{\query}{i}} \geq k/(\constC n)$, for all $i
\geq k/4$. For any $r \geq 0$, let
\begin{align*}
    h_{\query,r}(\pntA) = \left\{ \begin{array}{ll}
            \frac{n}{k}f\pth{\distY{\query}{\pntA}} &
            \distY{\query}{\pntA} \leq
            r, \\
            0 & \text{otherwise}.
           \end{array}
       \right.
\end{align*}
Consider a value $x\geq 0$. The sublevel set of all points $\pntC$,
such that $h_{\query,r}(\pntC) \leq x$, is the union of
\begin{inparaenum}[(i)]
    \item a ball centered at $\query$, with 
    \item a complement of a ball (also centered at $\query$ of radius
    $r$.
\end{inparaenum}
(i.e., its the complement of a ring.)  This follows as $f$ is a
monotonically increasing function. As such, consider the family of
functions
\begin{align*}
    \Family = \brc{ h_{\query,r}(\cdot) \sep{ \query \in \Re^d, r \geq
          0}}.
\end{align*}
This family has bounded pseudo-dimension (a fancy way to say that the
sublevel sets of the functions in this family have finite \VC
dimension), which is $O(d)$ in this case, as every range is the union
of a ball and a ball complement \cite[Section 5.2.1.1]{h-gaa-11}.
Now, we can rewrite the quantity of interest as
\begin{align}
    F(\query)%
    =%
    \frac{1}{k}\sum_{\pntA \in \Pk{\query}} f\pth{\MakeBig\!
       \distY{\query}{\pntA} }%
    =%
    \frac{1}{n} \sum_{\pntA \in \Pk{\query}} \frac{n}{k}
    f\pth{\MakeBig\!  \distY{\query}{\pntA} }%
    =%
    \frac{1}{n} \sum_{i=1}^n h_{\query,r}\pth{ \pntA_i},
    \eqlab{F:q}
\end{align}
where $r= \distPk{\PntSet}{\query}{k}$ (here $r$ is a function of
$\query$).  Note, that by our normalization of $f$, we have that $
h_{\query,r}\pth{ \pntC } \in [0,1]$, for any $\pntC \in \Re^d$. We
are now ready to deploy a sampling argument. We need a generalization
of $\eps$-approximation due to Li \etal \cite{lls-ibscl-01}, see also
\cite{h-gaa-11}.

\begin{theorem}[\cite{lls-ibscl-01}]
    \thmlab{SAMPLING}%
    Let $\alpha, \nu, \BadProb > 0$ be parameters, let $\RangeSpace =
    \pth{\GroundSet, \Family}$ be a range space, and let $\Family$ be
    a set of functions from $\GroundSet$ to $\pbrcS{0,1}$, such that
    the pseudo-dimension of $\RangeSpace$ is $\Dim$.  For a random
    sample \index{random sample} $\RSample$ (with repetition) from
    $\GroundSet$ of size $\ds O\pth{\frac{1}{\alpha^2 \nu} \pth{ \Dim
          \log \frac{1}{\nu} + \log \frac{1}{\BadProb}}}$, we have
    that
    \[
    \forall g \in \Family \;\;\;\; \DistX{\nu}{\,\Measure{g}, \,
       \sMeasureX{g}{\RSample} \MakeSBig } < \alpha
    \]
    with probability $\geq 1-\BadProb$.
\end{theorem}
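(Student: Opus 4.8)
The plan is to obtain \thmref{SAMPLING} by reducing it to the relative approximation bound already recorded as \lemref{relative}, by lifting the real-valued family $\Family$ to its system of superlevel sets and integrating over thresholds (the ``layer-cake'' trick). The first thing to unpack is the $\nu$-sensitive distance: $\DistX{\nu}{r,s} = \cardin{r-s}/\pth{r+s+\nu}$, so for $r,s \ge 0$ the inequality $\DistX{\nu}{r,s} < \alpha$ is implied by $\cardin{r-s} \le \alpha'\pth{r+\nu}$ for any $\alpha' < \alpha$ (since $s \ge 0$). Thus it suffices to show that, with probability $\ge 1-\BadProb$, one has $\cardin{\Measure{g}-\sMeasureX{g}{\RSample}} \le \alpha'\pth{\Measure{g}+\nu}$ simultaneously for all $g \in \Family$, for a suitable $\alpha' = \Theta(\alpha)$.

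First I would form the set system $\mathcal{S}$ on $\GroundSet$ whose ranges are the superlevel sets $\brc{\pnt \in \GroundSet : g(\pnt) > t}$ over all $g \in \Family$ and $t \in \Re$. Because the pseudo-dimension of $\RangeSpace$ is $\Dim$, these superlevel sets form a range space of \VC dimension $O(\Dim)$ -- this is exactly the ``sublevel sets have bounded \VC dimension'' fact used elsewhere in this section, cf.\ \cite[Section 5.2.1.1]{h-gaa-11}. Second, I would apply \lemref{relative} to $\mathcal{S}$ with threshold parameter $\prob = \nu$ and error $\Theta(\alpha)$: a random sample $\RSample$ of size $O\pth{\frac{\Dim}{\alpha^2\nu}\pth{\log\frac1\nu + \log\frac1\BadProb}}$ is, with probability $\ge 1-\BadProb$, a relative $\pth{\nu,\Theta(\alpha)}$-approximation for $\mathcal{S}$, which is of the order asserted in \thmref{SAMPLING}. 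On this good event, combining the two cases in the definition of a relative approximation gives, for every range $\range \in \mathcal{S}$, that $\cardin{\Measure{\range}-\sMeasureX{\range}{\RSample}} \le \Theta(\alpha)\cdot\max\pth{\Measure{\range},\nu}$.

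Third, I would pass back from sets to functions. Since $\Measure{g}$ and $\sMeasureX{g}{\RSample}$ are averages of the $[0,1]$-valued function $g$ over $\GroundSet$ and over $\RSample$, the layer-cake identities $\Measure{g} = \int_0^1 \Measure{\brc{g>t}}\,dt$ and $\sMeasureX{g}{\RSample} = \int_0^1 \sMeasureX{\brc{g>t}}{\RSample}\,dt$ hold. Applying the per-range bound pointwise in $t$ and integrating, $\cardin{\Measure{g}-\sMeasureX{g}{\RSample}} \le \int_0^1 \Theta(\alpha)\max\pth{\Measure{\brc{g>t}},\nu}\,dt \le \Theta(\alpha)\pth{\int_0^1 \Measure{\brc{g>t}}\,dt + \nu} = \Theta(\alpha)\pth{\Measure{g}+\nu}$. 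Choosing the constant hidden in the application of \lemref{relative} small enough makes $\Theta(\alpha)=\alpha'<\alpha$, and by the first paragraph this gives $\DistX{\nu}{\Measure{g},\sMeasureX{g}{\RSample}} < \alpha$ for all $g \in \Family$ at once, as claimed.

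The step I expect to be the main obstacle is pinning down the \emph{exact} sample-size bound: \thmref{SAMPLING} (from \cite{lls-ibscl-01}) has $\Dim$ multiplying only $\log\frac1\nu$, whereas a naive invocation of \lemref{relative} also puts $\Dim$ in front of $\log\frac1\BadProb$. Recovering the sharper form requires the refinement underlying the Li--Long--Srinivasan bound: bucket the functions of $\Family$ according to the dyadic scale of $\Measure{g}$ and run the symmetrization/union-bound argument scale-by-scale, so the $\log\frac1\BadProb$ term is not inflated by the number of scales. If a fully self-contained proof is wanted, the route is the classical one -- symmetrize against an independent ghost sample of the same size, condition on the union of the two samples and bound the number of distinct superlevel-set traces it induces by $O\pth{(\text{sample size})^{O(\Dim)}}$ via Sauer--Shelah, apply a relative Chernoff bound for each fixed trace (this is where the $\alpha^2\nu$ rather than $\alpha^2$ dependence is won, by exploiting that the relevant variance is $O(\Measure{g})$), and union-bound -- and the delicate bookkeeping there, exactly as for \lemref{relative}, is carrying the two-sided $\nu$-sensitive error cleanly through the symmetrization step for the small-measure ranges.
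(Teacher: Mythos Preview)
The paper does not prove \thmref{SAMPLING} at all: it is quoted verbatim from Li, Long, and Srinivasan \cite{lls-ibscl-01} and used as a black box, so there is no ``paper's own proof'' to compare your proposal against.

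That said, your reduction is a legitimate and essentially complete derivation of the statement up to the sample-size constant. The layer-cake step is sound: the superlevel-set system has \VC dimension $\Dim$ by the definition of pseudo-dimension, \lemref{relative} then controls each slice, and integrating the per-slice bound $\Theta(\alpha)\max\pth{\Measure{\brc{g>t}},\nu}$ over $t\in[0,1]$ does give $\Theta(\alpha)\pth{\Measure{g}+\nu}$, which in turn implies the $\DistX{\nu}{\cdot}{\cdot}<\alpha$ conclusion. You also correctly flag the one genuine gap: your route inherits the $\Dim\log\frac{1}{\BadProb}$ term from \lemref{relative}, whereas the stated bound has only $\log\frac{1}{\BadProb}$. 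Closing that gap really does require redoing the symmetrization/chaining argument of \cite{lls-ibscl-01} rather than invoking \lemref{relative} wholesale, exactly as you outline in your final paragraph. For the purposes of this paper the distinction is immaterial (the theorem is only applied with $\BadProb$ polynomially related to the other parameters), so either bound would suffice downstream.
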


Lets try to translate this into human language. In our case,
$\GroundSet = \PntSet$. For the following argument, we fix the query
point $\query$, and the distance $r=\dk{\query}$.  The measure
function is
\begin{align*}
    \Measure{g} = \sum_{\pntA \in \GroundSet} \Prob{\pntA\MakeBig}
    g(\pntA) = F(\query),
\end{align*}
which is the desired quantity if one set $\Prob{\pnt} = 1/n$, and
$g(\pntA) = h_{\query,r}\pth{ \pntA }$ -- see \Eqref{F:q}.  For the
sample $\RSample $, the estimate is
\begin{align*}
    \sMeasureX{g}{\RSample}%
    =%
    \frac{1}{m} \sum_{\pntA \in \RSample} g\pth{\pntA}%
    =%
    \frac{1}{m} \sum_{\pntA \in \RSample} h_{\query,r}\pth{ \pntA },
\end{align*}
where $m=\cardin{\RSample}$.  Now, by the normalization of $f(\cdot)$,
we have that $\Measure{g} = F(\query) \geq k/2\constC n$ and
$\Measure{g} \leq k/n$.  The somewhat mysterious distance function, in
the above theorem, is
\begin{align*}
    \DistX{\nu}{ \rho, \varrho} = \frac{\cardin{\rho-\varrho}}{\rho+
       \varrho+ \nu}.
\end{align*}
Setting 
\begin{align}%
    \eqlab{parameter:setting}%
    \nu = \tfrac{k}{16 \constC n}%
    \qquad \text{ and }\qquad%
    \alpha = \tfrac{\eps}{16},
\end{align}
the condition in the theorem is
\begin{align}%
    \eqlab{amazing}%
    & \forall g \in \Family \;\;\;\; \DistX{\nu}{ \MakeBig
       \,\Measure{g}, \, \sMeasureX{g}{\RSample} \MakeSBig } < \alpha
    \qquad \Longrightarrow \qquad
    \cardin{\MakeBig \Measure{g} - \sMeasureX{g}{\RSample} }%
    <%
    \frac{\eps}{4} \Measure{g},
\end{align}
as an easy but tedious calculation shows.  This is more or less the
desired approximation, except that we do not have $r$ at
hand. Conceptually, the algorithm first estimates $r$, from the
sample, see \Eqref{estimate}, by computing the $k'$\th nearest
neighbor to the query in $\RSample$, and then computes the estimate
using this radius. Formally, let $r' = \distPk{\RSample}{\query}{k'}$,
and observe that as $k' = \prob m = (k/n) m$, we have  
\begin{align*}
    G(\query)%
    &=%
    \frac{1}{k'}\sum_{\pntA \in \PkExt{\RSample}{k'}{\query}}
    f\pth{\distY{\query}{\pntA} \MakeBig\! } %
    =%
    \frac{1}{k'}\cdot \frac{k}{n}\sum_{\pntA \in
       \PkExt{\RSample}{k'}{\query}} \frac{n}{k}
    f\pth{\distY{\query}{\pntA} \MakeBig\! } %
    =%
    \frac{1}{k'}\cdot \frac{k}{n}\sum_{\pntA \in
       \PkExt{\RSample}{k'}{\query}} h_{\query,r'}(\pntA)%
    \\%
    &=%
    \frac{1}{m}\sum_{\pntA \in \RSample} h_{\query,r'}(\pntA).%
\end{align*}
In particular, the error between the algorithm estimate, and theorem
estimate is 
\begin{align*}
    \Error%
    =%
    \cardin{ \MakeBig G(\query) - \sMeasureX{g}{\RSample}}%
    =%
    \cardin{%
       \frac{1}{m}\sum_{\pntA \in \RSample} h_{\query,r'}(\pntA) -
       \frac{1}{m}\sum_{\pntA \in \RSample} h_{\query,r}(\pntA)%
    }.
\end{align*}

Now, by \lemref{relative}, $\RSample$ is a relative
$(\prob/4,\eps/\constD )$-approximation, with probability $\geq
1-\BadProb/10$, where $\constD > 0$ is a sufficiently large constant
(its exact value would follow from our analysis). This implies that
the ball centered at $\query$ of radius $r'$, contains between
$[(1-\eps/\constD)k, (1+\eps/\constD)k]$ points of $\PntSet$. This in
turn implies that number of points of $\RSample$ in the ball of radius
$r'$ centered at $\query$ is in the range
$\pbrc[]{(1-\eps/\constD)^2k', (1+\eps/\constD)^2k'}$. This in turn
implies that the number of ``heavy'' points in the sample $\RSample$
is relatively small. Specifically, the number of points in $\RSample$
that are in the ball of radius $r'$ around $\query$, but not in the
concentric ball of radius $r$ (or vice versa) is
\begin{align*}
    \cardin{ \MakeBig%
       \cardin{\PkExt{\RSample}{r'}{\query}} -
       \cardin{\PkExt{\RSample}{r}{\query}} }%
    \leq (1+\eps/\constD)^2k' - (1-\eps/\constD)^2k'%
    \leq%
    (6
    \eps/\constD)k'.
\end{align*}
By the well-behaveness of $f$, this implies that the contribution of
these points is marginal compared to the ``majority'' of points in
$\RSample$; that is, all the points in $\RSample$ that are the $i$\th
nearest-neighbor to $\query$, for $i=k'/2, \ldots (3/4)k'$, have
weight at least $\alpha/\constC$, where $\alpha$ is the maximum value
of $h_{q,r'}$ on any point of
$\PkExt{\RSample}{(1+\eps)k'}{\query}$. That is, we have
\begin{align*}
    \Delta%
    &=%
    \min \pth{G(\query), \; \sMeasureX{g}{\RSample} \MakeBig }%
    =%
    \frac{1}{m} \min \pth{ \sum_{\pntA \in \RSample}
       h_{\query,r'}(\pntA) , \sum_{\pntA \in \RSample}
       h_{\query,r}(\pntA) }%
    \geq%
    \frac{1}{m} \sum_{\pntA \in \PkExt[]{\RSample}{(3/4)k'}{\query}}
    h_{\query,r'}(\pntA)%
    \\
    &\geq%
    \frac{1}{m} \cdot \frac{k'}{4} \cdot \frac{\alpha}{\constC}%
    =%
    \frac{\alpha k' }{4 m \constC}.
\end{align*}
Similarly,  we have 
\begin{align*}
    \Error &= \frac{1}{m} \cardin{ \sum_{\pntA \in \RSample}
       h_{\query,r'}(\pntA) - \sum_{\pntA \in \RSample}
       h_{\query,r}(\pntA) }%
    \leq%
    \frac{1}{m} \cardin{ \MakeBig%
       \cardin{\PkExt{\RSample}{r'}{\query}} -
       \cardin{\PkExt{\RSample}{r}{\query}} }%
    \cdot \alpha%
    \leq%
    \frac{1}{m} \cdot %
    \frac{6 \eps}{\constD}k'%
    \cdot \alpha%
    \\
    &= %
    \frac{6 \eps \alpha k'}{16m \constD}%
    =%
    \eps \cdot \frac{6 \constC }{4\constD} \cdot \frac{\alpha k'}{4m
       \constC}%
    \leq %
    \frac{\eps}{4} \Delta%
    \leq%
    \frac{\eps}{4} \sMeasureX{g}{\RSample} \,,
\end{align*}
if $\constD \geq 6\constC$.  We thus have that%
\vspace{-0.5cm}%
\begin{align*}
    \cardin{G(\query) - F(\query)}%
    &= %
    \cardin{G(\query) - \Measure{g}}%
    \leq%
    \overbrace{ \cardin{G(\query) - \sMeasureX{g}{\RSample} \MakeBig
       }}^{=\Error} + \cardin{\sMeasureX{g}{\RSample} - \Measure{g}}%
    \\
    &\leq%
    \frac{\eps}{4} \sMeasureX{g}{\RSample} + \frac{\eps}{4}
    \Measure{g}%
    \leq%
    \frac{\eps}{4} \pth{1+\frac{\eps}{4}} \Measure{g} +\frac{\eps}{4}
    \Measure{g}%
    \leq%
    \eps \Measure{g}%
    =%
    \eps F(\query),
\end{align*}
by \Eqref{amazing}.  That is, the returned approximation has small
error.

The above analysis assumed both that the sample $\RSample$ is a
relative $(\prob/4,\eps/\constD)$-\si{approxima}\-\si{tion} (for
balls), and also complies with \thmref{SAMPLING}, for the range space,
where the ranges are a complement of a single ring, for the parameters
set in \Eqref{parameter:setting}. Clearly, both things hold with
probability $\geq 1-\BadProb$, for the size of the sample taken by the
algorithm. Significantly, this holds for all query points.

\subsubsection{The result}

\begin{theorem}
    Let $\PntSet$ be a set of $n$ points in $\Re^\hdim$, $k$, $\eps >
    0$ and $\BadProb > 0$ be parameters. Furthermore, assume that we
    are given a well-behaved function $f(\cdot)$ (see
    \defrefpage{well:behaved}). Let $\RSample$ be a random sample of
    $\PntSet$ of size $\ds m = O\pth{ \frac{ d n }{k \eps^2 } \log
       \frac{ n}{k \BadProb}}$. Then, with probability $\geq
    1-\BadProb$, for all query points $\query \in \Re^\hdim$, we have
    that for the quantity
    \begin{align*}
        F(\query)%
        =%
        \frac{1}{k}\sum_{\pntA \in \Pk{\pnt}} f\pth{\MakeBig\!
           \distY{\query}{\pntA} }%
        \quad \text{ and its estimate }\quad %
        G(\query)%
        =%
        \frac{1}{k'}\sum_{\pntA \in \PkExt{\RSample}{k'}{\query}}
        f\pth{\distY{\query}{\pntA} \MakeBig\! } ,%
    \end{align*}
    we have that $\cardin{F(\query) - G(\query) }\leq \eps F(\query)$,
    where $k'= (k/n) m$.  Here, $ \PkExt{\RSample}{k'}{\query}$
    denotes the set of $k'$ nearest-neighbor to $\query$ in
    $\RSample$.
\end{theorem}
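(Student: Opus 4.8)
The plan is to assemble the theorem from the analysis already carried out in \secref{sampling}: fix an arbitrary query point $\query$, bound $\cardin{F(\query) - G(\query)}$ conditioned on two ``good'' events about the sample $\RSample$, and then observe that both events are uniform statements about $\RSample$ relative to fixed range spaces of bounded \VC\,/\,pseudo-dimension, so they hold simultaneously for \emph{all} query points. First I would normalize $f$ so that $f\pth{\distPk{\PntSet}{\query}{(1+\eps)k}} = k/n$ (this is without loss of generality), which together with well-behavedness gives $f\pth{\distPk{\PntSet}{\query}{i}} \geq k/(\constC n)$ for all $i \geq k/4$. Next, introduce the truncated weight functions $h_{\query,r}$, with $h_{\query,r}(\pntA)$ equal to $\tfrac{n}{k}f\pth{\dist{\query}{\pntA}}$ if $\dist{\query}{\pntA} \leq r$ and $0$ otherwise; since $f$ is monotone each sublevel set of $h_{\query,r}$ is the complement of a ring, so the family $\Family = \brc{h_{\query,r}}$ has pseudo-dimension $\order{d}$, and with $r = \distPk{\PntSet}{\query}{k}$ and $g = h_{\query,r}$ one has $F(\query) = \Measure{g}$ exactly, cf.\ \Eqref{F:q}.

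With this set-up, applying \thmref{SAMPLING} to a sample of the stated size $m$, with the parameters $\nu = \tfrac{k}{16\constC n}$ and $\alpha = \eps/16$ of \Eqref{parameter:setting}, yields $\cardin{\Measure{g} - \sMeasureX{g}{\RSample}} < \tfrac{\eps}{4}\Measure{g}$ for every $g \in \Family$ with probability $\geq 1 - \BadProb/2$. This is already the estimate we want, except it refers to $g = h_{\query,r}$ for the \emph{true} radius $r = \distPk{\PntSet}{\query}{k}$, whereas the algorithm's quantity $G(\query)$ is built from the \emph{sample} radius $r' = \distPk{\RSample}{\query}{k'}$. So the crux of the argument is to bound the discrepancy $\Error = \cardin{G(\query) - \sMeasureX{g}{\RSample}}$ coming from this radius mismatch. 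For this I would additionally take $\RSample$ to be a relative $\pth{\prob/4, \eps/\constD}$-approximation for balls (\lemref{relative}, another event of probability $\geq 1 - \BadProb/2$): then $\ball{\query}{r'}$ contains between $(1-\eps/\constD)k$ and $(1+\eps/\constD)k$ points of $\PntSet$, so the number of sample points in the symmetric difference of $\ball{\query}{r}$ and $\ball{\query}{r'}$ is $\order{\eps k'/\constD}$. Well-behavedness of $f$ then caps the weight of each such ``boundary'' point by a $\constC$ factor above the weights of the $\Theta(k')$ ``core'' sample points --- those that are the $i$\th nearest neighbor of $\query$ in $\RSample$ for $i$ between $k'/2$ and $(3/4)k'$ --- and comparing the boundary contribution to this core mass gives $\Error \leq \tfrac{\eps}{4}\min\pth{G(\query), \sMeasureX{g}{\RSample}} \leq \tfrac{\eps}{4}\sMeasureX{g}{\RSample}$ as soon as $\constD \geq 6\constC$.

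Combining the two bounds, $\cardin{F(\query) - G(\query)} \leq \Error + \cardin{\sMeasureX{g}{\RSample} - \Measure{g}} \leq \tfrac{\eps}{4}\sMeasureX{g}{\RSample} + \tfrac{\eps}{4}\Measure{g} \leq \eps \Measure{g} = \eps F(\query)$, which is the claimed inequality. Since the two invoked events --- the Li \etal sampling guarantee of \thmref{SAMPLING} for the ring-complement family $\Family$, and the relative approximation guarantee for balls --- are each uniform over all query points and radii, they hold together with probability $\geq 1 - \BadProb$ for all $\query \in \Re^d$ simultaneously, giving the theorem. I expect the radius-mismatch block (the bound on $\Error$) to be the main obstacle: it is the only place the well-behavedness hypothesis on $f$ is actually used, and the only place where two distinct sampling guarantees must be combined; the rest is normalization and bookkeeping.
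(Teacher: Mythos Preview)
Your proposal is correct and follows essentially the same approach as the paper: normalize $f$, introduce the truncated weight functions $h_{\query,r}$ whose sublevel sets are ring-complements of pseudo-dimension $O(d)$, apply \thmref{SAMPLING} with the parameters of \Eqref{parameter:setting} to control $\cardin{\Measure{g}-\sMeasureX{g}{\RSample}}$, and use the relative $(\prob/4,\eps/\constD)$-approximation property of $\RSample$ together with well-behavedness to bound the radius-mismatch error $\Error$. Even your constants ($\constD \geq 6\constC$, the $\eps/4$ intermediate bounds) and your identification of the $\Error$ block as the crux match the paper's argument.
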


The above theorem implies that one can get a $(1\pm\eps)$
multiplicative approximation to the function $F(\query)$, for all
possible query points, using $O(m)$ space. Furthermore, the above
theorem implies that any reasonable density estimation for a point-set
that has no big gaps, can be done using a sublinear sample size; that
is, a sample of size roughly $O( d n/k)$, which is (surprisingly)
polynomial in the dimension. This result is weaker than
\thmref{thm:appln:ddm}, as far as the family of functions it handle,
but it has the advantage of being of linear size (!) in the dimension.
This compares favorably with the recent result of \Merigot{}
\cite{m-lbfkd-13}, that shows an exponential lower bound
$\Omega\pth{1/\eps^{\hdim}}$ on the complexity of such an
approximation for a specific such distance function, when the
representation used is (essentially) additive weighted Voronoi diagram
(for $k=n/2$). More precisely, the function \Merigot{} studies has the
form of \Eqrefpage{estimate:sq}. However, as pointed out in
\secref{settings:density}, up to squaring the sample size, our result
holds also in this case.

\section{Conclusions}
\seclab{conclusions}

In this paper, we presented a data-structure for answering
$(1+\eps,k)$-\ANN queries in $\Re^d$ where $d$ is a constant. Our
data-structure has the surprising property that the space required is
$\Otilde(n/k)$. One can verify that up to noise this is the best one
can do for this problem. This data-structure also suggests a natural
way of compressing geometric data, such that the resulting sketch can
be used to answer meaningful proximity queries on the original
data. We then used this data-structure to answer various proximity
queries using roughly the same space and query time.  We also
presented a data-structure for answering $(1+\eps,k)$-\ANN queries
where both $k$ and $\eps$ are specified during query time. This
data-structure is simple and practical. Finally, we investigated what
type of density functions can be estimated reliably using random
sampling.

There are many interesting questions for further research.
\begin{compactenum}[\qquad (A)]
    \item In the vein of the authors recent work \cite{hk-annsl-11},
    one can verify that our results extends in a natural way
    to metrics of low doubling dimensions (\cite{hk-annsl-11}
    describes what an approximate Voronoi diagram is for doubling
    metrics). It also seems believable that the result would extend to
    the problem where the data is high dimensional but the queries
    arrive from a low dimensional manifold.
    
    \item It is natural to ask what one can do for this problem in
    high dimensional Euclidean space. In particular, can one get query
    time close to the one required for approximate nearest neighbor
    \cite{im-anntr-98, him-anntr-12}. Of particular interest is getting a query time
    that is sublinear in $k$ and $n$ while having subquadratic space
    and preprocessing time.
    
    \item The dependency on $\eps$ in our data-structures may
    not be optimal. One can probably get space/time tradeoffs, as done by
    Arya \etal \cite{amm-sttan-09}.
\end{compactenum}

\subsection*{Acknowledgments.}

The authors thank Pankaj Agarwal and Kasturi Varadarajan for useful
discussions on the problems studied in this paper.

\bibliographystyle{alpha}%
\bibliography{shortcuts,geometry}%

\end{document}